\definecolor{green}{rgb}{0,0.5977,0}
\newcommand{\zz}{\mathbb Z}
\newcommand{\abs}[1]{\left|{#1}\right|}
\newcommand{\suchthat}{\ | \ }
\newcommand{\tth}{^\text{th}}
\newcommand{\genseq}[3]{{#1}_1 #3 {#1}_2 #3 \dots #3 {#1}_{#2}}
\newcommand{\seq}[2]{\genseq{#1}{#2}{,}}
\newcommand{\txt}[1]{\text{#1}}
\newcommand{\stext}[1]{\ \ \ \ \ \text{(#1)}}
\newcommand{\push}{\\ & \ \ \ \ \ \ \ \ \ \ }
\g@addto@macro{\@algocf@init}{\SetKwInOut{Parameter}{Parameters}} 
\newcommand{\FOpar}{$\textup{FO}[\oplus]$\xspace}
\newcommand{\LFPpar}{$\textup{LFP}[\oplus]$\xspace}
\newcommand{\cmark}{{\color{green}\ding{51}}}
\newcommand{\xmark}{{\color{red}\ding{55}}}
\newcommand{\cmarkg}{{\color{green}\thead{$\sigma \geqC \tau$ or\\$\sigma$ not all-unary}}}
\newcommand{\xmarkg}{{\color{red}$\sigma \geqS \tau$}}
\newtheorem{theorem}{Theorem}
\newtheorem{corollary}[theorem]{Corollary}
\newtheorem{lemma}[theorem]{Lemma}
\newtheorem{observation}[theorem]{Observation}
\newtheorem{conjecture}[theorem]{Conjecture}
\theoremstyle{definition}
\newtheorem{definition}[theorem]{Definition}
\newcommand{\EA}{\textnormal{EA}}
\newcommand{\types}{\textnormal{types}}
\newcommand{\idx}{\textnormal{index}}
\def\N{\mathbb \zz_{\geq 1}}
\def\phi{\varphi}
\newcommand{\pattern}{\textsc{pattern}}
\newcommand{\struc}[1]{\textup{Str}[{#1}]}
\newcommand{\ostruc}[1]{\textup{Str}_{\leq}[{#1}]}
\newcommand{\negl}{\textnormal{negl}}
\newcommand{\geqC}{\succeq_L}
\newcommand{\geqS}{\succeq_S}
\newcommand{\lC}{\prec_L}
\newcommand\bu{\bar u}
\newcommand\bx{\bar x}
\newcommand\by{\bar y}
\newcommand\bz{\bar z}
\newcommand{\eqrel}[1]{\txt{eq}\left(#1\right)}
\title{Pseudorandom Finite Models}
\author[1]{Jan Dreier}
\author[2]{Jamie Tucker-Foltz}
\affil[1]{TU Wien. \texttt{dreier@ac.tuwien.ac.at}}
\affil[2]{Harvard University. \texttt{jtuckerfoltz@gmail.com}}
\begin{document}
\maketitle

\begin{abstract}
    We study pseudorandomness and pseudorandom generators from the perspective
    of logical definability. 
    Building on results from ordinary derandomization and
    finite model theory, we show that it is possible to deterministically
    construct, in polynomial time, graphs and relational structures that are statistically
    indistinguishable from random structures by any sentence of first order or
    least fixed point logics. 
    This raises the question of whether such
    constructions can be implemented via logical transductions from simpler
    structures with less entropy. In other words, can logical formulas be
    pseudorandom generators? We provide a complete classification of when this
    is possible for first order logic, fixed point logic, and fixed point logic
    with parity, and provide partial results and conjectures for first order
    logic with parity.
\end{abstract}
\medskip

\section{Introduction}\label{secIntro}

The worst-case tractability of decision problems has found an elegant analog in the realm of finite model theory, where hardness is measured not in terms of the computational resources required to solve a given problem, but the logical constructs needed to define it. From the pioneering work of Fagin \cite{FaginsTheorem} to the recent ``quest to capture polynomial time'' \cite{CapturingP}, a wide variety of complexity classes for decision problems have been described in terms of logics \cite{DCBook}. Recent work has begun to extend descriptive complexity into the realm of approximation as well \cite{DefinableInapproximabilityJournal, InapproximabilityUGInFPC}. However, very little has been said about \emph{average case} computational problems from a logical perspective.

A foundational conjecture of cryptography is the existence of cryptographically secure pseudorandom generators (PRGs): Deterministic polynomial-time algorithms that map random inputs to longer pseudorandom outputs in such a way
that no polynomial-time algorithm can distinguish the output from real randomness.
This may be thought of as a game between a \emph{generator} who builds the pseudorandom output
and an \emph{adversary} who with probability \(\frac12\) either sees the generator's output or a truly random string
and has to distinguish the two with probability significantly better than chance. 
The existence of PRGs is equivalent to the existence of one-way functions, which is a standard and widely-believed conjecture in cryptography (see Section \ref{subCryptoBackground}).
In this paper, we study the ability of logical formulas to act as generator or adversary.
The logics we consider are first order (FO) logic and its well-studied extensions with operators 
for computing least fixed points (LFP) and parities (\FOpar and \LFPpar).

Instead of considering random strings, we measure randomness via random graphs and relational structures. This allows us to build off existing results about these logics and study intriguing questions specific to \emph{unordered} structures, meaning those without a binary relation symbol that is always interpreted as a total order.
For varying relational signatures, we consider structures where each tuple appears in a relation independently with probability \(\frac12\).
For a single symmetric binary relation, this yields, for example, the classic Erd\"os-Renyi random graph $G(n, 1/2)$.
The expressive power of logics on such structures is well understood:
the \emph{zero-one laws} for FO \cite{Fagin01Law} and LFP \cite{LFP01Law}
state that, for any sentence \(\phi\),
as the size of the universe $n$ grows, the probability of satisfying the sentence
converges exponentially fast to either \emph{zero} or \emph{one}, depending on \(\phi\).
A variant of this phenomenon holds for \FOpar as well (by \cite{FOPlusParity}, reviewed in detail in Section \ref{secFOPlusParity}),
where on random graphs the probability of being satisfied converges to two limits (one for even $n$ and one for odd $n$).
On the other hand, \LFPpar can canonize random structures with high probability \cite{RandomBinaryCanonization}, implying (via the Immerman-Vardi theorem \cite{ImmermanVardi1, ImmermanVardi2}) that \LFPpar is as powerful as polynomial time computation when it comes to random inputs, so obviously no zero-one law can possibly hold.

In this paper, we study structures that are not random, but
\emph{pseudorandom}. Our formal definition of this concept, presented in
Section~\ref{secModel}, closely mirrors the standard definition from
private-key cryptography.\footnote{To the best of our knowledge, this paper is
the first to investigate cryptographic notions in the setting of logical definability. There is a literature on ``pseudorandom graphs'' but this typically refers to graphs satisfying useful statistical properties of graphs such as subgraph densities and spectral gap. These graphs are not generally pseudorandom in our context. See Krivelevich and Sudakov \cite{PseudorandomGraphs} for a survey.} 
The zero-one law suggests that FO and LFP should admit polynomial-time PRGs: 
To generate pseudorandomness that fools an adversary based on a formula \(\phi\), it suffices to construct a structure that accepts \(\phi\) if and only if the limiting probability is one rather than zero---and nearly all structures satisfy this property.
However, complexity theory is littered with examples of similar tasks (known as ``finding hay in a haystack'' \cite[Chapter 21]{AroraBarak}) that turn out to be quite difficult.

\subsection{Deterministic construction of existentially-closed graphs}

Our first main result, presented in \Cref{secRado}, is a deterministic, polynomial-time construction of graphs and other structures
that FO and LFP cannot distinguish from truly-random structures (\Cref{thm:RadoGenerator}).
We do so by building patterns described by the so-called \emph{extension axioms} (defined in \Cref{secRado}).
These patterns can be found with high probability in random structures and are responsible for forcing the zero-one convergence of FO and LFP sentences.

Over graphs, this can be understood as a finite analogue of an infinite construction: The \emph{Rado graph} is the unique (up to isomorphism) infinite graph satisfying all extension axioms.
Sampling from the infinite Erd\"os-Renyi graph $G(\infty, 1/2)$ yields with probability one the Rado graph.
There are several existing explicit constructions%
\footnote{For example, use the vertex set \(\N\) and connect \(x\) and \(y\) if \(x < y\) and the \(x\tth\) bit of the binary representation of \(y\) is one~\cite{RadoAckermann,Rado1964}.}
and their finite prefixes can be computed in polynomial time. However, these prefixes do \emph{not} satisfy the extension axioms, and hence do not provide pseudorandom structures even against FO adversaries.

It turns out to be a much more challenging task to deterministically construct \emph{finite} analogues of the Rado graph, which are also commonly referred to as \emph{existentially-closed} graphs. Explicit graphs satisfying the extension axioms have been discovered, but they are delicate constructions based on special properties of mathematical structures, such as Paley graphs \cite{PaleyExtensionAxioms}, binary matrices \cite{BlassRossmanQA}, and affine designs \cite{DesignTheory}; see Bonato \cite{ExtensionSurvey} for an overview of these constructions. None of them have obvious generalizations to more complicated finite models with multiple non-symmetric relations of arbitrary arities.\footnote{There is one exception that we are aware of: Even-Zohar, Farber, and Mead \cite{SimplicialComplexes} extend the Paley graph construction to work for hypergraphs using a clever application of Vandermonde determinants. However, it is still unclear how to extend this idea further to work for arbitrary relational signatures.} To this end, we develop a more robust, versatile framework using tools from the literature on derandomization---specifically \emph{perfect hash functions} and \emph{universal sets}. Our approach is quite different from its predecessors and easily handles graphs as well as arbitrary relational structures.

\subsection{Logic vs. logic}
Just as we restrict the adversary to being definable in a given logic,
we may ``level the playing field'' by also restricting the generator to a logic.
To define what it means to generate an output from an input via logic,
we consider the notion of \emph{transductions}~\cite{courcelle2012graph} (formally defined in \Cref{subTransductionsDef}). 
For the purpose of this paper, transductions are tuples of logical formulas that map all structures with a certain signature \(\sigma\)
to structures with the same universe but different signature \(\tau\).
For example, consider an input signature \(\sigma\) with a single binary relation and an output signature \(\tau\) with a single ternary relation \(T\).
Then a transduction \(\Theta\) from \(\sigma\) to \(\tau\) consists of a single \(\sigma\)-formula \(\theta(x,y,z)\) that associates with every input \(\sigma\)-structure \(\mathbb{A}\)
the \(\tau\)-structure \(\Theta(\mathbb A)\) where \(T(x,y,z)\) holds if and only if \(\theta(x,y,z)\) holds in \(\mathbb A\).
We say that $\Theta$ is a \emph{pseudorandom generator from \(\sigma\) to \(\tau\)} if applying $\Theta$ to a random \(\sigma\)-structure 
yields a pseudorandom \(\tau\)-structure.
Thus, in our example, \(\Theta\)~gets a binary \(\sigma\)-structure with \(n^2\) bits of entropy and has to generate a pseudorandom output 
(that the adversary cannot distinguish from a ternary \(\tau\)-structure) with \(n^3\) bits of entropy.
We consider this to be an appropriate model-theoretic analogue of PRGs.

Our second main result, proved in Section \ref{secPRGClassification}, is a complete classification of when such PRGs exist, 
depending on the logics we require the generator and adversary to be in, 
from all possible pairs among $\{\txt{FO}, \txt{LFP}, \txt{\LFPpar}\}$, as well as the input and output signatures.
Table \ref{tab23} lists the results for the special case where $\sigma$ has just one binary relation and $\tau$ has just one ternary relation.

\begin{table}\begin{center}
    \begin{tabular}{|c|c|c|c|}
        \hline
        \backslashbox{$\Theta$}{$\phi$} & FO     & LFP    & \LFPpar                                       \\
        \hline
        FO                              & \xmark & \xmark & \xmark                                        \\
        \hline
        LFP                             & \xmark & \xmark & \xmark                                        \\
        \hline
        \LFPpar                         & \cmark & \cmark & \thead{{\color{blue}$\iff$}{\color{blue}OWFs exist}} \\
        \hline
    \end{tabular}
    \caption{Existence/nonexistence of PRGs from a relational signature with one binary relation to a relational signature with one ternary relation, depending on the logics that the transduction $\Theta$ and adversary $\phi$ are allowed to be defined in. For the bottom-right entry, existence is equivalent to the standard conjecture in private-key cryptography that one-way functions exist (also equivalent to Conjecture \ref{cnjOWF}). A general classification for all possible pairs of signatures is given in Table \ref{tabgeneral}.}
    \label{tab23}
\end{center}\end{table}

Note that, in this model-theoretic setting, entropy comes in different ``shapes.''
Consider PRGs from structures with a single ternary relation to structures with three binary relations.
Ordinarily, it is trivial to reshape and truncate an input containing \(n^3\) independent random bits into an output with \(3 n^2\) bits,
but it is not immediately clear whether logics can accomplish this task as well. 
We come up with a measure of ``entropy'' that completely determines between which signatures pseudorandom generators exist.
Surprisingly, this notion depends on the pairs of logics under consideration, and does not always define a total order on the signatures.

\subsection{First order with parity}
It is notable that none of the logics we considered so far admit a PRG that is unconditionally secure against adversaries from the same logic: FO and LFP are too weak while \LFPpar is too strong.
To find an interesting middle ground, we turn to \FOpar, which does not come close to capturing polynomial time yet can still express interesting properties like ``there are an odd number of triangles''
which hold in random graphs with limiting probability~\(\frac12\).
We consider the \FOpar-transduction that creates a \(t\)-hypergraph from a normal graph
by adding each hyperedge \(\{v_1,\dots,v_t\}\) if and only if 
there are an odd number of vertices that are adjacent to each of \(v_1,\dots,v_t\).
We show in \Cref{secFOPlusParity} that this transduction is a pseudorandom generator that is secure against FO and LFP.
This is done by proving that the transduced hypergraph satisfies with high probability the corresponding hypergraph-extension axioms.
Hence, our transduction takes a random graph with \({n \choose 2}\) bits of entropy
and yields a hypergraph that an FO or LFP adversary 
cannot distinguish from a real \(t\)-hypergraph with \({n \choose t}\) bits of entropy.
We conjecture that this specific transduction is also secure against \FOpar,
and that therefore \FOpar admits a PRG that is secure against itself.

\section{Preliminaries}\label{secModel}

For any nonnegative integer $n$, let $[n] := \{1, 2, 3, \dots, n\}$. By $\log(n)$ we mean the logarithm with base 2. We say a function \(f(n)\) is \emph{negligible} if, for any polynomial function $p$, for all sufficiently large $n$, \(|f(n)| \leq 1/p(n)\). We write \(f(n) = \textnormal{negl}(n)\) to indicate this. For any finite set $S$, we denote sampling an element $s$ uniformly from $S$ by $s \sim S$, and sampling according to a distribution $\mathcal{D}$ over $S$ by $s \sim \mathcal{D}$. For any nonnegative integer $k$, ${S \choose k}$ is the set of all $k$-element subsets of $S$.

\subsection{Models and logics}\label{subModelsAndLogics}

In this paper we study two kinds of random structures.
First, \emph{relational structures}, which we define with respect to a \emph{relational signature} $\sigma = \langle \seq{R}{t} \rangle$, where each relation symbol $R_i$ has an associated \emph{arity} $a_i$. A \emph{$\sigma$-structure} $\mathbb{A}$ is a finite universe $A$ together with a sequence of relations $\langle R_1^\mathbb{A}, R_2^\mathbb{A}, \dots, R_t^\mathbb{A} \rangle$, where each $R_i^\mathbb{A}$ is an $a_i$-ary relation over $A$, that is, a subset of $A^{a_i}$. We identify \(\mathbb A\) with its universe \(A\) and thus write, for example, \(a \in \mathbb A\) instead of \(a \in A\).
We denote the set of all $\sigma$-structures by $\struc{\sigma}$ and the set of all $n$-element $\sigma$-structures by $\struc{\sigma, n}$.
Uniformly sampling from $\struc{\sigma, n}$ yields a \emph{random $\sigma$-structure}, where every ordered $a_i$-tuple of (possibly non-distinct) elements of the universe is in relation $R_i$ independently with probability~$\frac12$.
    
Second, \emph{$t$-hypergraphs} are relational structures consisting of a single symmetric, $t$-ary hyperedge relation where every \(t\)-tuple consists of \(t\) distinct elements. A $2$-hypergraph is simply a \emph{graph}. 
The Erd\"os-Renyi random graph $G(n, 1/2)$ is defined as the uniform distribution over all \(n\)-vertex graphs, that is, the distribution where each edge appears independently with probability \(1/2\).
Similarly, \(G_t(n,1/2)\) denotes the uniform distribution over all \(n\)-element \(t\)-hypergraphs, where each set of \(t\) distinct vertices forms a hyperedge independently with probability \(1/2\).

For any class of graphs or relational structures $\mathcal{P}$ and logic $L$, we say that $\mathcal{P}$ is \emph{definable} in $L$ if there is some sentence $\phi$ in $L$ such that $\mathcal{P}$ consists precisely of all graphs/structures that model $\phi$.
We assume the reader is familiar with first order logic (FO). We denote the extension of FO with a least fixed-point operator as LFP. It is not necessary for us to go into detail about LFP except to note the following property it has. The \emph{Immerman-Vardi theorem} states that, in the presence of a relation symbol that is interpreted as a total order over the universe, the expressive power of LFP is equivalent to polynomial time computation:
\begin{theorem}[Immerman \cite{ImmermanVardi1}, Vardi \cite{ImmermanVardi2}]\label{thmImmermanVardi}
    Let $\sigma$ be a signature containing a binary relation symbol $\leq$, and let $\mathcal{P}$ be a property of $\sigma$-structures such that, for any $\sigma$-structure $\mathbb{A}$ with property $\mathcal{P}$, $\leq^{\mathbb{A}}$ is a total order over the universe. Then $\mathcal{P}$ is decidable in polynomial time if and only if it is definable in \textnormal{LFP}.
\end{theorem}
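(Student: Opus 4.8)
The plan is to prove the two directions separately. The forward implication---LFP definability implies polynomial-time decidability---is routine and does not even use the order, while the converse is the substantive half where the order hypothesis becomes essential.

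For the easy direction, I would use the fact that a least fixed point is computed by iterating its associated monotone operator. Each fixed-point subformula $[\textbf{lfp}_{R,\bx}\,\varphi(R,\bx)]$ of arity $k$ defines a monotone map on $k$-ary relations over the $n$-element universe; starting from the empty relation and applying the map repeatedly yields an increasing chain of stages, so it stabilizes after at most $n^k$ iterations, since each non-final stage adds at least one of the at most $n^k$ possible tuples. A single iteration amounts to evaluating a first-order formula with the current relation as an extra symbol, which the naive model-checking algorithm performs in time polynomial in $n$ by looping over all assignments to the free variables. Nesting these evaluations over the finitely many fixed-point operators in the sentence still gives a polynomial-time procedure, so every LFP-definable property lies in polynomial time.

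For the hard direction, I would fix a Turing machine $M$ deciding $\mathcal P$ in time $n^c$ on the standard string encoding of an ordered $\sigma$-structure, and simulate its computation inside LFP. The order $\leq^{\mathbb A}$ plays two roles: it pins down a canonical bijection of the universe with $\{1,\dots,n\}$, so the encoding of $\mathbb A$ is well-defined and each of its bits is first-order definable from the relations of $\mathbb A$; and, once the lexicographic order and its successor on $k$-tuples are first-order defined from $\leq$, it lets me address up to $n^c$ time steps and tape positions by tuples of elements. I would then represent a configuration of $M$ at time $\bar t$ by relations such as $\mathrm{Tape}(\bar t,\bar p)$, $\mathrm{Head}(\bar t,\bar p)$, and $\mathrm{State}_q(\bar t)$, whose arguments range over these tuple-encoded numbers. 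The initial configuration is first-order definable from the input, and since $M$ is deterministic with a local transition, the configuration at time $\bar t{+}1$ is a first-order function of the configuration at time $\bar t$; a least fixed point inducting on the time parameter then builds the whole computation history stage by stage, and the final sentence asserts that an accepting state appears at some reachable time $\leq n^c$.

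The main obstacle is precisely this last construction. One must verify that the one-step transition of $M$ is genuinely first-order expressible over the tuple encoding, treating the head cell, the single changed cell, and all unchanged cells uniformly, and that these local updates can be threaded through an induction that is monotone in the time parameter. A clean way to organize it is to define the configuration relations by \emph{simultaneous} induction and then invoke the standard reduction of simultaneous LFP to ordinary LFP. The reason the hypothesis cannot be dropped is exactly that, without a definable linear order, there is neither a canonical string to hand to $M$ nor any means of counting up to $n^c$, which is why the order symbol $\leq$ must be present.
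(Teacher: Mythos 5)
The paper does not prove this statement at all: it is the classical Immerman--Vardi theorem, imported by citation, and the paper only uses it as a black box (to translate polynomial-time post-processing into LFP once an order has been defined). Your sketch is the standard textbook proof of that classical result and is essentially correct in both directions: the easy direction via stage-wise evaluation of fixed points (at most $n^k$ iterations of a polynomial-time FO evaluation, order not needed), and the hard direction via simulating a clocked Turing machine, using $\leq$ to fix a canonical string encoding and lexicographically ordered $k$-tuples as time/tape indices, with the computation history built by a simultaneous induction reducible to a single LFP. One small patch is needed to match the exact statement as given here: the hypothesis only guarantees that structures \emph{in} $\mathcal{P}$ interpret $\leq$ as a total order, so on other structures your simulation formulas are meaningless and could accidentally accept; the defining sentence must therefore be the conjunction of the (first-order expressible) axiom ``$\leq$ is a total order'' with the simulation sentence, so that all non-ordered structures are correctly rejected. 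A second, cosmetic caveat: on very small universes $n^c$ may undercount the running time, which is handled by hard-coding finitely many small cases or padding the clock; with these two adjustments your argument is complete and agrees with the proofs of Immerman and Vardi.
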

It is often easier to think of operations on finite structures as algorithms, with the implicit understanding that properties computed by the algorithm can be expressed in a given logic. In this paper, we frequently do not go into detail about how such a translation works. The Immerman-Vardi Theorem assures us that, as long as we have defined a total order over the universe somehow, we can
superimpose this order onto the structure and apply \Cref{thmImmermanVardi} to this ``implicit'' structure.
This lets us successfully translate any steps of the algorithm that are polynomial-time into LFP.

\subsection{Transductions}\label{subTransductionsDef}

A \emph{transduction} is a way of defining one model from another, possibly of a different signature. In the context of relational structures, we define a transduction as follows. Let $\sigma$ and $\tau$ be signatures, where $\tau = \langle \seq{R}{t} \rangle$, and $R_i$ has arity $a_i$. Then a \emph{transduction from $\sigma$ to $\tau$} (also known as an \emph{interpretation of $\tau$ in $\sigma$}) is a sequence of \mbox{$\sigma$-formulas} $\Theta = (\seq{\theta}{t})$, where each $\theta_i$ has $a_i$ free variables $\seq{x}{a_i}$. Overloading notation, we think of $\Theta: \struc{\sigma} \to \struc{\tau}$ as a function mapping \mbox{$\sigma$-structures} to \mbox{$\tau$-structures}, where, given any $\sigma$-structure $\mathbb{A}$, we define $\Theta(\mathbb{A})$ to be the $\tau$-structure over the same universe as that of $\mathbb{A}$, where relation $R_i^{\Theta(\mathbb{A})}$ holds on a tuple $(\seq{x}{a_i})$ if and only if $\mathbb{A} \models \theta_i(\seq{x}{a_i})$. In the contexts of graphs and hypergraphs, we denote by \(\theta(G)\) the transduction where a single relation based on formula \(\theta\) is created.

\subsection{The parity operator}\label{subParityOpDef}

We may augment any logic $L$ with a \emph{parity} operator to form the new logic $L[\oplus]$. The expression \(\oplus y~\phi\) asks whether there is an odd number of elements \(y\) in the universe satisfying the subformula \(\phi\).

This operator adds considerable power to LFP on random structures. In particular, it makes it possible to define a total order over the universe with probability $1 - \negl(n)$ \cite{RandomBinaryCanonization}. In Appendix \ref{appRandomCanonizationProof}, we prove the following variant of this result, which says that we can almost always use a relation of arity $k \geq 2$ to simultaneously order the universe and extract $\Omega(n^k)$ independent random bits.
The Immerman-Vardi theorem lets us then apply polynomial-time computations to these bits.

\begin{restatable}{lemma}{lemRandomCanonization}\label{lemRandomCanonization}
    Let $\sigma$ be a relational signature containing a relation symbol $R$ of arity $k \geq 2$, and let $\tau$ be a relational signature containing symbols ``\(\le\)'' of arity 2 and $Q$ of arity~$k$. There is an \LFPpar transduction $\Gamma: \struc{\sigma} \to \struc{\tau}$ such that, with probability $1 - \negl(n)$ over a random $\sigma$-structure $\mathbb{A}$,
    \begin{enumerate}[label={(\roman*)}]
        \item\label{itmRandomCanonizationOrder} $\leq^{\Gamma(\mathbb{A})}$ defines a total order over the universe, and
        \item\label{itmRandomCanonizationLargeExtraction} $\abs{Q^{\Gamma(\mathbb{A})}} = \Omega(n^k)$.
    \end{enumerate}
    Furthermore, it is always the case that, for $k$-tuples $(\seq{x}{k}) \in Q^{\Gamma(\mathbb{A})}$, even after conditioning on any realization of $\leq^{\Gamma(\mathbb{A})}$ and $Q^{\Gamma(\mathbb{A})}$, the probabilities that $(\seq{x}{k}) \in R^{\mathbb{A}}$ are all $\frac12$ and pairwise independent across $(\seq{x}{k}) \in Q^{\Gamma(\mathbb{A})}$.
\end{restatable}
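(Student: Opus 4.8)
The plan is to define the transduction $\Gamma$ so that it reads only a carefully chosen \emph{structural} sub-collection of the bits of $R$ in order to produce both the order $\le$ and the set $Q$, while reserving a disjoint \emph{payload} collection of $\Omega(n^k)$ tuples to populate $Q$. Once the construction has this shape, the final ``furthermore'' clause is essentially free: if $\le^{\Gamma(\mathbb A)}$ and $Q^{\Gamma(\mathbb A)}$ are both deterministic functions of the structural bits, and every tuple placed in $Q$ is a payload tuple (so its bit is never inspected), then conditioning on any realization of $(\le, Q)$ leaves the bits $\{R(x_1,\dots,x_k) : (x_1,\dots,x_k) \in Q\}$ jointly uniform and \emph{fully} independent---in particular unbiased and pairwise independent---and this holds for every $\mathbb A$, not merely with high probability, exactly as required. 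The remaining content is therefore (a) exhibiting such a split that is definable without looking at the payload bits, and (b) checking that the structural bits still contain enough randomness to define a total order with probability $1 - \negl(n)$.

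For (b) I would reduce to the known fact that \LFPpar can order a uniformly random binary relation with probability $1 - \negl(n)$ \cite{RandomBinaryCanonization}. Concretely, write $D(s)$ for the ``all-equal'' bit $R(s,s,\dots,s)$ and let $A_1 := \{s : D(s) = 1\}$; by a Chernoff bound $\abs{A_1} = \Omega(n)$ with probability $1 - \negl(n)$. Define a bipartite ``fingerprint matrix'' $M$ with rows indexed by the whole universe and columns indexed by $A_1$, where the $(v,s)$ entry is $R(v, s, s, \dots, s)$. For distinct index pairs the underlying tuples are distinct, so the entries of $M$ are independent unbiased bits (the at most one entry per column where $v = s$ is the planted value $1$ and can simply be ignored). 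Because $\abs{A_1} = \omega(\log n)$, a union bound over the $\binom n2$ pairs of rows shows that all rows of $M$ are distinct with probability $1 - \negl(n)$, and the canonization result applied to this random matrix yields, with probability $1 - \negl(n)$, a total order $\le$ on the rows, i.e.\ on the whole universe (item \ref{itmRandomCanonizationOrder}). All of this reads only the structural tuples $T_{\mathrm{str}} := \{(s,s,\dots,s) : s \in \mathbb A\} \cup \{(v,s,\dots,s) : s \in A_1\}$.

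For (a), I then declare the payload tuples to be all remaining $k$-tuples and let $Q$ consist of exactly these (a set definable from $A_1$, hence from the structural bits alone). Since $\abs{T_{\mathrm{str}}} \le n\abs{A_1} + n \le n^2 + n$, the payload has size $n^k - O(n^2) = \Omega(n^k)$ for every $k \ge 2$, giving item \ref{itmRandomCanonizationLargeExtraction}; and $T_{\mathrm{str}}$ is disjoint from $Q$ by construction, so the conditional-independence argument above applies verbatim. The construction is uniform in $k$, and it is worth noting that $k = 2$ is the tight case: for $k \ge 3$ one could instead take the even simpler structural relation $R(x,y,x,\dots,x)$ and let $Q$ be the \emph{fixed} set of all distinct-coordinate tuples, but for $k = 2$ there is no nontrivial split of the off-diagonal pairs that is definable before any order exists, which is precisely why the diagonal bits $D(s)$ are used to bootstrap the reference set $A_1$.

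The main obstacle, accordingly, is part (a) in the case $k = 2$: any set of pairs definable without an order is invariant under all permutations of the universe and hence can only be a Boolean combination of the diagonal and the off-diagonal, so one cannot simply carve off half of the $n^2$ bits in advance. The diagonal-bootstrap resolves this by spending the $n$ diagonal bits to single out a linear-sized reference set $A_1$ and then canonizing via the random bipartite relation between the universe and $A_1$; the secondary thing to verify is that the cited canonization result (or its proof) indeed applies to such a random $0/1$ matrix, where one must order rows and columns simultaneously---this is exactly the binary-matrix form of the result, and the distinctness-of-rows estimate above is what drives it. Everything else (defining $A_1$, $M$, and $Q$, and translating the canonization into \LFPpar) is routine given the Immerman--Vardi theorem and the power of the parity operator.
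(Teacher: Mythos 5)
Your overall architecture is the same as the paper's: both proofs use the diagonal bits $R(x,\dots,x)$ to bootstrap a definable split of the universe, spend a ``structural'' set of tuples on canonization, reserve a disjoint payload of $\Omega(n^k)$ tuples for $Q$, and then get the ``furthermore'' clause for free because $\le$ and $Q$ are measurable functions of bits disjoint from the payload --- your conditional-independence argument is exactly the paper's closing observation. Where you diverge is the canonization gadget. The paper defines $S := \{x : R(x,\dots,x)\}$ and, on $S$ and on $\overline{S}$ \emph{separately}, builds a graph whose edge $\{x,y\}$ is the XOR of $R(x,y,\dots,y)$ and $R(y,x,\dots,x)$; this symmetrization produces two honestly uniform Erd\H{o}s--R\'enyi graphs, to which the cited \LFPpar implementation of Karp's algorithm \cite{RandomBinaryCanonization} applies verbatim, and the two orders are concatenated. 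It then takes $Q$ to be the cross tuples $(x,y_1,\dots,y_{k-1})$ with $x \in S$, $y_i \in \overline{S}$, which are never inspected.

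The gap in your version is precisely the step you flag and then wave through: applying ``the canonization result'' to your universe-by-$A_1$ fingerprint matrix $M$. That matrix, viewed as a binary relation on the universe, is \emph{not} a uniformly random binary structure --- it vanishes off $U \times A_1$, has a planted all-ones diagonal on $A_1$, and $A_1$ is itself a definable random subset --- so the cited theorem does not apply as a black box, and there is no off-the-shelf ``binary-matrix form'' in the reference. Moreover, your row-distinctness estimate does not by itself drive anything: turning distinct row fingerprints into an \LFPpar-definable order presupposes an order on the columns, and the columns are elements of the same universe, so you are back in the circularity that the refinement machinery inside Karp's algorithm exists to break; that machinery would have to be re-derived for the bipartite planted setting. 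The cleanest repair is the paper's XOR trick, which manufactures inputs the citation literally covers. Separately, a minor arithmetic slip: for $k = 2$ your bound $n^k - O(n^2) = \Omega(n^k)$ is vacuous given only $\abs{A_1} \le n$; you need two-sided concentration, e.g.\ $\abs{A_1} \le \frac{2n}{3}$ with probability $1 - \negl(n)$ by Chernoff, so that the payload has size at least $\frac{n^2}{3} - n = \Omega(n^2)$.
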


\subsection{Ordinary pseudorandomness and cryptography}\label{subCryptoBackground}

We refer the reader to the textbook by Boneh and Shoup \cite{BonehShoup} for a comprehensive discussion of the cryptographic notions used in this paper. Here we only provide a brief overview.

A \emph{pseudorandom generator} is a collection of functions $f_n: \{0, 1\}^n \to \{0, 1\}^{\ell(n)}$ such that, for any collection of polynomial-sized circuits $A_n: \{0, 1\}^{\ell(n)} \to \{0, 1\}$,
\begin{equation*}
    \abs{\Pr_{s \sim \{0, 1\}^n}[A_n(f_n(s)) = 1] - \Pr_{t \sim \{0, 1\}^{\ell(n)}}[A_n(t) = 1]} = \negl(n).
\end{equation*}
We may alternatively think of the collection of functions $A_n$ as a single
Turing machine with access to ``advice'' bits (storing for each \(n\) a circuit)
that runs in polynomial time (to simulate the circuit). The advice bits may depend on $n$ in an arbitrary (not necessarily computable) way.

It is believed that there exist pseudorandom generators that extend \(n\) random bits to \(\ell(n) > n\) pseudorandom bits.
\begin{conjecture}\label{cnjOWF}
    There exists a polynomial-time pseudorandom generator $f_n: \{0, 1\}^n \to \{0, 1\}^{n + 1}$.
\end{conjecture}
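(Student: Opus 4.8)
The plan is to exhibit the generator explicitly rather than to argue about its existence abstractly, and the route that most directly targets a one-bit stretch is the hard-core-predicate construction. First I would posit a one-way permutation $f \colon \{0,1\}^n \to \{0,1\}^n$; starting instead from an arbitrary one-way function is possible but requires the substantially more intricate H{\aa}stad--Impagliazzo--Levin--Luby machinery, so I take the cleaner permutation case for the sketch. The generator then operates on a doubled seed $(s,r) \in \{0,1\}^n \times \{0,1\}^n$ and is defined by
\[
    g_{2n}(s,r) = \bigl(f(s),\, r,\, \langle s, r \rangle\bigr),
\]
where $\langle s,r\rangle = \bigoplus_{i=1}^n s_i r_i$ is the mod-$2$ inner product. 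Because $f$ is a permutation, the prefix $(f(s),r)$ is uniform on $\{0,1\}^{2n}$, so $g_{2n}$ sends $2n$ bits to $2n+1$ bits, which is exactly the required one-bit stretch (with the length parameter renamed from $n$ to $2n$, a harmless restriction to even input lengths).

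The second step is to certify pseudorandomness by reducing any distinguisher to a predictor for the appended bit. By the Goldreich--Levin theorem, if $f$ is one-way then $\langle s,r\rangle$ cannot be guessed from $(f(s),r)$ with advantage better than $\negl(n)$, for otherwise one could invert $f$. A polynomial-size circuit $A_{2n}$ distinguishing $g_{2n}(s,r)$ from uniform would, via Yao's next-bit/predictor equivalence, yield a prediction of $\langle s,r\rangle$ from $(f(s),r)$ with non-negligible advantage, contradicting the one-wayness of $f$. This supplies the $\negl$ bound demanded by the definition of a pseudorandom generator and completes the construction, conditional on the hardness of $f$.

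The main obstacle is the first step, and it is insurmountable by present means: there is no unconditional way to produce a function that is provably one-way. This is not an artifact of the particular construction. A one-bit-stretch generator is in particular hard to invert, so its unconditional existence would already establish the existence of one-way functions, and hence $\PP \neq \NP$. Every route to the statement as worded must therefore, at some point, invoke an unproven hardness assumption, and an unconditional proof would resolve one of the central open problems of complexity theory. This is precisely why the statement is recorded as a conjecture: the construction above shows it follows once a one-way function is granted, but that premise is itself open.
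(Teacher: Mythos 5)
The statement you were asked to prove is recorded in the paper as a \emph{conjecture}, and the paper supplies no proof of it: it only notes (in Section~\ref{subCryptoBackground}) that the statement is equivalent to the existence of one-way functions, with the hard direction of that equivalence being the H\r{a}stad--Impagliazzo--Levin--Luby construction of a PRG from an arbitrary OWF, and that it implies $\PP \neq \NP$. Your assessment is therefore exactly right, and your honesty about the insurmountable first step is the correct conclusion rather than a defect. Your conditional construction is also sound as sketched: for a one-way permutation $f$, the map $(s,r) \mapsto (f(s), r, \langle s,r\rangle)$ stretches $2n$ bits to $2n+1$, the prefix $(f(s),r)$ is uniform because $f$ is a permutation, and a distinguisher converts via Yao's next-bit equivalence into a predictor for the Goldreich--Levin hard-core bit, contradicting one-wayness; this argument works against the paper's nonuniform circuit adversaries since Goldreich--Levin holds in that model. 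Two points of comparison with the paper's framing are worth noting. First, your route assumes a one-way \emph{permutation}, a strictly stronger primitive than the paper's premise; the equivalence the paper cites goes through from an arbitrary one-way function precisely because of the HILL machinery you set aside, so your construction proves a weaker implication than the one the paper relies on (though it is the standard textbook shortcut and perfectly adequate for exhibiting why the conjecture is plausible). Second, your converse observation---that a one-bit-stretch PRG is itself essentially one-way---deserves its half-line of justification: a uniformly random $(n+1)$-bit string lies outside the image of $f_n$ with probability at least $\tfrac12$, so an efficient inverter (run on the candidate string, checking whether the recovered preimage maps to it) would distinguish $f_n(s)$ from uniform with non-negligible advantage; with that filled in, your argument correctly establishes the equivalence with OWFs that justifies the paper's name ``OWF Conjecture'' for the statement.
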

This is known to be equivalent to a list of other conjectures in cryptography, including:
\begin{itemize}
    \item The existence of one-way functions (OWFs), which are functions that are easy to compute but hard to invert given the output of a random input.
    \item The existence of a computationally secure private-key encryption scheme between two parties with a short pre-distributed key.
    \item Even stronger security guarantees against chosen-plaintext or chosen-ciphertext attacks, in which we assume the adversary has query access to encrypt/decrypt other messages.
\end{itemize}
The hardest part of the equivalence is due to the celebrated result of H\r{a}stad, Impagliazzo, Levin, and Luby \cite{OwfToPrg}, which shows how to construct a PRG from an arbitrary OWF. Since OWFs are more fundamental, simple objects, Conjecture \ref{cnjOWF} is often referred to as the \emph{OWF Conjecture}. The conjecture implies P~\(\neq\)~NP.

Note that, in the definitions of each of these cryptographic objects, we must always allow the adversary to be nonuniform, otherwise some of the reductions do not go through properly. For example, one classic reduction, known as the \emph{length-extension theorem},\footnote{See Boneh and Shoup \cite[Theorem 3.3]{BonehShoup}.} takes a PRG $f_n: \{0, 1\}^n \to \{0, 1\}^{n + 1}$ and iterates it a polynomial $p(n)$ number of times to produce a PRG $g_n: \{0, 1\}^n \to \{0, 1\}^{p(n)}$. After each iteration we output a single bit and feed the remaining $n$ bits back into $f_n$ for the next iteration, until we have output $p(n)$ bits. The proof of security uses the \emph{hybrid argument}, which considers distributions $H_0, H_1, H_2, \dots, H_{p(n)}$ over $p(n)$-bit strings where, in $H_i$, the first $i$ bits are chosen uniformly at random, while the remaining bits are generated from iterating $f_n$ as in the definition of $g_n$. It suffices to prove that, for each $i$, no adversary can distinguish strings drawn from $H_i$ from those drawn from $H_{i + 1}$ with non-negligible advantage; from this it follows that no adversary can distinguish $H_0$ from $H_{p(n)}$, which means $g_n$ is secure. This is true because an adversary that distinguishes $H_i$ from $H_{i + 1}$ can be transformed into an adversary that breaks the security of $f_n$. However, this algorithm requires randomness, and the only known way to derandomize it is to encode a good draw of the random bits as nonuniform advice.

In the proof of one of the cases of Theorem \ref{thmFPCPRG}, we require a novel adaptation of the length-extension construction that can be implemented as an \LFPpar-transduction.

\subsection{Model-theoretic pseudorandomness}

Consider a collection of probability distributions \((\mathcal{D}_n)_{n \in \N}\), where \(\mathcal{D}_n\) ranges over structures \(\struc{\tau,n}\).
Now, we may say that this distribution is uniformly pseudorandom for a logic \(L\) if, for every $\tau$-sentence \(\phi \in L\),
\[
    \abs{\Pr_{\mathbb A \sim \mathcal{D}_n}[\mathbb A \models \phi] - \Pr_{\mathbb A \sim \struc{\tau,n}}[\mathbb A \models \phi]} = \negl(n).
\]
While in the ordinary notion of PRGs, the adversary has access to a polynomial number of non-uniform advice bits,
this model-theoretic definition is completely uniform and makes no mention of advice.
It may be seen as a strength of this model that almost all of our results hold without the presence of advice (specifically, all results except for Theorem \ref{thmFPCPRG} part \ref{itmFPCPRGConditionallyExists}). However, to link our results to the foundational conjectures of cryptography, we will also give our logical adversary access to advice.
This is done via disjoint polynomial-size advice structures \((\mathbb X_n)_{n \in \N}\) that may depend non-uniformly on~\(n\).

For a \(\tau\)-structure \(\mathbb A\) and \(\rho\)-structure \(\mathbb X_n\),
we denote by \((\mathbb A, \mathbb X_n)\) the \((\tau \cup \rho)\)-structure obtained by the disjoint union%
\footnote{The universe of \((\mathbb A, \mathbb X_n)\) is the disjoint union of the universes of \(\mathbb A\) and \(\mathbb X_n\), and each relation
from \(\tau \cup\rho\) is interpreted as the disjoint union of the relations of the two structures, where missing relations are considered empty.}
of the two structures.
A distribution \((\mathcal{D}_n)_{n \in \N}\) is \emph{pseudorandom for a logic \(L\)} if, for every advice-signature \(\rho\), 
every sequence of \mbox{\(\rho\)-structures} \((\mathbb X_n)_{n \in \N}\), where \(\mathbb X_n\) has a universe of size \(n\),
and every $(\tau \cup \rho)$-sentence \(\phi \in L\),
\[
    \abs{\Pr_{\mathbb A \sim \mathcal{D}_n}[(\mathbb A,\mathbb X_n) \models \phi] - \Pr_{\mathbb A \sim \struc{\tau ,n}}[(\mathbb A, \mathbb X_n) \models \phi]} = \negl(n).
\]

Finally, for any pair of logics $L_1$ and $L_2$, an \emph{$(L_1, L_2)$-pseudo\-random generator from $\sigma$ to $\tau$} is an $L_1$ transduction $\Theta: \struc{\sigma} \to \struc{\tau}$ such that the family of distributions obtained by sampling $\mathbb{C}_n \sim \struc{\sigma, n}$ and outputting $\Theta(\mathbb{C}_n)$ is pseudorandom for $L_2$.

\section{Pseudorandom Structures for LFP \\ and Extension Axioms}\label{secRado}

A central property for creating pseudorandom structures with respect to FO or LFP adversary are the so-called \emph{extension axioms}.
For now, we define them for graphs and give the more involved definition for relational structures later.

\begin{definition}
    A graph \(G\) satisfies the \emph{\(k\)-extension axioms \(\EA_k\)} if, for all sets \(S \subseteq V(G)\) of size \(k\) and all
    \(T \subseteq S\), there exists \(v \not \in S\) (called an extension vertex of \((S,T)\)) that is adjacent to every
    vertex in \(T\) and non-adjacent to every vertex in \(S \setminus T\).
\end{definition}

Informally speaking,
the \(k\)-extension axioms
state that any set of \(k\) pebbles in the Ehrenfeucht–Fra\"iss\'e game for (finitary or infinitary) first order logic can be extended in all possible ways by an additional pebble.\footnote{While not necessary for comprehending this paper, we refer the reader to Libkin \cite{FMT} for more background on pebbling games.}
Thus, if two structures satisfy the \(k\)-extension axioms, then Duplicator always has a way to respond regardless of where the pebbles are.
In other words, for any given LFP sentence $\phi$, there exists $k$ such that, for any pair of graphs \(G_1,G_2\) both satisfying the \(k\)-extension axioms \(\EA_k\), we have that \(G_1 \models \phi \Leftrightarrow G_2 \models \phi\).
While this is a standard argument, for completeness, we prove this statement (even in the presence of our additional advice structure) later in~\Cref{lem:EA-determine-qtype}.

An important observation is that a random graph satisfies the extension axioms with high probability.
As we show in \Cref{lem:EA-probs}, for every positive integer \(k\),
    \[
        \Pr_{G \sim G(n, 1/2)}\bigl[ G \not \models \EA_k \bigr] = \negl(n).
    \]

The zero-one laws for FO \cite{Fagin01Law} and LFP \cite{LFP01Law} then follow immediately:
any given sentence with quantifier rank \(k\) either holds in all or no graphs satisfying the \(k\)-extension axioms, and thus on almost all or almost no graphs.
In particular, the infinite random graph \(G(\infty,1/2)\), also called the Rado graph, satisfies the \(k\)-extension axioms for all \(k \in \N\) with probability 1.

In this section, we provide a polynomial-time deterministic construction of \(n\)-element graphs and relational structures
satisfying the \(k\)-extension axioms for \(k = \log(\log(\Theta(n)))\), implying that FO and LFP adversaries cannot distinguish them from truly-random graphs and structures.

\paragraph*{Universal sets and perfect hash functions}
The following definitions provide a natural way to ``extend'' sets of size \(k\) from a linear sized set using only a logarithmic number of elements.
\begin{definition}
    Let \(\mathcal{F}\) be a family of functions from $[n]$ to $[k]$. We say $\mathcal{F}$ is an \emph{$(n,k)$-family of perfect hash functions}
    if for every set $S \subseteq [n]$ of size $k$,
    there exists \(f \in \mathcal{F}\) such that \(\{ f(s) \mid s \in S \} = [k]\).
    A family $\cal U$ of subsets of $[n]$ is an \emph{$(n,k)$-universal set}
    if for every subset $S \subseteq [n]$ of size $k$,
    the family $\{S \cap U \colon U \in \cal U\}$ contains all $2^k$ subsets of~$S$.
\end{definition}

Perfect hash functions and universal sets of small size exist and can be constructed efficiently.
We use the following classic result.

\begin{theorem}[Naor, Shulman, and Srinivasan \cite{naor1995splitters}]\label{thm:hashfunctions}
    %\begin{theorem}[{\cite[Theorem 3 (iii)]{naor1995splitters}}]\label{thm:hashfunctions}
    There exists a constant $c$ such that, for $n,k \ge 1$, one can construct an $(n,k)$-family of perfect hash functions
    and an $(n,k)$-universal set
    of size
    $2^{ck} \log(n)$ in time $2^{ck} n \log(n)$.
\end{theorem}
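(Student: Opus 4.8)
The plan is to reduce the construction of universal sets to that of perfect hash functions, to pin down the size bound by the probabilistic method, and then to recover an efficient \emph{deterministic} construction through a two-stage ``domain reduction'' scheme. First I would observe that an $(n,k)$-universal set can be built from any $(n,k)$-family of perfect hash functions $\mathcal F$ at the cost of only a factor of $2^k$ in the size. Given $\mathcal F$, form the family
\[
  \mathcal U = \bigl\{\, f^{-1}(B) \;:\; f \in \mathcal F,\ B \subseteq [k] \,\bigr\}.
\]
To see that $\mathcal U$ is universal, fix a $k$-set $S \subseteq [n]$ and a target $T \subseteq S$, choose $f \in \mathcal F$ mapping $S$ bijectively onto $[k]$, and set $B = f(T)$; then $S \cap f^{-1}(B) = T$ since $f$ is injective on $S$. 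As $|\mathcal U| = 2^k |\mathcal F|$, it suffices to build perfect hash families of size $2^{O(k)}\log n$ in time $2^{O(k)} n \log n$, the $2^k$ blow-up being absorbed into the constant $c$.

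For the existence of such families I would use the probabilistic method. A uniformly random $f : [n] \to [k]$ is a bijection on a fixed $k$-set with probability $k!/k^k$, which is at least $2^{-c_0 k}$ for an absolute constant $c_0$; hence $m$ independent random functions all fail to be bijective on $S$ with probability at most $\exp\!\bigl(-m\,2^{-c_0 k}\bigr)$. A union bound over the at most $n^k = 2^{k \log n}$ many $k$-sets shows that $m = 2^{O(k)}\log n$ functions suffice with positive probability. This settles the size bound but not the running time, as there is no obvious way to find such a family faster than searching an exponentially large space.

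The efficient construction is the heart of the matter, and I would obtain it by first collapsing the universe to size polynomial in $k$. Concretely, I would (i) construct an explicit family $\mathcal G$ of maps $[n] \to [k^2]$ that is injective on every $k$-set---a range of size $k^2$ makes the birthday bound favorable, so a family of size $\mathrm{poly}(k)\log n$ exists, and the point is to make it \emph{explicit}---and then (ii) on the reduced domain $[k^2]$, now independent of $n$, derandomize the probabilistic construction above by the method of conditional expectations, producing a perfect family $\mathcal H$ from $[k^2]$ to $[k]$ of size $2^{O(k)}\log k$. Step (ii) is affordable because verifying that a partial family covers all $k$-sets of a $\mathrm{poly}(k)$-element domain costs only $2^{O(k)}$. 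Composing, $\{\,h \circ g : g \in \mathcal G,\ h \in \mathcal H\,\}$ is an $(n,k)$-perfect hash family of size $\mathrm{poly}(k)\cdot 2^{O(k)} \log k \cdot \log n = 2^{O(k)}\log n$, and a single pass over $[n]$ for each pair gives the claimed $2^{O(k)} n \log n$ running time.

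I expect the main obstacle to be step (i): producing a genuinely explicit, small family that reduces the domain to $\mathrm{poly}(k)$ while preserving injectivity on all $k$-sets. A naive $k$-wise independent family has the right statistical behavior but is described by $\Omega(k\log n)$ seed bits, so brute-forcing the seed is too slow, while a plain union bound gives existence without efficiency. This is exactly where the splitter machinery of Naor, Shulman, and Srinivasan \cite{naor1995splitters} is needed: they build splitters for small ranges and compose them recursively, simultaneously controlling the family size and the construction time. I would therefore treat this splitter construction as the one genuinely nontrivial ingredient and invoke it as the engine behind step (i), with everything else following from the reduction and bookkeeping above.
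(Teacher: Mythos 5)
First, a point of reference: the paper offers no proof of this statement---it is imported verbatim from Naor, Schulman, and Srinivasan \cite{naor1995splitters}---so your attempt can only be judged against the actual NSS construction. Your two preliminary reductions are correct: deriving an $(n,k)$-universal set from a perfect hash family via $\mathcal U = \{f^{-1}(B) : f \in \mathcal F,\ B \subseteq [k]\}$ (a clean factor-$2^k$ blow-up, absorbed into $c$) is right, and so is the probabilistic existence bound of $2^{O(k)}\log n$ via $k!/k^k \geq 2^{-c_0 k}$ and a union bound over $n^k$ sets.

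The genuine gap is in the time analysis of your step (ii), and it is not a bookkeeping slip. The number of $k$-subsets of $[k^2]$ is $\binom{k^2}{k} \geq (k^2/k)^k = k^k = 2^{k\log k}$, so your claim that ``verifying that a partial family covers all $k$-sets of a $\mathrm{poly}(k)$-element domain costs only $2^{O(k)}$'' is false: even one verification pass costs $2^{\Theta(k\log k)}$, and the method of conditional expectations inherits this cost at every greedy step. The theorem promises time $2^{ck}\, n \log n$ for a \emph{fixed} constant $c$ and all $n, k \geq 1$; an additive $2^{\Theta(k\log k)}$ term violates this (take $n = \mathrm{poly}(k)$), and unlike the $k^{O(\log k)} = 2^{O(\log^2 k)}$ factors in NSS's actual bounds, it cannot be absorbed into $2^{ck}$. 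This is precisely the obstacle that the splitter recursion exists to overcome: NSS never run exhaustive search against $k$-sets directly, but first split a $k$-set into blocks of size roughly $\log k$ and brute-force small perfect families only on those exponentially smaller subproblems. Relatedly, you locate the ``one genuinely nontrivial ingredient'' in step (i), but the injectivity-preserving domain reduction to $\mathrm{poly}(k)$ is the comparatively routine FKS-style modular-hashing part; the crux is exactly the small-domain search that you assigned to plain derandomization. Finally, since you invoke the splitter machinery of \cite{naor1995splitters} as a black box for the step you identify as hard, the attempt is circular as an independent proof of the theorem: what it establishes on its own is the universal-set reduction and the existence (but not efficiency) statement.
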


\paragraph*{Tournaments}
A \emph{tournament} is an edge orientation of a complete undirected graph.
This means it is a directed graph that contains for every pair of vertices $a,b$ either
the directed edge from $a$ to $b$ or from $b$ to $a$ but not both.
By brute-force, we find a tournament graph of size \(\log(\Theta(n))\) that reserves for each set of size \(k\) 
a vertex that points towards all elements of the set, and use it to assign the extension vertices, as shown in Figure \ref{fig:rado-construction}.

\begin{lemma}\label{lem:tournament}
    Let $k$ be a nonnegative integer.
    There exists a tournament $F$ of size $2^{3k}$
    such that, for every $S \subseteq V(F)$ of size $k$, there is a vertex
    that has a directed edge towards every vertex in $S$.
\end{lemma}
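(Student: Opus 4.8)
The plan is to prove this by the probabilistic method on a uniformly random tournament, following the classical argument (due to Erd\H{o}s) that tournaments with the property that every $k$-set is dominated by some vertex exist. Concretely, I would fix $m = 2^{3k}$, form a random tournament $F$ on vertex set $[m]$ by orienting each of the $\binom{m}{2}$ edges independently and uniformly at random, and show that the probability that $F$ \emph{fails} the desired property is strictly less than $1$. Existence of a good tournament then follows immediately.

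First I would analyze a single fixed $k$-subset $S \subseteq [m]$. A vertex $v$ witnesses $S$ precisely when it has an out-edge to each of the $k$ elements of $S$; since no vertex points to itself, any witness must lie outside $S$. For a fixed $v \notin S$, the orientations of the $k$ edges joining $v$ to $S$ are independent fair coins, so $\Pr[v \text{ dominates } S] = 2^{-k}$. Moreover, for distinct $v, v' \notin S$ these events depend on disjoint sets of edges and are therefore mutually independent, which makes the probability that \emph{no} vertex dominates $S$ exactly equal to $(1 - 2^{-k})^{m-k}$.

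Next I would apply a union bound over all $\binom{m}{k}$ choices of $S$, giving a failure probability of at most $\binom{m}{k}(1 - 2^{-k})^{m-k}$. Using $\binom{m}{k} \le m^k$ and $1 - x \le e^{-x}$, this is bounded by $m^k \exp\!\bigl(-(m-k)2^{-k}\bigr)$. Substituting $m = 2^{3k}$ converts the bound into $2^{3k^2}\exp\!\bigl(-(2^{2k} - k 2^{-k})\bigr)$, and since $2^{2k}$ grows doubly exponentially in $k$ while $3k^2$ grows only polynomially, this quantity drops below $1$ for every $k \ge 1$; the case $k = 0$ is handled trivially, as a single vertex vacuously dominates the empty set and $2^{0} = 1$. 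Hence a tournament on $2^{3k}$ vertices with the required domination property exists.

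The calculations here are routine, so the one point that demands genuine care is the independence claim in the second step: I must verify that the domination events for distinct candidate witnesses $v$ really involve disjoint edge sets (they do, since each such event only constrains edges incident to its own $v$), so that the product form $(1-2^{-k})^{m-k}$ is an equality rather than merely an upper bound. Beyond that, the only remaining ``obstacle'' is confirming that the chosen size $2^{3k}$ is generous enough to beat the union bound for the small values of $k$, which reduces to the elementary inequality $2^{2k} > 3k^2\ln 2 + 1$ and is checked directly for $k \ge 1$.
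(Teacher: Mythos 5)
Your proposal is correct and follows essentially the same route as the paper's own proof: a uniformly random tournament on $2^{3k}$ vertices, the $2^{-k}$ domination probability for a fixed witness, the bound $(1-2^{-k})^{2^{3k}-k} \le e^{-(2^{3k}-k)2^{-k}}$, and a union bound over the at most $2^{3k^2}$ choices of $S$. The only differences are cosmetic: you observe that independence makes the per-set failure probability an exact equality rather than an upper bound, and you spell out the final numerical check (including the trivial $k=0$ case) that the paper dismisses as ``basic calculus.''
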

\begin{proof}
    Let us consider a random tournament $F$ on $2^{3k}$ vertices where the orientation of every edge is chosen independently uniformly at random.
    We use the probabilistic method, proving the statement by showing that there is a non-zero probability that $F$ has the stated properties.
    Let us fix a set $S \subseteq V(F)$  of size $k$.
    The probability that a fixed vertex has edges directed towards every vertex in $S$ is precisely $2^{-k}$.
    The probability that there is no vertex with edges directed towards every vertex in $S$ is
    (using \Cref{obs:ProbSingleEvent} in Appendix \ref{apxSimpleFacts}) at most
    \[
        (1 - 2^{-k})^{2^{3k}-k} \le e^{-(2^{3k}-k)/2^{k}}.
    \]
    By the union bound, the probability that, for some set $S$ of vertices of size $k$,
    there is no vertex with edges directed towards every vertex in $S$ is at most
    $$
        {2^{3k} \choose k} \cdot e^{-(2^{3k}-k)/2^{k}}.
    $$
    Basic calculus yields that this term is smaller than \(1\) for all \(k\).
    This means there is a non-zero probability that \(F\) has the stated properties.
\end{proof}

\subsection{Graphs}

We present our construction for graphs first, and afterwards for relational structures.

\begin{theorem}\label{thm:radograph}
    There exists a constant $c$ such that, for every $n$, one can construct
    in time \(O(n^{3})\) an \(n\)-vertex graph satisfying the extension axioms
    $\EA_{k}$ for all \(k \le \lfloor\log(\log(n))/c\rfloor\).
\end{theorem}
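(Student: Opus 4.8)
The plan is to combine the two gadgets just established---the small tournament of \Cref{lem:tournament} and the universal sets of \Cref{thm:hashfunctions}---into a single graph on (a vertex set of size slightly above) \(n\), in such a way that the tournament guarantees that every candidate extension vertex lies \emph{outside} the set \(S\) it must extend and controls all of its edges into \(S\), while the universal set guarantees that all \(2^k\) adjacency patterns on \(S\) are actually realized. The guiding intuition is that the naive ``linearly order the vertices and let the larger endpoint decide each edge'' construction fails because a set \(S\) containing near-maximal vertices leaves too few candidates above it; replacing the linear order by a tournament in which \emph{every} \(k\)-set is dominated removes exactly this obstruction.

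First I would fix \(N := 2^{3k}\) together with a tournament \(F\) on \([N]\) as in \Cref{lem:tournament}, and identify the vertex set of \(G\) with \([N] \times \mathcal{A}\), where \(|\mathcal{A}| = \lfloor n/N \rfloor\). I call \(\{w\} \times \mathcal{A}\) the \emph{block} of \(w\), and read the first coordinate of a vertex as its tournament \emph{label}. Fixing a bijection between \([N]\times\mathcal{A}\) and an initial segment of \([n]\), I take an \((n,k)\)-universal set \(\mathcal{U}\) from \Cref{thm:hashfunctions} and assign to each \(a \in \mathcal{A}\) a member \(U_a \subseteq [n]\) so that every member of \(\mathcal{U}\) is hit (possible since, as checked below, \(|\mathcal{A}| \ge |\mathcal{U}|\)). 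Adjacency is then defined by letting the \emph{dominating} endpoint decide: for two vertices in distinct blocks \(w \neq i\), exactly one of \(w \to i\) or \(i \to w\) holds in \(F\); if \(w \to i\) then \((w,a)\) and \((i,b)\) are adjacent iff the \([n]\)-element of \((i,b)\) lies in \(U_a\), and symmetrically if \(i \to w\). Vertices sharing a block are left non-adjacent. Since \(F\) orients every pair of distinct labels in exactly one direction, the decider is unique and this is a well-defined undirected graph.

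To verify \(\EA_k\), fix \(S\) of size \(k\) and a target \(T \subseteq S\). Let \(L \subseteq [N]\) be the set of labels occurring in \(S\); as \(|L| \le k\), applying \Cref{lem:tournament} to any size-\(k\) superset of \(L\) yields a vertex \(w\) of \(F\) with \(w \to i\) for every \(i \in L\). Hence \(w \notin L\), so the block of \(w\) is disjoint from \(S\), and \(w\) dominates every element of \(S\). By universality of \(\mathcal{U}\) applied to \(S\) (viewed as a \(k\)-subset of \([n]\)), there is an \(a\) with \(U_a \cap S = T\). Then \(v := (w,a)\) satisfies \(v \notin S\), and for each \(s \in S\) the edge \(vs\) is decided by \(v\) (since \(w\) dominates the label of \(s\)) and is present iff \(s \in U_a\), i.e.\ iff \(s \in T\); thus \(v\) is an extension vertex for \((S,T)\), as required.

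Finally I would settle the resource bounds and the constant \(c\). For \(k \le \lfloor \log(\log(n))/c\rfloor\) with \(c\) large enough, both \(N = 2^{3k}\) and the universal-set size \(2^{O(k)}\log n\) are polylogarithmic in \(n\); in particular \(F\) can be found by brute force over all \(2^{\binom{N}{2}} \le \mathrm{poly}(n)\) orientations, \(\mathcal{U}\) is built in time \(n^{1+o(1)}\) by \Cref{thm:hashfunctions}, and \(N\cdot|\mathcal{U}| = n^{o(1)}\), so \(|\mathcal{A}| = \lfloor n/N\rfloor \ge |\mathcal{U}|\) as promised. Assembling \(G\) costs one decider lookup and one membership test per pair of vertices, comfortably within the claimed \(O(n^3)\). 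The step I expect to be the main obstacle is the simultaneous control of the two gadgets on a single vertex: one must ensure that the dominating block guaranteed by \(F\) always contains a vertex whose \emph{pattern} on \(S\) is an arbitrary \(T\), i.e.\ that the domination structure (which secures \(v \notin S\) and edge-control) and the universal-set structure (which secures pattern realization) can be overlaid without interference; verifying this non-interference, and pinning down \(c\) so that the brute-force tournament search stays polynomial while the blocks remain large enough, is the delicate part of the write-up.
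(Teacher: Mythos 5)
Your proposal is correct and takes essentially the same approach as the paper's proof: a brute-forced dominating tournament from \Cref{lem:tournament} labels the blocks of a partition of the vertex set, each block realizes every member of an $(n,k)$-universal set from \Cref{thm:hashfunctions} as its pattern, the dominating endpoint's pattern decides each cross-block edge, and the same parameter regime ($2^{3k}$ and $2^{O(k)}\log n$ both polylogarithmic for $k \le \lfloor\log(\log(n))/c\rfloor$) yields the time bound. The one detail to patch in a full write-up is that your vertex set $[N]\times\mathcal{A}$ has $N\lfloor n/N\rfloor$ elements rather than exactly $n$ (leftover vertices left out of all blocks would violate the axioms for sets $S$ containing them), which is fixed exactly as the paper does by taking the blocks to be a partition of all of $[n]$ into parts of size at least $|\mathcal{U}|$.
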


Since the extension axioms hold with high probability and determine the truth of LFP sentences,
it immediately follows that LFP cannot distinguish between a truly random graph \(G(n,1/2)\)
and the pseudorandom output of \Cref{thm:radograph}.
We formalize this later in \Cref{thm:RadoGenerator}.

\begin{proof}[Proof of \Cref{thm:radograph}]
    We prove the following claim that implies the theorem: There exists a constant $c$ such that, for every $n,k \in \N$ with $2^{k^c} \le n$,
    one can construct in time $O(2^{2^{ck}} n^2)$ an $n$-vertex graph satisfying \(\EA_k\).

    \begin{figure}
        \begin{center}
            \includegraphics[scale=0.88]{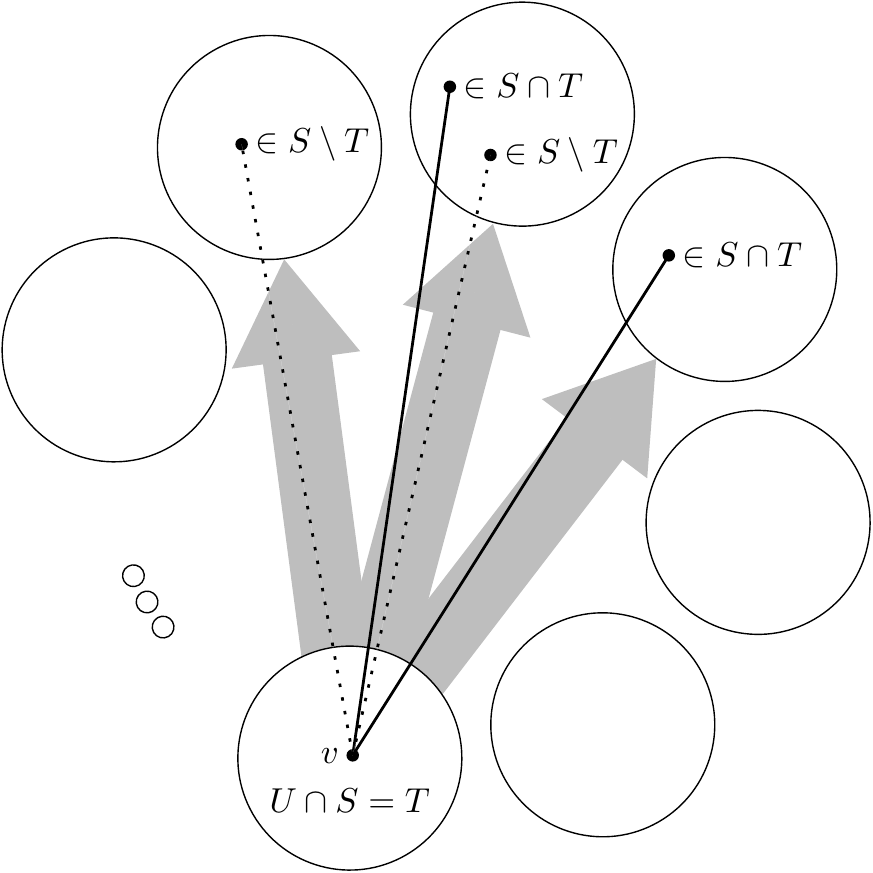}
        \end{center}
        \caption{
            Our finite Rado graph construction. To find an extension vertex to sets \(S\) and \(T\), first choose a tournament node with directed edges (large gray arrows) towards all tournament nodes that contain parts of \(S\).
            Within this tournament node, choose a set \(U\) from an \((n,k)\)-universal set with \(U \cap S = T\).
            There exists an extension vertex \(v\) with \(N(v) \cap S =~T\).
        }
        \label{fig:rado-construction}
    \end{figure}

    There are at most $2^{2^{3k \cdot 2}}$ tournaments  with $2^{3k}$ vertices.
    By enumerating them all,
    we find according to \Cref{lem:tournament}
    a tournament $F$ with the vertex set \([2^{3k}]\)
    such that, for every $S \subseteq [2^{3k}]$ of size $k$, there is a vertex $j \in [2^{3k}]$
    that has a directed edge towards every vertex in $S$.

    We will construct a graph \(G\) with vertex set \([n]\) satisfying the extension axioms \(\EA_k\).
    We start by using \Cref{thm:hashfunctions} to construct an $(n,k)$-universal set $\mathcal U$.
    Next, we need to partition the vertex set \([n]\) into $2^{3k}$ many parts $P_1,\dots,P_{2^{3k}}$ of size at least \(|\mathcal U|\) (these will correspond to vertices in a tournament).
    Let us argue that one can choose the parameter \(c\) such that this is possible for all $n \ge 2^{k^c}$.
    Let $c'$ be the constant from \Cref{thm:hashfunctions}, such that we can bound the size of our $(n,k)$-universal set by $2^{c'k} \log(n)$.
    We may choose \(c\) such, that for every $k \in \N$,
    \[
        2^{3k} \cdot 2^{c'k} \cdot 2 \le 2^{k^{c}}/k^c.
    \]
    Hence, for every $n \in \N$ with $2^{k^c} \le n$ holds \(2^{k^{c}}/k^c \le n/\log(n)\) and thus
    \[
        2^{3k} \cdot 2^{c'k} \log(n) \cdot 2 \le n.
    \]
    This means we can partition the vertices into parts \(P_1,\dots,P_{2^{3k}}\), each of size at least
    \[
        \lfloor 2^{c'k} \log(n) \cdot 2 \rfloor \ge |\mathcal U|.
    \]
    Next, we want to choose a function \(\pattern : [n] \to \mathcal U\).
    Since we have \(|P_j| \ge |\mathcal U|\), we can guarantee that \[\{ \pattern(v) \mid v \in P_j \} = \mathcal U\] for all \(j \in [2^{3k}]\).
    We further denote by \(\idx(v)\) the index \(j\) such that \(v \in P_j\).
    To construct our graph \(G\),
    we now do the following for all pairs \(\{u, v\} \in {[n] \choose 2}\):
    \begin{quote}
        Up to renaming, we may assume that the arc in our tournament \(F\) goes from \(\idx(v)\) to \(\idx(u)\).
        Add the edge \(\{u, v\}\) to \(G\) if and only if \(u \in \pattern(v)\).
    \end{quote}

    \paragraph*{Correctness}
    Let \(S \subseteq [n]\) and let \(T \subseteq S\).
    We constructed our tournament \(F\) using \Cref{lem:tournament} such that we can choose \(j \in [2^{3k}]\) with a directed edge to all vertices in \(\{\idx(s) \mid s \in S\}\).
    Since \(\mathcal U\) is an \((n,k)\)-universal set, there exists \(U \in \mathcal U\) such that \(S \cap U = T\).
    Since \(\pattern\), restricted to \(P_j\), was chosen to be surjective, we can pick \(v \in P_j\) with \(\pattern(v) = U\).
    Since \(F\) has no self loops, \(v \not\in S\).
    We selected \(v\) such that \(\idx(v)\) has in \(F\) a directed edge to \(\idx(u)\) for all \(u \in S\).
    Thus, \(v\) is adjacent to \(u \in S\) if and only if \(u \in \pattern(v) = T\).

    \paragraph*{Run time}
    The construction of $F$ takes time at most $2^{2^{3k \cdot 2}} \cdot k^2$.
    By \Cref{thm:hashfunctions}, we can construct the $(n,k)$-universal set in time $2^{c'k} n \log(n)$.
    The remaining part of the construction takes time $O(|\mathcal U| \cdot n^2)$.
    By possibly rescaling \(c\), we get a run time of \(O(2^{2^{ck}} n^2)\).
\end{proof}

\subsection{Relational structures}

The definition of \(k\)-extension axioms \(\EA^\sigma_k\) for general signatures \(\sigma\) becomes more involved,
and thus our finite Rado-construction becomes more opaque, even though we ultimately use the same arguments as for graphs.
Given a set \(S\), we find a tournament node oriented to the tournament nodes containing \(S\).
Instead of using universal sets, we now select a function \(f\) from a family of \((n,k)\)-perfect hash functions that is bijective on \(S\).
Our construction adds an extension element that is valid for all sets \(S\) on which \(f\) is bijective.
The definition of \(k\)-extension axioms \(\EA^\sigma_k\) (\Cref{def:EAstruc}) and the proof of \Cref{thm:radostructure} 
can be found in Appendix~\ref{apxPseudorandomStructures}.

\begin{restatable}{theorem}{RadoStructure}\label{thm:radostructure}
    Let \(\sigma\) be a relational signature.
    There exists a constant $c$ such that, for every $n$, one can construct
    in time \(O(n^{c})\) an \(n\)-element \(\sigma\)-structure satisfying the extension axioms
    $\EA^\sigma_{k}$ for all \(k \le \lfloor\log(\log(n))/c\rfloor\).
\end{restatable}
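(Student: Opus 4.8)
The plan is to follow the proof of \Cref{thm:radograph} almost verbatim, replacing universal sets by perfect hash functions so as to encode the full ordered, non-symmetric type of a new element rather than merely its neighborhood. As there, it suffices to prove a claim with an explicit dependence on $k$: there is a constant $c$ such that, whenever $2^{k^c}\le n$, one can build in time $2^{2^{ck}}\,n^{O(1)}$ an $n$-element $\sigma$-structure satisfying $\EA^\sigma_k$ (as in \Cref{def:EAstruc}); the bound $k\le\lfloor\log(\log(n))/c\rfloor$ then follows exactly as for graphs, the binding constraint being the doubly-exponential brute-force search for the tournament. First I would fix, via \Cref{lem:tournament}, a tournament $F$ on vertex set $[2^{3k}]$ in which every $k$-set of vertices is dominated by a single vertex, and, via \Cref{thm:hashfunctions}, an $(n,k)$-family $\mathcal F$ of perfect hash functions of size $2^{c'k}\log(n)$.

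The central change is the notion of a \emph{pattern}. Whereas for graphs a pattern was a single subset, here a pattern is a pair $(f,\chi)$ consisting of a hash function $f\in\mathcal F$ together with a \emph{colored type} $\chi$: for each relation symbol $R_i$ of arity $a_i$, a membership bit for every tuple in $(\{1,\dots,k\}\cup\{*\})^{a_i}$ that uses the special symbol $*$ at least once. The number of colored types is $2^{\sum_i((k+1)^{a_i}-k^{a_i})}=2^{O(k^{a_{\max}})}$, so the set $\Pi$ of all patterns has size $2^{O(k^{a_{\max}})}\log(n)$, which is subpolynomial for $k\le\log(\log(n))/c$. As in the graph proof I would partition $[n]$ into $2^{3k}$ parts $P_1,\dots,P_{2^{3k}}$ each of size at least $|\Pi|$ (possible for a suitable $c$ once $2^{k^c}\le n$, after rescaling $c$ to absorb the extra $2^{O(k^{a_{\max}})}$ factor), choose $\pattern\colon[n]\to\Pi$ surjective on each part, and write $\idx(v)=j$ when $v\in P_j$. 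To define the structure on a tuple $\bar w=(w_1,\dots,w_{a_i})$, I would locate the tournament vertex $j^\ast$ among $\{\idx(w_1),\dots,\idx(w_{a_i})\}$ that dominates all the others; such a vertex is unique when it exists, and I let $v^\ast$ be a fixed entry (say the least) with $\idx(v^\ast)=j^\ast$. I then color each coordinate by $*$ if it equals $v^\ast$ and by $f_{v^\ast}(w_m)\in[k]$ otherwise, and place $\bar w$ in $R_i$ exactly when $\chi_{v^\ast}$ accepts the resulting colored tuple; if no dominating vertex exists, $\bar w$ is left out of $R_i$. These last defaults are irrelevant to correctness, since the extension scenario never triggers them.

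For correctness, given $S$ of size $k$ and a target extension type, I would apply \Cref{lem:tournament} to a size-$k$ superset of $\{\idx(s):s\in S\}$ to obtain $j$ dominating every $\idx(s)$; by antisymmetry of the tournament $j\notin\{\idx(s):s\in S\}$, so every $v\in P_j$ satisfies $v\notin S$, and $j$ is the \emph{unique} dominating index of any tuple whose entries lie in $S\cup\{v\}$ and include $v$, with $v$ its only entry of index $j$. Since $\mathcal F$ is a perfect hash family, some $f\in\mathcal F$ is bijective on $S$; transporting the target type along the bijections $f\colon S\to[k]$ and $v\mapsto*$ yields a colored type $\chi$, and surjectivity of $\pattern$ on $P_j$ produces $v\in P_j$ with $\pattern(v)=(f,\chi)$. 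On every tuple over $S\cup\{v\}$ containing $v$, the construction then selects $v^\ast=v$ as controller and evaluates $\chi$ on exactly the colored tuple encoding the desired membership, so $v$ realizes the target type over $S$.

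The remaining work is routine parameter and run-time bookkeeping, which goes through as in \Cref{thm:radograph}. The main obstacle, and the part deserving genuine care, is the controller rule: unlike the two-endpoint symmetric case of graphs, a tuple here can have arbitrary arity, repeated entries, and ordered coordinates spread across several tournament parts, so one must verify that the dominating index is genuinely unique in every extension scenario (which is exactly where tournament antisymmetry is used), that the designated controller is always the intended extension element $v$ rather than any element of $S$, and that pulling the target type back along a bijective hash function recovers the full \emph{ordered} type — including tuples in which $v$ occurs in several coordinates — rather than merely a neighborhood as for graphs.
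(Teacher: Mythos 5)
Your proposal is correct and follows essentially the same route as the paper's proof in Appendix~\ref{apxPseudorandomStructures}: a brute-forced tournament from \Cref{lem:tournament}, an $(n,k)$-family of perfect hash functions from \Cref{thm:hashfunctions}, and a per-part surjective $\pattern$ assignment encoding atomic types, with the dominating tournament index selecting the controlling element of each tuple. The only differences are cosmetic bookkeeping: the paper closes $\mathcal F$ under permutations (its $\mathcal F^*$, costing a $k!$ factor in the part sizes) where you instead transport the target type along the bijection $f|_S$, and you make explicit the tie-breaking and default rules that the paper glosses over with ``the unique element,'' which is if anything slightly more careful.
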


Even though we do not prove this, \Cref{thm:radostructure} can be easily modified to also generate pseudorandom hypergraphs.
As we see in the next section, just like for graphs, the extension axioms determine the truth of FO and LFP formulas and hold with high probability.

\subsection{Pseudorandom graphs and structures}\label{sec:three}

Let us now discuss why the constructions of \Cref{thm:radograph} and \Cref{thm:radostructure} give pseudorandom structures for FO and LFP.
First, observe that random graphs, hypergraphs, and relational structures readily satisfy the respective extension axioms.

\begin{restatable}{lemma}{EAProbs}\label{lem:EA-probs}
    For every signature \(\sigma\) and every \(k,t\),
    \begin{align*}
        \Pr_{G \sim G(1/2,n)}\bigl[G \not \models \EA_k \bigr] = {} & \negl(n), \\
        \Pr_{G \sim G_t(1/2,n)}\bigl[G \not \models \EA^t_k \bigr] = {} & \negl(n), \\
        \Pr_{\mathbb A \sim \struc{\sigma,n}}\bigl[ \mathbb A \not \models \EA^\sigma_k \bigr] = {} & \negl(n).
    \end{align*}
\end{restatable}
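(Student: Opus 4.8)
The plan is to prove all three bounds by a single probabilistic argument, essentially the one already used in the proof of \Cref{lem:tournament}: fix a ``configuration'' that must be extended, bound the probability that it has no valid extension vertex, and then union-bound over all configurations. In each of the three settings a configuration consists of a set $S$ of $k$ elements together with a target \emph{extension-type} describing how the new vertex should relate to $S$. The crucial observation is that, for a fixed configuration, each candidate vertex $v \notin S$ realizes the target type independently of every other candidate, because the random bits determining $v$'s type involve only tuples that contain $v$, and such tuples are disjoint across distinct candidates.

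I would carry out the graph case first as the template. Fix $S \in {[n] \choose k}$ and a target $T \subseteq S$. A given $v \notin S$ is a valid extension vertex for $(S,T)$ exactly when its $k$ independent adjacencies to $S$ match $T$, which happens with probability $2^{-k}$. Since these events are independent across the $n-k$ candidates, the probability that $(S,T)$ has no extension vertex is $(1-2^{-k})^{n-k} \le e^{-(n-k)/2^{k}}$ by \Cref{obs:ProbSingleEvent}. A union bound over the ${n \choose k}\cdot 2^{k}$ configurations then gives
\[
    \Pr_{G \sim G(1/2,n)}\bigl[G \not\models \EA_k\bigr] \le {n \choose k}\, 2^{k}\, e^{-(n-k)/2^{k}}.
\]
For fixed $k$ this is a polynomial in $n$ times a quantity decaying exponentially in $n$, hence $\negl(n)$.

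The hypergraph and general-structure cases are identical except for the number $m$ of independent random bits determining a candidate's type (and hence the exponent). For $t$-hypergraphs, adding $v$ to $S$ creates one potential hyperedge $\{v\}\cup T'$ for each $(t-1)$-subset $T' \subseteq S$, so $m = {k \choose t-1}$ and the same union bound yields ${n \choose k}\, 2^{m}\, e^{-(n-k)/2^{m}}$. For a general signature $\sigma = \langle \seq{R}{t}\rangle$ with arities $a_i$, the type of $v$ over $S$ is determined by the membership of every ordered tuple over $S \cup \{v\}$ using $v$ in at least one coordinate, of which there are $m = m(\sigma,k) := \sum_{i=1}^{t}\bigl((k+1)^{a_i} - k^{a_i}\bigr)$, giving the bound ${n \choose k}\, 2^{m}\, e^{-(n-k)/2^{m}}$. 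In every case $m$ is a constant depending only on $k$ (and on $t$ or $\sigma$), so each bound is $\negl(n)$.

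The only genuine work lies in the general relational case, and it is bookkeeping rather than mathematical difficulty: one must match the count $m(\sigma,k)$ to the exact formulation of $\EA^\sigma_k$ in \Cref{def:EAstruc}, in particular handling tuples in which $v$ appears in several coordinates, and confirm that the type-defining tuples of two distinct candidates $v_1, v_2 \notin S$ are disjoint. The latter holds because every type-defining tuple for $v_i$ has all coordinates in $S \cup \{v_i\}$ and at least one coordinate equal to $v_i$, so it cannot contain $v_j$ for $j \ne i$; thus the two families of tuples share no element and the corresponding random bits are independent. Granting this independence, the polynomial-times-exponential estimate closes all three bounds uniformly.
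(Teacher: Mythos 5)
Your proposal is correct and follows essentially the same route as the paper's proof in Appendix~\ref{apxEAprob}: fix a configuration (tuple plus extension type), bound the per-candidate match probability by $2^{-m}$ with the matches independent across candidates, apply \Cref{obs:ProbSingleEvent}, and union-bound over polynomially many configurations times a constant number of types. The only cosmetic differences are that the paper writes out the relational case and declares graphs/hypergraphs analogous while you do the reverse, and that your count $m(\sigma,k)=\sum_i\bigl((k+1)^{a_i}-k^{a_i}\bigr)$ is the exact number of type-defining tuples where the paper uses the upper bound $\sum_i(k+1)^{a_i}$ --- immaterial, since both are constants in $n$.
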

This is a well-known result \cite[Sections 4.1 and 4.2]{ParametricClasses}, but we include a proof in Appendix~\ref{apxEAprob} for completeness.

Next, we argue via pebbling games (see for example \cite[Lemma 12.7]{Libkin2004}) that the extension axioms determine the truth of FO or LFP sentences. 
This proof is also straightforward and included in Appendix \ref{appEA-determine-qtype}.

\begin{restatable}{lemma}{EAType}\label{lem:EA-determine-qtype}
    Consider a pair of graphs/hypergraphs/structures \(\mathbb A_1, \mathbb A_2\) of the same signature that both satisfy the corresponding \(k\)-extension axioms.
    Then, for every advice structure \(\mathbb X\), and every \textnormal{LFP} sentence \(\phi\)
    of quantifier rank at most \(k\) and matching signature, we have \( (\mathbb A_1,\mathbb X) \models \phi \Leftrightarrow (\mathbb A_2,\mathbb X) \models \phi\).
\end{restatable}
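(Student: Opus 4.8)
The plan is to show that two structures satisfying the $k$-extension axioms are indistinguishable by any LFP sentence of quantifier rank at most $k$, even in the presence of a fixed advice structure $\mathbb{X}$. The natural tool is the Ehrenfeucht--Fra\"iss\'e pebble game, specifically its infinitary $k$-pebble variant, since it is well-known (see Libkin \cite{Libkin2004}) that Duplicator winning the $k$-pebble game on $\mathbb{A}_1$ and $\mathbb{A}_2$ implies these structures agree on all sentences of the infinitary logic $L^k_{\infty\omega}$, which subsumes LFP sentences of quantifier rank $k$. First I would set up the game: the arena consists of the two combined structures $(\mathbb{A}_1, \mathbb{X})$ and $(\mathbb{A}_2, \mathbb{X})$, with $k$ pairs of pebbles, and Spoiler places a pebble on one structure while Duplicator must respond in the other so that the induced partial map is a partial isomorphism at every round.

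The heart of the argument is describing Duplicator's winning strategy. The key move is to handle pebbles placed on the advice structure $\mathbb{X}$ separately from those placed on the random part $\mathbb{A}_i$. Since $\mathbb{X}$ is shared (identical in both combined structures via the disjoint union), I would have Duplicator respond to any pebble on the $\mathbb{X}$-side by copying the identical element; these pebbles maintain a trivial partial isomorphism on the advice part because the structures are literally the same there. The remaining pebbles lie on the $\mathbb{A}_i$ universes. When Spoiler places a pebble, say on some vertex $v \in \mathbb{A}_1$, Duplicator must find $w \in \mathbb{A}_2$ that realizes the same adjacency pattern relative to the at most $k-1$ previously-pebbled vertices in $\mathbb{A}_2$. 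This is exactly what the extension axioms guarantee: letting $S$ be the set of currently-pebbled vertices of $\mathbb{A}_2$ (of size at most $k-1$) and $T \subseteq S$ the subset to which $v$ is adjacent in $\mathbb{A}_1$, the $k$-extension axiom $\EA_k$ (applied to a set of size $k-1 < k$) produces an extension vertex $w \notin S$ adjacent exactly to $T$, which keeps the map a partial isomorphism. I would note that because there are never more than $k$ pebbles in play, every set $S$ encountered has size strictly less than $k$, so $\EA_k$ always applies.

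The main subtlety---and the step I expect to require the most care---is the interaction between the random and advice parts of the structure. A partial isomorphism must respect all relations of the combined signature $\tau \cup \rho$, including cross-relations if any existed; but since the combined structure is a \emph{disjoint union}, there are no relational tuples mixing $\mathbb{A}_i$ elements with $\mathbb{X}$ elements, so the partial isomorphism decomposes cleanly into its restriction to the $\mathbb{A}$-part and its restriction to the $\mathbb{X}$-part, and Duplicator can maintain each independently. I would verify carefully that this decomposition is valid and that no relation of $\tau \cup \rho$ can create an edge between the two sides, leaning on the footnote definition of the disjoint union. For relational structures and hypergraphs rather than graphs, the same argument goes through using the general extension axioms $\EA^\sigma_k$ and $\EA^t_k$ in place of $\EA_k$, with the extension element realizing the prescribed pattern of tuples over the pebbled set.

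Finally, I would conclude by invoking the standard correspondence: Duplicator's winning strategy in the $k$-pebble infinitary game shows $(\mathbb{A}_1, \mathbb{X}) \equiv^{k}_{\infty\omega} (\mathbb{A}_2, \mathbb{X})$, and since any LFP sentence of quantifier rank at most $k$ is equivalent to a sentence of $L^k_{\infty\omega}$, the two combined structures agree on $\phi$. This gives $(\mathbb{A}_1, \mathbb{X}) \models \phi \Leftrightarrow (\mathbb{A}_2, \mathbb{X}) \models \phi$, as desired. The bulk of the writing is in specifying the strategy precisely and checking the partial-isomorphism invariant is preserved across rounds; the translation from game-equivalence to LFP-equivalence I would cite rather than reprove.
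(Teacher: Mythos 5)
Your proposal is correct and follows essentially the same route as the paper's own proof: Duplicator plays the infinitary $k$-pebble game on $(\mathbb A_1,\mathbb X)$ and $(\mathbb A_2,\mathbb X)$, copies Spoiler's moves identically on the shared advice part, answers moves on the random part via the extension axioms applied to the at most $k-1$ currently pebbled elements, and concludes by the standard correspondence between the game and LFP sentences of quantifier rank $k$. The decomposition of the partial isomorphism across the disjoint union, which you flag as the main subtlety, is handled the same (implicit) way in the paper.
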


Now, \Cref{lem:EA-probs} and \Cref{lem:EA-determine-qtype} together with \Cref{thm:radograph} (for graphs) and \Cref{thm:radostructure} (for structures)
immediately imply the main theorem of this section.

\begin{theorem}\label{thm:RadoGenerator}
    There are deterministic polynomial-time algorithms to construct graphs or relational structures that are pseudorandom for \textnormal{FO} and \textnormal{LFP}.
\end{theorem}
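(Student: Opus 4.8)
The plan is to observe that \Cref{thm:RadoGenerator} follows by assembling the three preceding results in the right order. Let $\Theta_n$ denote the deterministic output of \Cref{thm:radograph} (for graphs) or \Cref{thm:radostructure} (for relational structures) on a universe of size $n$, so that the candidate distribution $\mathcal D_n$ is simply the point mass at $\Theta_n$. I must verify the definition of pseudorandomness for $L \in \{\textnormal{FO}, \textnormal{LFP}\}$: for every advice signature $\rho$, every sequence $(\mathbb X_n)_{n}$ of $\rho$-structures with $|\mathbb X_n| = n$, and every $(\tau\cup\rho)$-sentence $\phi \in L$, the quantity $\abs{[(\Theta_n, \mathbb X_n)\models\phi] - \Pr_{\mathbb A \sim \struc{\tau,n}}[(\mathbb A,\mathbb X_n)\models\phi]}$ should be $\negl(n)$. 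Since $\textnormal{FO}\subseteq\textnormal{LFP}$, it suffices to treat $L = \textnormal{LFP}$. The running-time claim is inherited verbatim from \Cref{thm:radograph}/\Cref{thm:radostructure}, whose constructions are polynomial in $n$.

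First I would fix the sentence $\phi$ and let $r$ be its quantifier rank, a constant independent of $n$. The crucial point is the order of quantifiers in the definition: $\phi$ (hence $r$) is fixed \emph{before} $n$ grows. Because \Cref{thm:radograph}/\Cref{thm:radostructure} produce a structure satisfying $\EA_k$ for all $k \le \lfloor\log\log(n)/c\rfloor$, and the extension axioms are monotone (satisfying $\EA_k$ implies satisfying $\EA_r$ whenever $r \le k$ and the universe has at least $k$ elements, by padding a size-$r$ set up to size $k$), for all sufficiently large $n$ the deterministic structure $\Theta_n$ satisfies $\EA_r$. Simultaneously, by \Cref{lem:EA-probs} applied with $k = r$, a random $\mathbb A \sim \struc{\tau,n}$ satisfies $\EA_r$ with probability $1 - \negl(n)$.

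Next I would condition on the event $E = \{\mathbb A \models \EA_r\}$. On $E$, both $\mathbb A$ and $\Theta_n$ satisfy the $r$-extension axioms, so \Cref{lem:EA-determine-qtype}---which already incorporates the advice structure $\mathbb X_n$ and applies to sentences of quantifier rank at most $r$---gives $(\mathbb A,\mathbb X_n)\models\phi \iff (\Theta_n,\mathbb X_n)\models\phi$. Writing $b := [(\Theta_n,\mathbb X_n)\models\phi]\in\{0,1\}$ and decomposing the probability over $E$ and its complement yields $\abs{\Pr_{\mathbb A}[(\mathbb A,\mathbb X_n)\models\phi] - b} \le \Pr[\overline{E}] = \negl(n)$. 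Since $\Pr_{\mathbb A \sim \mathcal D_n}[(\mathbb A,\mathbb X_n)\models\phi] = b$, this is exactly the required bound, completing the verification.

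There is no serious obstacle here: essentially all of the mathematical content lives in the cited lemmas, and the remaining work is bookkeeping---aligning the fixed quantifier rank $r$ with the slowly growing extension parameter $k = \lfloor\log\log(n)/c\rfloor$, and respecting that negligibility is asserted for each fixed sentence as $n\to\infty$ rather than uniformly over all sentences at once. The one auxiliary fact I would flag explicitly is the monotonicity $\EA_k \Rightarrow \EA_r$ for $r \le k$, since \Cref{lem:EA-determine-qtype} requires both structures to satisfy the extension axioms at a \emph{common} level, whereas \Cref{thm:radograph} guarantees the single large level $k$ for $\Theta_n$ while \Cref{lem:EA-probs} is most naturally applied at the level $r$ for the random structure.
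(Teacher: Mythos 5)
Your proof is correct and is exactly the paper's argument: the paper obtains \Cref{thm:RadoGenerator} by combining \Cref{thm:radograph}/\Cref{thm:radostructure} with \Cref{lem:EA-probs} and \Cref{lem:EA-determine-qtype} in precisely the way you spell out, fixing the sentence (hence its quantifier rank $r$) before letting $n$ grow and conditioning on the random structure satisfying $\EA_r$. The one auxiliary fact you flag, monotonicity $\EA_k \Rightarrow \EA_r$ for $r \le k$, is true but not actually needed, since \Cref{thm:radograph} and \Cref{thm:radostructure} already guarantee $\EA_k$ for \emph{all} $k \le \lfloor\log(\log(n))/c\rfloor$, so the constructed structure satisfies $\EA_r$ directly once $n$ is sufficiently large.
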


%Note that this theorem does \emph{not} cover hypergraphs, as, for example, relational structures are not assumed to be symmetric.

\section{Pseudorandom Transductions}\label{secPRGClassification}

We now turn to the question of whether transductions can act as pseudorandom generators. Specifically, we ask for which logics $L_1$ and $L_2$ and for which relational signatures $\sigma$ and $\tau$ there exists an $(L_1, L_2)$-pseudorandom transduction from $\sigma$ to $\tau$, that is, an \(L_1\)-transduction that is secure against \(L_2\). In this section, we provide a complete classification for all pairs $L_1, L_2 \in \{\txt{FO}, \txt{LFP}, \txt{\LFPpar}\}$ and all relational signatures.

Clearly, the existence of such transductions depends on the number and arities of the relations in $\sigma$ and $\tau$. For instance, if the multiset of arities of $\tau$ is a subset of that of $\sigma$, we may simply match up the relation symbols with formulas of the form $$\theta_i(\seq{x}{a_i}) = R_j(\seq{x}{a_i})$$ and ignore the extra relation symbols from $\sigma$. This transduction is definable in any logic and will yield a uniformly random $\tau$-structure, so will be pseudorandom for any logic as well. A more interesting case is when the target $\tau$-structures inherently have more ``entropy'' than the input $\sigma$-structures, for example, going from a single binary relation to a single ternary relation. Our results for this case were summarized in Table \ref{tab23}.

In other cases it is not immediately clear how to measure entropy. For example, does there exist a pseudorandom transduction from a signature with one ternary relation to a signature with ten binary relations? Curiously, we find that the ``right'' way of measuring entropy depends on the tranduction logic $L_1$; in this particular case the answer for FO and LFP is No, while the answer for \LFPpar is Yes.

Let $\sigma = \langle \seq{R}{t} \rangle$ and $\tau = \langle R'_1, R'_2, \dots, R'_{t'} \rangle$ be signatures, where $R_i$ has arity $a_i$ and $R'_i$ has arity $a'_i$. We will show that the right measure of entropy for \LFPpar is the following ``lexicographic'' order \(\geqC\).
We say $\sigma \geqC \tau$ if and only if the arity-tuple $(\seq{a}{t})$ of \(\sigma\) 
is (weakly) lexicographically larger than the corresponding tuple $(a'_1, a'_2, \dots, a'_{t'})$ of \(\tau\) 
after sorting both tuples in descending order.
Equivalently,
\[
\sigma \geqC \tau \iff \sum_{i = 1}^{t} (t + t')^{a_i} \geq \sum_{i = 1}^{t'} (t + t')^{a'_i}.
\]
As one may expect, \(\geqC\) is a total order and matches our intuitive understanding of entropy in random structures, 
where one assigns an \(a\)-ary relation an entropy of \(n^a\), which is asymptotically larger than any combination of \(b\)-ary relations with \(b < a\).

The story becomes more complex when the transduction logic $L_1$ is FO or LFP.
Here, the correct entropy measure will be the order \(\geqS\) defined by
\[
    \sigma \geqS \tau \iff \txt{for all $k \in \zz_{\geq 1}$, } \sum_{i = 1}^{t} T(a_i, k) \geq \sum_{i = 1}^{t'} T(a'_i, k),
\]
where 
\[
    T(a, k) := \sum_{j = 0}^{k} (-1)^{k - j} j^a {k \choose j}
\]
counts the number of surjections from a set of size $a$ to a set of size $k$ \cite[(6.19)]{SurjectionFormula}.
Note that $\geqS$ is a refinement of $\geqC$, and is not a total order. 
That is, having more entropy in the sense of \(\geqC\) is a necessary, but not sufficient condition for the existence of pseudorandom generators.
Intuitively speaking, this is because FO and LFP are not powerful enough to convert certain ``shapes'' of the input entropy into the required output shape.

We summarize how the existence of pseudorandom transductions depends on both the logics and the entropies of the signatures in Table \ref{tabgeneral}, from which Table \ref{tab23} can be derived as a special case.

\begin{table}\begin{center}
    \begin{tabular}{|c|c|c|c|}
        \hline
        \backslashbox{$\Theta$}{$\phi$} & FO     & LFP    & \LFPpar                                       \\
        \hline
        FO                              & \xmarkg & \xmarkg & \xmarkg                                        \\
        \hline
        LFP                             & \xmarkg & \xmarkg & \xmarkg                                        \\
        \hline
        \LFPpar                         & \cmarkg & \cmarkg & {\color{blue}\thead{$\sigma \geqC \tau$ or\\($\sigma$ not all-unary\\and OWFs exist)}} \\
        \hline
    \end{tabular}
    \caption{Conditions on the signatures $\sigma$ and $\tau$ for the existence of an $(L_1, L_2)$-pseudorandom generator from $\sigma$ to $\tau$. The rows range over generator logics $L_1$ and the columns range over adversary logics $L_2$. This generalizes Table \ref{tab23}.}
    \label{tabgeneral}
\end{center}\end{table}

\subsection{Impossibility for \textup{LFP}}\label{subPRGImpossibility}

We now state and outline the proof of the following result, filling in the top two-thirds of Table \ref{tabgeneral}.

\begin{restatable}{theorem}{thmPRGImpossibility}\label{thmPRGImpossibility}
	Let $\sigma = \langle \seq{R}{t} \rangle$ and $\tau = \langle R'_1, R'_2, \dots, R'_{t'} \rangle$ be relational signatures, where $R_i$ has arity $a_i$ and $R'_i$ has arity $a'_i$. The following statements are equivalent:
	\begin{enumerate}[label={(\roman*)}]
		\item\label{itmPRGImpossibilityNormal} There exists an $(\textnormal{LFP}, \textnormal{FO})$-pseudorandom generator from $\sigma$ to $\tau$.
		\item\label{itmPRGImpossibilityStatistical} There exists a quantifier-free FO transduction $\Theta: \struc{\sigma} \to \struc{\tau}$ such that, for all $n$, the distribution on $\struc{\tau, n}$ obtained by applying $\Theta$ to $\mathbb{A} \sim \struc{\sigma, n}$ is statistically identical to the distribution $\mathbb{B} \sim \struc{\tau, n}$.
		\item\label{itmPRGImpossibilityFormula} $\sigma \geqS \tau$.
	\end{enumerate}
\end{restatable}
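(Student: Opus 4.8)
The plan is to prove the cycle by establishing the trivial implication (ii)$\Rightarrow$(i), the explicit construction (iii)$\Rightarrow$(ii), and the combinatorial lower bound (i)$\Rightarrow$(iii) by contraposition. The implication (ii)$\Rightarrow$(i) is immediate: a quantifier-free FO transduction is in particular an LFP transduction, and if the induced distribution on $\struc{\tau,n}$ is statistically \emph{identical} to the uniform one, then every adversary (in any logic, with any advice) has distinguishing advantage exactly $0$, so $\Theta$ is an $(\textnormal{LFP},\textnormal{FO})$-pseudorandom generator.

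For (iii)$\Rightarrow$(ii), I would view a structure as a family of independent fair coins indexed by tuples, and group these tuples according to the set of distinct elements they use. Fix a set $W$ of exactly $k$ elements; the input bits indexed by tuples surjecting onto $W$ number $\sum_i T(a_i,k)$, and likewise $\sum_j T(a'_j,k)$ on the output side. The symmetric group $S_k$ permuting $W$ acts \emph{freely} on such surjective tuples, so on each side these bits split into orbits of size $k!$, namely $\sum_i T(a_i,k)/k!$ input orbits and $\sum_j T(a'_j,k)/k!$ output orbits (these are sums of Stirling numbers of the second kind). The hypothesis $\sigma \geqS \tau$ says precisely that at every level $k$ there are at least as many input orbits as output orbits, so I can choose an $S_k$-equivariant injection from output surjective-tuple-orbits into input surjective-tuple-orbits and declare each output bit to equal its matched input bit. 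Because the matching only depends on the equality pattern of the tuple (never on the identity or any order of the elements of $W$), this is definable by a quantifier-free formula that cases on equality patterns; because the map is injective, distinct output bits are set to distinct, hence independent, fair input coins, so the output is exactly uniform.

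For the lower bound (i)$\Rightarrow$(iii), I would argue the contrapositive: assuming $\sigma \not\geqS \tau$, I produce, for every LFP transduction $\Theta$, an FO sentence with non-negligible distinguishing advantage. The first step is a reduction of LFP to quantifier-free behaviour on random inputs. Running the parametrized Ehrenfeucht–Fra\"{i}ss\'e/pebble argument underlying \Cref{lem:EA-determine-qtype} with the free variables pebbled, combined with \Cref{lem:EA-probs}, shows that on a random $\sigma$-structure the value of each formula $\theta_j(\bar y)$ is, with probability $1-\negl(n)$, constant on each atomic type of $\bar y$; the zero-one law then fixes that constant. Hence $\Theta$ agrees with high probability with a single \emph{fixed} quantifier-free transduction $\hat\Theta$, which is why LFP buys no power over quantifier-free here. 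It therefore suffices to show that $\hat\Theta(\mathbb{A})$ violates, with non-negligible probability, some FO-expressible property that random $\tau$-structures satisfy with high probability, namely an extension axiom $\EA^\tau_k$ (FO-expressible for constant $k$, holding whp by \Cref{lem:EA-probs}). Letting $k$ be the least level with $\sum_i T(a_i,k) < \sum_j T(a'_j,k)$, the intuition is that the output bits of $\hat\Theta$ that are ``genuinely $k$-dimensional'' (surjective on a $k$-set) carry at most $\sum_i T(a_i,k)$ bits of usable input entropy, too few to realize all required one-point extensions over a $(k-1)$-set.

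The main obstacle is making this degrees-of-freedom count rigorous, because the dependence is hierarchical: an output atom on a tuple using $v$ together with a subset $S' \subseteq S$ depends on input atoms confined to $S'\cup\{v\}$, and these lower-arity input bits are shared across many overlapping tuples, so a naive count of ``fresh'' input bits per extension over-counts and would only yield the weaker first-difference (equivalently $\geqC$-type) condition rather than $\geqS$. The crux is to show that the lower-arity input bits cannot be marshalled to increase the number of realizable \emph{top-level} (surjective) output configurations. I expect to accomplish this exactly through the $S_k$-equivariance of quantifier-free transductions: since $S_k$ acts freely on the surjective tuples and the lower-level contributions live on sets fixed by proper subgroups, an equivariant map can produce at most $\sum_i T(a_i,k)$ independent surjective output coordinates, which is the point at which the surjection numbers $T(a,k)$—counting tuples surjecting onto a fixed $k$-set—enter as the correct per-level entropy. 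Pinning down this equivariant rank bound, and confirming that the resulting missing extension pattern is realized in $\struc{\tau,n}$ whp, is the heart of the proof.
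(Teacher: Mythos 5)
Your implications (ii)$\Rightarrow$(i) and (iii)$\Rightarrow$(ii) are correct and essentially the paper's own argument: the paper likewise uses $T(a,k) = \abs{\eqrel{a,k}} \cdot k!$ (equivalence classes with $k$ blocks playing the role of your $S_k$-orbits of surjective tuples), chooses level-wise injections, routes each output atom by the equality pattern of its variables, and verifies injectivity of the routing to conclude exact uniformity. Your first step of (i)$\Rightarrow$(iii) --- replacing the LFP transduction, with probability $1-\negl(n)$, by one fixed quantifier-free transduction $\hat\Theta$ --- is also the paper's route (its Lemma \ref{lem01LawFormula}, proved there by syntactic rewriting into DNF plus the zero-one law for sentences rather than by pebble games, but to the same effect).

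The genuine gap is the core of (i)$\Rightarrow$(iii), which you yourself flag as unproved: the ``equivariant rank bound'' meant to show that $\hat\Theta(\mathbb{A})$ violates an extension axiom at the violating level $k$. Two concrete problems. First, equivariance is not the operative constraint. Take $\sigma$ with one binary relation and $\tau$ with two unary relations, where $\geqS$ fails at $k=1$: the acting group $S_1$ is trivial, so equivariance says nothing, yet the impossibility holds --- what bites is \emph{support-locality}: a quantifier-free $\theta'_j(x)$ can only read the single atom $R(x,x)$, one bit, too few for two independent unary outputs. Raw counting over an extension configuration is also genuinely insufficient here: the input atoms involving the extension vertex over a $k$-set number $2k+1 \geq 3$, exceeding the $2$ required output bits, so no pigeonhole on extension behaviors closes the argument without tracking which supports each output atom can see. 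Second, even granting locality, a distinguisher pegged to a single extension over a size-$k$ set cannot separate levels: output atoms of support size $k$ containing the extension vertex may read shared input atoms of smaller support, and since the extension configuration has fixed size, the ratio of level-$k$ to level-$(k-1)$ supports is constant, so a surplus of $\sigma$ at level $k-1$ can swamp the deficit at level $k$ --- exactly the over-counting you worried would only recover a $\geqC$-type condition. The paper resolves this by amplification rather than equivariance: it considers $(c,k)$-types of $c$-tuples (Definition \ref{defCKType}) for a large constant $c$. Quantifier-freeness gives type-preservation, and the cardinality count
\begin{equation*}
    N_\sigma = 2^{\sum_{k'=1}^{k}\sum_{i=1}^{t} T(a_i,k'){c \choose k'}} < 2^{\sum_{k'=1}^{k}\sum_{i=1}^{t'} T(a'_i,k'){c \choose k'}} = N_\tau
\end{equation*}
holds for large $c$ because ${c \choose k}$ dominates all ${c \choose k'}$ with $k' < k$, making the level-$k$ deficit decisive regardless of lower-level surpluses (and capping types at level $k$ neutralizes any higher-level surplus of $\sigma$). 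Hence for \emph{every} $\mathbb{A}$ some $(c,k)$-type is missing in $\hat\Theta(\mathbb{A})$, while a random $\tau$-structure realizes all of them with probability $1-\negl(n)$; the FO sentence asserting that every $(c,k)$-type is realized is the distinguisher (Lemma \ref{lemKType}). To repair your proof you would need either to establish your rank bound for the fixed extension configuration --- which the counting above suggests is false as stated --- or to switch to this large-$c$ type-counting, at which point extension axioms and the $S_k$-action drop out of the lower bound entirely.
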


To see why this fills in the FO and LFP rows of Table \ref{tabgeneral}, first suppose $\sigma \geqS \tau$. Then by \ref{itmPRGImpossibilityStatistical}, there is a FO-transduction (and thus an LFP-transduction) $\Theta$ generating $\tau$-structures statistically identical to random $\tau$-structures. Thus, not even a computationally unbounded adversary could distinguish the two cases, so $\Theta$ is secure against any adversary logic, filling in the top two rows in the affirmative. On the other hand, if we do not have $\sigma \geqS \tau$, then by \ref{itmPRGImpossibilityNormal}, even a generator defined in LFP cannot fool every adversary in FO, and the same holds for weaker generators and stronger adversaries, filling in the top two rows in the negative.

This theorem can be thought of as an impossibility theorem for LFP. The equivalence of \ref{itmPRGImpossibilityNormal} and \ref{itmPRGImpossibilityStatistical} means that there are no ``novel'' $(\textnormal{LFP}, \textnormal{FO})$-pseudorandom generators, in the sense that any LFP-definable transduction that is secure even against the weak logic FO can be replaced by an extremely simple alternative transduction in which all formulas are FO and do not contain quantifiers.
And this simple transduction is ``as secure as it gets,'' in the sense that the output distribution is not merely computationally indistinguishable from the truly-random distribution, 
but \emph{is} the truly-random distribution.
Thus, all the additional power of LFP is irrelevant for generating pseudorandom structures in this context.

As we previously remarked, the notion of entropy $\geqS$ for characterizing pseudorandom generators in these logics is only a partial order. For instance, if $\sigma$ consists of one binary relation and $\tau$ consists of two unary relations, Theorem \ref{thmPRGImpossibility} implies that there are no pseudorandom generators either from $\sigma$ to $\tau$ or from $\tau$ to $\sigma$, since
$$\sum_{i = 1}^t T(a_i, 2) = 2 > 0 = \sum_{i = 1}^{t'} T(a'_i, 2)$$
but
$$\sum_{i = 1}^t T(a_i, 1) = 1 < 2 = \sum_{i = 1}^{t'} T(a'_i, 1).$$

The full proof is long and technical, and thus deferred to Appendix \ref{appPRGImpossibilityProof}. Here we sketch the argument for all 3 cases.

\ref{itmPRGImpossibilityStatistical}$\implies$\ref{itmPRGImpossibilityNormal}: This is immediate, since a transduction $\Theta$ with the properties in \ref{itmPRGImpossibilityStatistical} is an $(L_1, L_2)$-pseudorandom generator for any $L_1$ and $L_2$ such that $L_1$ extends quantifier-free FO.

\ref{itmPRGImpossibilityFormula}$\implies$\ref{itmPRGImpossibilityStatistical}:
Consider the following example. Suppose 
we want to build a transduction from an input signature \(\sigma\) consisting
of three relations of arities $a_1 = 3$, $a_2 = 1$, $a_3 = 1$,
to an output signature \(\tau\) with also three relations, but of arities
$a'_1 = a'_2 = a'_3 = 2$. 
If $\sigma$ had a binary relation, we could use it to directly fill one of the three relations in $\tau$; but since we do not, we must use specializations of the ternary relation of the form $R_1(x, x, y)$, $R_1(x, y, x)$, and $R_1(y, x, x)$. For $x \neq y$, these tuples will each be in the relation $R_1$ independently of one another, so can each be used to fill one of the 3 binary relations in $\tau$. However, this accounting only takes care of ``binary parts'' of the relations, where there are two distinct variables involved. To determine which relations $R'_i(x, x)$ hold, we may only use $R_1(x, x, x)$ once, so have to use the unary relations $R_2(x)$ and $R_3(x)$ for the other two. Thus, by treating the binary and unary parts separately, we arrive at a quantifier-free FO transduction $\Theta = (\theta_1, \theta_2, \theta_3)$ that produces uniformly random $\tau$-structures from uniformly random $\sigma$-structures:
\begin{align*}
    \theta_1(x_1, x_2) :=\ & R_1(x_1, x_1, x_2)\\
    \theta_2(x_1, x_2) :=\ & \bigl((x_1 \neq x_2) \wedge R_1(x_1, x_2, x_1)\bigr) \vee \bigl((x_1 = x_2) \wedge R_2(x_1)\bigr)\\
    \theta_3(x_1, x_2) :=\ & \bigl((x_1 \neq x_2) \wedge R_1(x_1, x_2, x_2)\bigr) \vee \bigl((x_1 = x_2) \wedge R_3(x_1)\bigr).
\end{align*}
More generally, the condition that $\sigma \geqS \tau$ amounts to checking that there are enough $k$-ary parts of relations in $\sigma$ to cover the $k$-ary parts of the relations in $\tau$, for each possible number of distinct variables $k$.

$\neg$\ref{itmPRGImpossibilityFormula}$\implies$$\neg$\ref{itmPRGImpossibilityNormal}: This is the most difficult part of the argument, where we must use the zero-one law to argue that, no matter what the LFP transduction $\Theta = (\seq{\theta}{t'})$ does, we can construct an FO sentence $\phi$ that distinguishes its outputs from truly random $\tau$-structures. In order to do this, we must manipulate each formula $\theta_i$ to put it into a form where the zero-one law applies. For example, suppose one of the formulas is
$$\theta_i(x_1, x_2, x_3) = \forall x_4\ R_1(x_3, x_4).$$
By distinguishing whether or not \(x_4 \in \{ x_1,x_2,x_3\}\), we first equivalently rewrite this sentence as
\begin{equation*}
    \theta_i(x_1, x_2, x_3) = R_1(x_3, x_1) \wedge R_1(x_3, x_2) \wedge R_1(x_3, x_3) \wedge \forall x_4 \notin \{x_1, x_2, x_3\}\ R_1(x_3, x_4).
\end{equation*}
We then observe that $R_1(x_3, \cdot)$ can be thought of as a random unary relation on the structure obtained by removing $x_1$, $x_2$, and $x_3$, so by the zero-one law applied to this structure, we know that the final part of this sentence,
$$\forall x_4 \notin \{x_1, x_2, x_3\}\ R_1(x_3, x_4),$$
can be replaced by either TRUE or FALSE (in this case FALSE) and the resulting formula will be equivalent with probability $1 - \negl(n)$. In this manner, we may remove all quantifiers from the formula. Taking a union bound over the polynomially-many negligible error terms, we conclude that the resulting quantifier-free formula $\overline{\theta}_i$ will be equivalent with probability $1 - \negl(n)$.

Finally, we let $k$ be a positive integer violating the condition that $\sigma \geqS \tau$ and choose a large (but constant) positive integer $c$. We write a sentence $\phi$ that checks the following condition: ``For every way that a $c$-tuple $(\seq{y}{c})$ of $c$ distinct elements of the universe \emph{could} behave with respect to relations using at most $k$ distinct variables exclusively from $\{\seq{y}{c}\}$, there is in fact a $c$-tuple that \emph{does} behave in that way.'' Clearly, a truly random $\tau$-structure will satisfy $\phi$ with probability $1 - \negl(n)$. On the other hand, we show that, for large enough $c$, $\overline{\Theta} = (\seq{\overline{\theta}}{t'})$ produces structures satisfying $\phi$ with probability 0. This step is formally accomplished by the following Lemma, which is proved in Appendix \ref{appKTypeProof} and is also used later to show impossibilities for \LFPpar when $\sigma$ has only unary relations.

\begin{definition}\label{defCKType}
    Given integers $1 \leq k \leq c$, a relational signature $\sigma = \langle \seq{R}{t} \rangle$ with arities $\seq{a}{t}$, a $\sigma$-structure $\mathbb{A}$ and distinct elements $\seq{y}{c} \in \mathbb{A}$, the \emph{$(c, k)$-type of $(\seq{y}{c})$ in $\mathbb{A}$} is the subset
    \begin{equation*}\label{equCKTypeSet}
        T\subseteq \bigcup_{i = 1}^t \bigcup_{\substack{S \subseteq [c]\\\abs{S} \leq k}} \{(i, S, g) \suchthat g: [a_i] \to S \txt{ is surjective}\}
        %T\subseteq \bigsqcup_{i = 1}^t
        %~~
        %\bigsqcup_{\substack{S \subseteq [c]\\\abs{S} \leq k}}
        %~~
        %\bigsqcup_{\substack{g: [a_i] \to S \\ g \txt{ is surjective}}}
        %\{(i,S,g)\}
    \end{equation*}
    such that \(T\) describes (by indexing $(\seq{y}{c})$) exactly which relations in $\mathbb{A}$ hold on tuples containing at most $k$ distinct elements from $(\seq{y}{c})$. In other words, $(i, S, g) \in T$ means $(y_{g(1)}, y_{g(2)}, \dots, y_{g(a_i)}) \in R_i^\mathbb{A}$.
\end{definition}

\begin{restatable}{lemma}{lemKType}\label{lemKType}
%\begin{lemma}\label{lemKType}
    Suppose $\sigma$ and $\tau$ are relational signatures for which some positive integer $k$ violates the condition that ${\sigma \geqS \tau}$. Let $f: \struc{\sigma} \to \struc{\tau}$ be a function such that, for any positive integer $c \geq k$ and any $\sigma$-structure $\mathbb{A}$, if the $(c, k)$-types of any pair of $c$-tuples $(\seq{y}{c}), (y'_1, y'_2, \dots, y'_c) \in \mathbb{A}^c$ are the same in $\mathbb{A}$ then their $(c, k)$-types are the same in $f(\mathbb{A})$. Then the distribution obtained by applying $f$ to a uniformly random $\sigma$-structure is not pseudorandom for \textnormal{FO}.
%\end{lemma}
\end{restatable}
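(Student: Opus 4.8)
The plan is to produce a single FO sentence $\phi$ that a truly random $\tau$-structure satisfies with probability $1 - \negl(n)$, but that $f(\mathbb{A})$ satisfies with probability exactly $0$ for \emph{every} $\sigma$-structure $\mathbb{A}$; the gap $1 - \negl(n)$ then witnesses that $f$'s output is not pseudorandom for FO (no advice is needed). The sentence will simply assert that every possible output $(c,k)$-type is realized by some $c$-tuple of distinct elements. The key structural fact is that, by the type-preservation hypothesis, the $(c,k)$-type of a $c$-tuple in $f(\mathbb{A})$ is a function of its $(c,k)$-type in $\mathbb{A}$; hence the number of distinct output types occurring in $f(\mathbb{A})$ can never exceed the number of possible \emph{input} types. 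Choosing $c$ large enough will make this strictly smaller than the number of possible output types, so some output type is always missing.

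First I would pin down the counting. For a fixed $c$, let $N_\sigma$ (resp.\ $N_\tau$) be the number of index triples $(i,S,g)$ describing the $(c,k)$-type of a $c$-tuple over $\sigma$ (resp.\ $\tau$), as in \Cref{defCKType}. Grouping by $j = |S|$ and using that $T(a,j)$ counts surjections from an $a$-set onto a $j$-set, one obtains
\begin{equation*}
    N_\sigma = \sum_{j=1}^{k} \binom{c}{j}\sum_{i=1}^{t} T(a_i, j), \qquad N_\tau = \sum_{j=1}^{k} \binom{c}{j}\sum_{i=1}^{t'} T(a'_i, j).
\end{equation*}
Since $k$ violates $\sigma \geqS \tau$, the degree-$k$ coefficients satisfy $\sum_i T(a'_i,k) > \sum_i T(a_i,k)$, so $N_\tau - N_\sigma$ is a polynomial in $c$ whose leading term $\binom{c}{k}\bigl(\sum_i T(a'_i,k) - \sum_i T(a_i,k)\bigr)$ is positive of degree $k$ and dominates the lower-order terms $j < k$ regardless of their signs. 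Thus I may fix a constant $c \geq k$ with $N_\tau > N_\sigma$, so that $2^{N_\tau} > 2^{N_\sigma}$.

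Next I would build $\phi$. Each subset $P$ of the $N_\tau$-element index set is a possible output $(c,k)$-type, and ``some $c$-tuple of distinct elements realizes $P$'' is expressed by the FO formula $\psi_P := \exists y_1 \cdots \exists y_c \bigl(\bigwedge_{p\neq q} y_p \neq y_q \wedge \chi_P\bigr)$, where $\chi_P$ is the finite conjunction of literals recording exactly which atomic facts $R'_i(y_{g(1)},\dots,y_{g(a'_i)})$ hold according to $P$. Then $\phi := \bigwedge_{P} \psi_P$ is a conjunction over the constantly-many $2^{N_\tau}$ types, hence an FO $\tau$-sentence. For the random side, I partition the universe into $\lfloor n/c\rfloor$ disjoint blocks of size $c$; since the $(c,k)$-type of a tuple depends only on relations internal to it, distinct blocks carry independent types, each realizing a fixed $P$ with the fixed positive probability $2^{-N_\tau}$. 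The probability that some $P$ is realized by no block is at most $2^{N_\tau}(1 - 2^{-N_\tau})^{\lfloor n/c\rfloor} = \negl(n)$, so $\Pr_{\mathbb{B}\sim\struc{\tau,n}}[\mathbb{B}\models\phi] = 1 - \negl(n)$. For the generated side, the type-preservation hypothesis makes at most $2^{N_\sigma} < 2^{N_\tau}$ distinct output types occur in $f(\mathbb{A})$, so some $P$ is absent and $f(\mathbb{A}) \not\models \phi$, for every $\mathbb{A}$. Hence $\phi$ separates the distributions with advantage $1 - \negl(n)$.

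The main obstacle I anticipate is the counting step: one must correctly identify the index set of the $(c,k)$-type with the surjection-weighted sum above, and then verify that a violation of $\sigma \geqS \tau$ at the \emph{single} value $k$ forces $N_\tau > N_\sigma$ for large $c$. This hinges entirely on the $j=k$ contribution (weighted by $\binom{c}{k}$) asymptotically dominating the lower-order terms $j < k$, whose differences may be negative but are bounded independently of $c$. The remaining ingredients---the FO-expressibility of $\phi$, the independence of types across disjoint blocks (and hence the union bound of Claim~1), and the pigeonhole bounding the realized output types of $f(\mathbb{A})$---are routine once the counting is in place.
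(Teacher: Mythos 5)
Your proposal is correct and follows essentially the same route as the paper's proof: the same distinguishing sentence asserting that every $(c,k)$-type is realized by a tuple of distinct elements, the same count $\sum_{j=1}^{k}\binom{c}{j}\sum_i T(\cdot,j)$ of type indices with the $\binom{c}{k}$ term dominating for large $c$, and the same pigeonhole argument that type-preservation caps the realized output types at $2^{N_\sigma} < 2^{N_\tau}$. Your only departure is cosmetic: you spell out the $1-\negl(n)$ bound for the random side via independent disjoint blocks, where the paper simply cites its elementary probability observation.
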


We may apply this lemma to $f := \overline{\Theta}$ since it is quantifier-free, which yields that $\overline{\Theta}$ is not a pseudorandom generator. Since the original transduction $\Theta$ differs from $\overline{\Theta}$ only negligibly, it follows that $\Theta$ is not a pseudorandom generator either.

\subsection{Possibility for \LFPpar}\label{subPRGPossibility}

Adding in a parity operator allows us to circumvent the zero-one law so that pseudorandom transductions are plausible. In the presence of at least one non-unary relation, we can canonize an input structure and thus define a pseudorandom output structure for FO and LFP using the polynomial-time construction from \Cref{thm:RadoGenerator}.

\begin{corollary}\label{corFPCToLFPPRG}
    For $L_2 \in \{\textnormal{FO}, \textnormal{LFP}\}$, there exists an $(\textnormal{\LFPpar}, L_2)$-pseudorandom generator from $\sigma$ to $\tau$ if and only if $\sigma$ has at least one non-unary relation or $\sigma$ and $\tau$ both have only unary relations, with $\sigma$ containing at least as many as \(\tau\).
\end{corollary}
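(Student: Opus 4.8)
The plan is to prove both directions of the equivalence, handling the two halves of the sufficient condition separately in the positive direction and then dispatching the converse with a single uniform argument built on \Cref{lemKType}.

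For the positive direction, I would treat two cases. Suppose first that $\sigma$ has a relation $R$ of arity $r \ge 2$. I would apply \Cref{lemRandomCanonization} to $R$ to obtain an \LFPpar transduction $\Gamma$ that, with probability $1 - \negl(n)$, equips the universe with a total order $\le$. Conditioned on this order, \Cref{thmImmermanVardi} lets me implement the deterministic polynomial-time construction of \Cref{thm:radostructure} as an LFP formula over the ordered structure, producing a $\tau$-structure that satisfies the extension axioms $\EA^\tau_k$ for $k = \lfloor \log\log(n)/c \rfloor$. Composing $\Gamma$ with this construction (and discarding the auxiliary symbols) gives an \LFPpar transduction into $\tau$. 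By \Cref{lem:EA-probs} a random $\tau$-structure also satisfies $\EA^\tau_k$ with probability $1 - \negl(n)$, and by \Cref{lem:EA-determine-qtype} the extension axioms pin down the truth value of every FO or LFP sentence of quantifier rank at most $k$, even in the presence of advice. Since $k \to \infty$, any fixed adversary sentence is eventually captured, so the two distributions are indistinguishable up to the $\negl(n)$ event that $\Gamma$ fails; this is exactly pseudorandomness for $L_2 \in \{\textnormal{FO}, \textnormal{LFP}\}$, and it holds for any target $\tau$. In the remaining positive case $\sigma$ and $\tau$ are both all-unary with $\sigma$ having at least as many relations as $\tau$; here the quantifier-free transduction $\theta_i(x) = R_i(x)$ copies the first $t'$ unary relations and outputs the uniform distribution on $\struc{\tau, n}$ exactly, which is trivially pseudorandom.

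For the converse I would first reduce the adversary: since security against LFP implies security against FO, it suffices to rule out $(\LFPpar, \textnormal{FO})$-pseudorandom generators. When the condition fails, $\sigma$ is all-unary and either (i) $\tau$ contains a relation of arity $\ge 2$, or (ii) $\tau$ is all-unary with strictly more relations than $\sigma$. In case (i) the integer $k = 2$ witnesses $\neg(\sigma \geqS \tau)$, because $T(1,2) = 0$ while the non-unary symbol of $\tau$ contributes a positive surjection count; in case (ii) the integer $k = 1$ works, since $\sum_i T(a_i, 1)$ is just the number of unary relations on each side and $t < t'$. Thus in both cases some $k$ violates $\sigma \geqS \tau$, setting up \Cref{lemKType}.

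The crux is to show that any \LFPpar transduction $\Theta$ from an all-unary $\sigma$ satisfies the $(c,k)$-type-preservation hypothesis of \Cref{lemKType}, even though that lemma was designed for quantifier-free FO maps. The key observation is that an all-unary $\sigma$-structure is nothing more than a coloring of the universe, so the $(c,k)$-type of a tuple of distinct elements records only, for each relation $R_i$ and each position $j$, whether $R_i(y_j)$ holds -- that is, the color of each $y_j$. Hence two $c$-tuples with the same input $(c,k)$-type carry identical color sequences, and the partial map sending one to the other is a color-preserving bijection that extends to a genuine automorphism of $\mathbb{A}$, since each color class contains the same number of marked elements on both sides. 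Because \LFPpar is isomorphism-invariant (the parity operator counts elements, which is preserved under automorphisms), every automorphism of $\mathbb{A}$ is also an automorphism of $\Theta(\mathbb{A})$, so the two tuples receive the same $(c,k)$-type in the output as well. This verifies the hypothesis of \Cref{lemKType} for the violating $k$ found above, and the lemma concludes that $\Theta$ applied to a random $\sigma$-structure is not pseudorandom for FO. I expect this automorphism-invariance step -- recognizing that the rich symmetry of all-unary structures neutralizes the full power of \LFPpar and collapses it onto the purely combinatorial obstruction of \Cref{lemKType} -- to be the main conceptual obstacle; the positive direction and the surjection-count verifications are routine given the earlier results.
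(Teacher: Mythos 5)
Your proof is correct and takes essentially the same approach as the paper: \Cref{lemRandomCanonization} plus \Cref{thmImmermanVardi} and \Cref{thm:RadoGenerator} for the positive cases, and the automorphism-invariance argument (all-unary structures are colorings, color-preserving partial maps on distinct tuples extend to automorphisms, and transductions respect automorphisms) feeding into \Cref{lemKType} for the converse. Your explicit identification of the violating $k$ ($k = 2$ when $\tau$ has a non-unary relation, $k = 1$ when both signatures are all-unary) is in fact slightly more careful than the paper's blanket assertion that $k = 1$ is violated, which fails when an all-unary $\sigma$ has at least as many relations as a non-all-unary $\tau$; since $(c,k)$-types over all-unary signatures carry exactly the same data as $(c,1)$-types, both versions of the argument go through.
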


\begin{proof}
    If $\sigma$ has at least one non-unary relation, it is possible to canonize (meaning define a total order on) the entire input structure with probability $1 - \negl(n)$ by Lemma \ref{lemRandomCanonization}. Assuming this happens, by the Immerman-Vardi Theorem (Theorem \ref{thmImmermanVardi}) we have the full power of polynomial-time computation, so may apply 
    the deterministic algorithm of \Cref{thm:RadoGenerator}.
    %This yields a \(\tau\)-structure satisfying the extension axioms
    %$\EA^\tau_k$ for all \(k \le \lfloor\log(\log(n))/c\rfloor\).
    %By \Cref{thm:RadoGenerator}, FO or LFP adversaries cannot distinguish the output from a random \(\tau\)-structure.
    
    If $\sigma$ and $\tau$ both have only unary relations and $\sigma$ has at least as many as \(\tau\), a pseudorandom transduction obviously exists by simply matching them up and ignoring any extras.

    If neither of these conditions hold, it means that $\sigma \geqS \tau$ is violated for $k = 1$. Let $\Theta: \struc{\sigma} \to \struc{\tau}$ be an arbitrary \LFPpar transduction. For any integer $c$, if two $c$-tuples have the $(c, 1)$-types in a $\sigma$-structure $\mathbb{A}$, it means that the same unary relations hold on the $i\tth$ element of each tuple for each $1 \leq i \leq c$. Since $\sigma$ has only unary relations, it follows that there is an automorphism of $\mathbb{A}$ taking one tuple to the other. Since $\Theta$ is a transduction, it means that there is a similar automorphism in $\Theta(\mathbb{A})$, implying that the two tuples have the same $(c, 1)$-type in $\Theta(\mathbb{A})$. Hence Lemma \ref{lemKType} applies, so $\Theta$ is not a pseudorandom transduction for FO or LFP.
\end{proof}

In light of this theorem, the only remaining box in Table~\ref{tabgeneral} we have not discussed is the bottom-right cell, when both \mbox{logics} are \LFPpar.
The following result completely characterizes when $(\textnormal{\LFPpar},
\textnormal{\LFPpar})$-pseudorandom generators exist, depending on the comparative
entropies of $\sigma$ and $\tau$ (this time measured using $\geqC$, rather than
$\geqS$).

\begin{restatable}{theorem}{thmFPCPRG}\label{thmFPCPRG}
    Let $\sigma$ and $\tau$ be relational signatures.
    \begin{enumerate}[label={(\roman*)}]
        \item\label{itmFPCPRGConditionallyExists} If $\sigma \lC \tau$ and $\sigma$ contains at least one non-unary relation, then $(\textnormal{\LFPpar}, \textnormal{\LFPpar})$-pseudorandom generators from $\sigma$ to $\tau$ exist if and only if length-increasing ordinary pseudorandom generators exist (i.e., Conjecture \ref{cnjOWF} holds).
        \item\label{itmFPCPRGExists} If $\sigma \geqC \tau$, then there exists an $(\textnormal{\LFPpar}, \textnormal{\LFPpar})$-pseudorandom generator from $\sigma$ to $\tau$.
        \item\label{itmFPCPRGDoesNotExist} Otherwise, there does not exist an $(\textnormal{\LFPpar}, \textnormal{\LFPpar})$-pseudorandom generator from $\sigma$ to $\tau$.
    \end{enumerate}
\end{restatable}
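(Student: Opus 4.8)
The plan is to treat the three cases in turn, using that $\geqC$ is a total order, so case~\ref{itmFPCPRGDoesNotExist} is exactly the situation $\sigma \lC \tau$ with $\sigma$ all-unary. For case~\ref{itmFPCPRGExists} I would output something \emph{statistically} $\negl(n)$-close to uniform, defeating adversaries of every power. If the arity multiset of $\tau$ is contained in that of $\sigma$, a quantifier-free FO transduction matching equal-arity symbols and discarding the surplus already yields a uniform $\tau$-structure. Otherwise $\sigma$ strictly dominates, and the first discrepancy in the descending arity-sort (where $\sigma$ is larger) sits at an arity $\geq 2$; thus $\sigma$ owns a non-unary relation there whose entropy $n^{a}$ exceeds the total entropy of \emph{all} strictly-lower-arity relations of $\tau$ by a polynomial factor. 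I would copy the greedily matched equal-arity relations verbatim and, using only a few rows of that dominating unmatched relation, canonize (\Cref{lemRandomCanonization}) to fix a total order while leaving mutually independent—given the order—polynomially more fresh entries than $\tau$'s remaining low-arity relations require; addressing these through the order (LFP-definable by \Cref{thmImmermanVardi}) fills those relations. Since the copied and the freshly-filled relations read disjoint, independent parts of the input, the output is exactly uniform whenever canonization succeeds, hence $\negl(n)$-close overall, and the whole map is in \LFPpar.

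Case~\ref{itmFPCPRGDoesNotExist} extends the final paragraph of \Cref{corFPCToLFPPRG} from $k=1$ to the relevant violating index. When $\sigma$ is all-unary and $\sigma \lC \tau$, some $k$ breaks $\sigma \geqS \tau$: take $k=1$ if $\tau$ is all-unary (so $t<t'$) and $k=2$ otherwise (so $0=\sum_i T(a_i,2)<\sum_j T(a'_j,2)$). Because every relation of $\sigma$ is unary, two $c$-tuples sharing a $(c,k)$-type in $\mathbb A$ share their full unary profile and are therefore swapped by an automorphism of $\mathbb A$; as \LFPpar is isomorphism-invariant (its parity quantifier merely counts), this automorphism carries over to $\Theta(\mathbb A)$, forcing the two tuples to share a $(c,k)$-type there as well. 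Hence every \LFPpar transduction meets the hypothesis of \Cref{lemKType} for this $k$, so its output is not pseudorandom even for FO, and a fortiori not for \LFPpar.

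The substance lies in case~\ref{itmFPCPRGConditionallyExists}, where $\sigma \lC \tau$ and $\sigma$ has a non-unary relation; note this forces $\tau$ to have one too, so an adversary may always canonize the \emph{output}. For the ``if'' direction I assume OWFs, hence ordinary PRGs, and build $\Theta$ by canonizing $\sigma$ for a total order while reserving one untouched row of the canonizing relation as $n$ \emph{mutually} independent seed bits, then stretching this seed to $\sum_j n^{a'_j}$ pseudorandom bits (iterating the base generator through the length-extension construction of \Cref{subCryptoBackground}, which is polynomial time and thus \LFPpar-definable by \Cref{thmImmermanVardi}) and laying them out along the order as the $\tau$-structure. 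Security reduces cleanly to PRG security: any \LFPpar sentence $\phi$, with its advice folded in as nonuniformity, is isomorphism-invariant, and since the seed is independent of the order, the isomorphism type of $\Theta$'s output equals that of the fixed-order structure built from the generator's output; hence $w \mapsto \phi(\text{the }\tau\text{-structure of }w)$ is a nonuniform polynomial-time distinguisher with exactly $\phi$'s advantage.

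The main obstacle is the ``only if'' direction of case~\ref{itmFPCPRGConditionallyExists}. The pivotal observation is that, because the output carries a non-unary relation, an \LFPpar-with-advice adversary can canonize it and then evaluate any nonuniform polynomial-time predicate on the canonical bit-string (the advice encodes the circuit, compared position-wise against the canonical order); thus \LFPpar-security of $\Theta$ is \emph{equivalent} to $\mathrm{canon}(\Theta(\text{random }\sigma))$ being polynomial-time-indistinguishable from $\mathrm{canon}(\text{random }\tau)$. The latter is an efficiently samplable distribution $\mathcal C$, essentially uniform over the $\approx 2^{\sum_j n^{a'_j}-n\log n}$ canonical forms, whereas $\Theta$'s side is generated from only $\sum_i n^{a_i}<\sum_j n^{a'_j}$ random bits. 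When the gap exceeds $n\log n$ this is a genuine pseudoentropy gap and OWFs follow from \cite{OwfToPrg}. The delicate regime is a small gap (e.g.\ $\sigma=\langle R\rangle$ with $R$ binary and $\tau=\langle R',U'\rangle$ with $U'$ unary), where the two canonical-form distributions can have nearly equal entropy, so the pseudoentropy argument gives nothing. There I would instead use that the \emph{labeled} output is entropy-deficient and hence statistically separated from uniform, and invoke the Impagliazzo--Luby dichotomy: absent OWFs, distributional inversion of the polynomial-time transduction lets one estimate the probabilities of isomorphism types and thereby convert that statistical separation into an isomorphism-invariant—hence \LFPpar—distinguisher, contradicting security. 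Making this inversion-based distinguisher simultaneously efficient and isomorphism-invariant, while reconciling the labeled and isomorphism-type entropies through the orbit-size corrections, is the technically hardest point of the whole theorem.
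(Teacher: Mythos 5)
Your parts \ref{itmFPCPRGExists} and \ref{itmFPCPRGDoesNotExist}, and the backward direction of \ref{itmFPCPRGConditionallyExists}, match the paper's proof: canonize via \Cref{lemRandomCanonization}, stretch a reserved seed with an ordinary PRG and lay it out along the order; and, for the all-unary case, the automorphism argument feeding \Cref{lemKType} (your choice of the violating $k \in \{1,2\}$ is in fact more careful than the literal text of \Cref{corFPCToLFPPRG}, which claims a violation at $k=1$). The genuine gap is the forward (``only if'') direction of \ref{itmFPCPRGConditionallyExists} in your ``small gap'' regime. That regime is exactly the case the paper singles out as the hardest --- $k=1$, where $\tau$ equals $\sigma$ plus one extra unary relation $U$ --- and resolves with its novel length-extension transduction: iterate $\Theta$ a total of $p(n)+1$ times, output the parity of $\abs{U^{\mathbb{B}_i}}$ at each iteration, and prove security by a hybrid argument, encoding a good prefix $r^*$ (averaging principle) and a total order into the advice structure so the iteration is \LFPpar-definable. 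Your proposal replaces this with an unexecuted plan (``distributional inversion \ldots\ estimate the probabilities of isomorphism types'') and you concede you cannot complete it; as written that is a hole, not a proof. Moreover, your citation of the H\r{a}stad--Impagliazzo--Levin--Luby result runs the wrong way for the large-gap regime: what you need there is that two efficiently samplable, computationally indistinguishable, yet statistically far ensembles imply OWFs (Goldreich's note, via Impagliazzo--Luby distributional inversion), not OWF$\implies$PRG.

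The irony is that your approach is fixable by tighter counting, which would dissolve your case split entirely. Since an \LFPpar transduction commutes with isomorphisms, the isomorphism type of $\Theta(\mathbb{A})$ is determined by that of $\mathbb{A}$, so $\mathrm{canon}(\Theta(\mathbb{A}))$ for $\mathbb{A} \sim \struc{\sigma,n}$ is supported on at most $(1+o(1))\,2^{p(n)}/n!$ canonical forms, while $\mathrm{canon}(\mathbb{B})$ for $\mathbb{B} \sim \struc{\tau,n}$ assigns each canonical form mass at most $n!/2^{p'(n)}$. The $n!$ orbit corrections therefore \emph{cancel}: the two ensembles have statistical distance at least $1 - (1+o(1))2^{-(p'(n)-p(n))} - \negl(n) = 1 - \negl(n)$ whenever $\sigma \lC \tau$, with no $n\log n$ threshold and no ``delicate regime.'' Combined with your (correct) pivotal observation that any nonuniform polynomial-time test on canonical forms lifts to an \LFPpar adversary with advice --- which also disposes of your worry about making the inversion-based distinguisher isomorphism-invariant --- Goldreich's criterion would then yield OWFs uniformly in both regimes, giving a genuinely different and arguably more unified argument than the paper's split (direct length-increasing extraction for $k \geq 2$; length extension for $k = 1$). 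But none of this appears in your write-up, so as submitted the only-if direction of \ref{itmFPCPRGConditionallyExists} is incomplete precisely where the theorem's main difficulty lies. One minor point elsewhere: in your backward-direction reduction the distinguisher's advantage equals $\phi$'s only up to the negligible canonization-failure probability, not ``exactly.''
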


The most difficult and interesting statement is \ref{itmFPCPRGConditionallyExists}, and the proof involves converting back and forth between strings and relational structures in a way that is definable in \LFPpar. It turns out that the backward direction of the proof, which involves building a pseudorandom transduction from an ordinary pseudorandom generator, is relatively straightforward. However, the forward direction presents multiple challenges. To build an ordinary pseudorandom generator out of a pseudorandom transduction $\Theta$, the basic idea is to use the random input bits to construct a structure $\mathbb{A}$, simulate applying $\Theta$ to it, then read out the relations to generate more random bits than we started with. To prove security, we must argue that an ordinary adversary distinguishing random from pseudorandom strings can be turned into a logical adversary distinguishing random from pseudorandom structures. To define such a logical adversary, it is necessary for the post-processing step of reading out the random bits from the relations to be definable in \LFPpar in an order-invariant way, i.e., \emph{without} referring to the original arbitrary order of the elements of the universe. When $\tau$ contains an extra $k$-ary relation for $k \geq 2$ we may achieve this by using Lemma \ref{lemRandomCanonization} to do the post-processing step. Thus, the only remaining case is when $\tau$ contains the same number of $k$-ary relations as $\sigma$ for all $k \geq 2$ but has more unary relations. This problem is by far the most challenging, and we tackle it by employing a novel adaptation of the length-extension theorem from cryptography (reviewed in Section \ref{subCryptoBackground}) to our domain. This involves writing an \LFPpar formula to iterate $\Theta$ a polynomial number of times, extracting a single random parity bit from the extra unary relation in $\tau$ on each iteration, as illustrated in Figure \ref{figLengthExtension}.

The only place where we use the parity operator is in applying Lemma \ref{lemRandomCanonization} (multiple times). The proof goes through exactly the same if we were to replace \LFPpar with the extensively-studied \emph{fixed point logic with counting} (FPC).

\begin{proof}[Proof of Theorem \ref{thmFPCPRG}]
    \ref{itmFPCPRGConditionallyExists}: We define
    \begin{align*}
        p(n) &:= \sum_{i = 1}^{t} n^{a_{i}}, & p'(n) &:= \sum_{i = 1}^{t'} n^{a'_{i}}.
    \end{align*}
    For the backward direction, suppose ordinary pseudorandom generators exist. Let $R$ be a relation in $\sigma$ of arity at least 2 and let $f: \{0, 1\}^n \to \{0, 1\}^{p'(n)}$ be a pseudorandom generator. Consider the \LFPpar transduction $\Theta$ described by the following algorithm. On input $\mathbb{A}$, we first apply the transduction from Lemma \ref{lemRandomCanonization} using $R$ to obtain (with probability $1 - \negl(n)$) a total order over the universe $\leq$ and at least $\Omega(n^2)$ bits of independent randomness from $R$ at locations described by relation $Q$. We then simulate $f$ (via Theorem \ref{thmImmermanVardi}) on the first $n$ bits of randomness according to $\leq$ to obtain $p'(n)$ bits of pseudorandomness, which we then use to fill all the relations in $\tau$, again according to the order $\leq$. It is clear that this can all be accomplished as an \LFPpar transduction.
    
    Suppose toward a contradiction that some \LFPpar adversary $\phi$ could break the security of $\Theta$. Then consider the algorithm $A$ that, on input $t \in \{0, 1\}^{p'(n)}$, generates a $\tau$-structure $\mathbb{B}$ using $t$ to determine each relation, then evaluates $\phi$ and outputs 1 if and only if $\mathbb{B} \models \phi$. Clearly, when $t$ is a uniformly random string, $A$ outputs 1 with the same probability that $\mathbb{B} \models \phi$ for a random $\mathbb{B} \sim \struc{\tau, n}$. 
    On the other hand, feeding in $t = f(s)$ for a uniformly random $s \sim \{0, 1\}^n$ is equivalent to attempting to apply Lemma \ref{lemRandomCanonization} over and over until it succeeds in ordering the structure and producing $n$ ordered bits of independent randomness, then applying $f$ to the result, as depicted in Figure \ref{figBackwardDirection}. On a random structure $\mathbb{A}$ that can be successfully ordered, $A$ will then output 1 if and only if $\Theta(\mathbb{A}) \models \phi$. Thus, the probability $A$ outputs 1 in the case where $t = f(s)$ is precisely the probability that $\Theta(\mathbb{A}) \models \phi$ for a random $\mathbb{A} \sim \struc{\sigma, n}$, but conditioned on the event that $\mathbb{A}$ was successfully ordered, which we denote as sampling $\mathbb{A} \sim \ostruc{\sigma, n}$. By Lemma \ref{lemRandomCanonization}, this is only different from the unconditional $\Pr_{\mathbb{A} \sim \struc{\sigma, n}}[\Theta(\mathbb{A}) \models \phi]$ by a negligible amount. By the triangle inequality, we have that
    \begin{align*}
        \abs{\Pr_{t \sim \{0, 1\}^{p'(n)}}[A(t) = 1] - \Pr_{s \sim \{0, 1\}^{n}}[A(f(s)) = 1]} &= \abs{\Pr_{\mathbb{B} \sim \struc{\tau, n}}[\mathbb{B} \models \phi] - \Pr_{\mathbb{A} \sim \ostruc{\sigma, n}}[\Theta(\mathbb{A}) \models \phi]}\\
        &\geq \abs{\Pr_{\mathbb{B} \sim \struc{\tau, n}}[\mathbb{B} \models \phi] - \Pr_{\mathbb{A} \sim \ostruc{\sigma, n}}[\Theta(\mathbb{A}) \models \phi]}\\
        &\ \ \ - \abs{\Pr_{\mathbb{A} \sim \ostruc{\sigma, n}}[\Theta(\mathbb{A}) \models \phi] - \Pr_{\mathbb{A} \sim \struc{\sigma, n}}[\Theta(\mathbb{A}) \models \phi]},
    \end{align*}
    which is non-negligible, since the first term is non-negligible and the second term is negligible. Thus, $A$ breaks the security of $f$, which is a contradiction.

    \begin{figure}
        \begin{center}
            \includegraphics[scale=0.88]{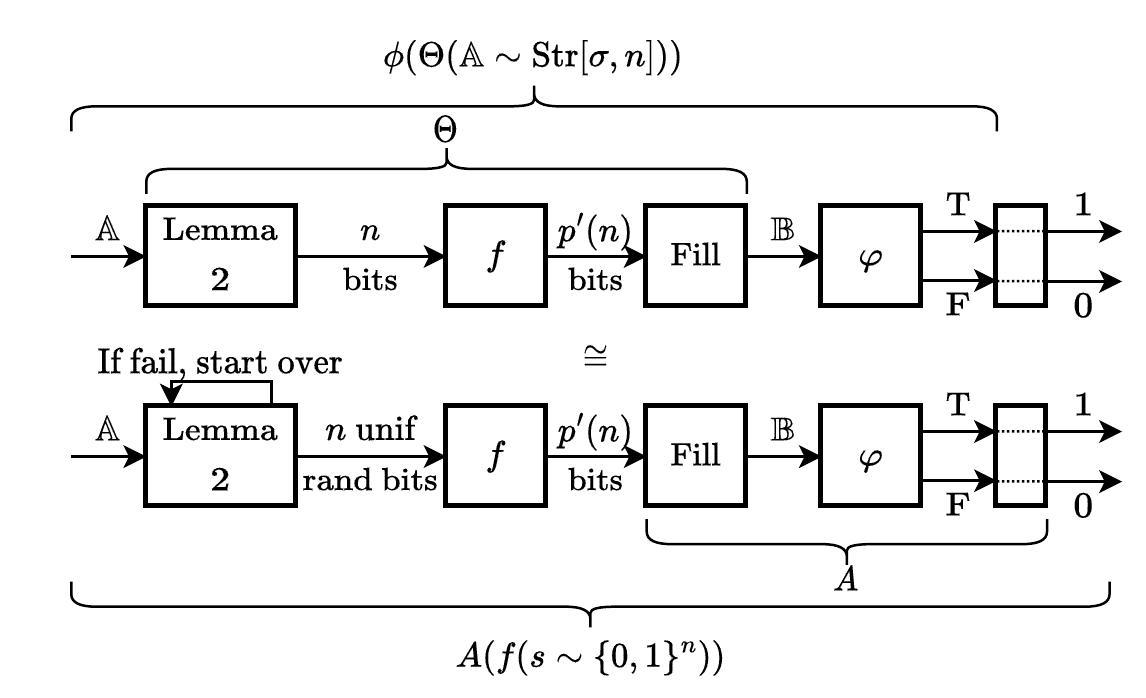}
        \end{center}
        \caption{Illustration accompanying the backward direction of the proof of Theorem \ref{thmFPCPRG} \ref{itmFPCPRGConditionallyExists}, where $\cong$ denotes the fact that the two distributions over outputs in $\{0, 1\}$ are only negligibly different, as Lemma \ref{lemRandomCanonization} fails with negligible probability.}
        \label{figBackwardDirection}
    \end{figure}
    
    For the forward direction, suppose $\Theta: \struc{\sigma} \to \struc{\tau}$ is an $(\textnormal{\LFPpar}, \textnormal{\LFPpar})$-pseudorandom generator. To define an ordinary pseudorandom generator, there are two cases to consider. Let $k$ be the largest arity such that $\sigma$ and $\tau$ have a different number of relations of arity $k$, and first consider the case where $k \geq 2$. Let $g(n) = \Omega(n^k)$ be the function from \Cref{lemRandomCanonization}, and let
    $$h(n) := \sum_{k' = k}^\infty \abs{\{i \suchthat a_i = k'\}} \cdot n^{k'}.$$
    From the way $k$ was defined and the fact that $\sigma \lC \tau$, we know that, for all $k' \geq k$, the coefficient of $n^{k'}$ is the same in $h(n)$ and $p'(n)$, while the $n^k$ coefficient is strictly larger in $h(n)$. Hence, for large enough $n$, $p'(n) \leq h(n) - n^k + g(n)$, as $g(n) = \Omega(n^k)$ dominates any remaining degree $< k$ terms coming from relations in $\tau$ of arity $< k$. It thus suffices to define a pseudorandom generator $f: \{0, 1\}^{p(n)} \to \{0, 1\}^{h(n) - n^k + g(n)}$. On input $s \in \{0, 1\}^{p(n)}$, we use all of the random bits to generate a random $\sigma$-structure $\mathbb{A}$. We then run $\Theta$ to obtain a $\tau$-structure $\mathbb{B} = \Theta(\mathbb{A})$, on which we apply Lemma \ref{lemRandomCanonization} using an arbitrary $k$-ary relation $R$ of $\tau$. With probability $1 - \negl(n)$, we obtain a total order over the universe and can extract and output $g(n)$ ordered bits of randomness from $R$ and a further $h(n) - n^k$ ordered bits of randomness from the other relations of arity at least $k$. (If the ordering step fails, we just output some arbitrary string.)
    
    Suppose toward a contradiction that some adversary $A$ could break the security of $f$. Then we describe an \LFPpar adversary $\phi$ to break $\Theta$ via the following algorithm. Given a $\tau$-structure $\mathbb{B}$, we order the structure and extract $h(n) - n^k + g(n)$ bits of randomness as in the last step of $f$, and then return true if and only if $A$ outputs 1 on the resulting string. Clearly, this can be implemented in \LFPpar (by Theorem \ref{thmImmermanVardi}). If $\mathbb{B}$ was sampled randomly from $\struc{\tau, n}$, then, with probability $1 - \negl(n)$, the input fed to $A$ is a uniformly random $p(n)$-bit string. On the other hand, testing whether $\mathbb{B} = \Theta(\mathbb{A}) \models \phi$ for $\mathbb{A} \sim \struc{\sigma, n}$ is equivalent to generating a random $p(n)$-bit string, using it to fill relations in $\mathbb{A}$, applying $\Theta$, applying Lemma \ref{lemRandomCanonization} to extract $h(n) - n^k + g(n)$ bits of randomness, and finally feeding the resulting string into $A$. Executing these steps is precisely equivalent to running $f$ on a uniformly random $p(n)$ bit string and then feeding the result to $A$. Thus, since $A$ breaks the security of $f$, it follows from a similar argument as in the forward direction that $\phi$ breaks the security of $\Theta$, contradicting our assumption.
    
    Finally, consider the case where $k = 1$. In other words, $\sigma$ and $\tau$ have the same numbers of relations of every arity greater than 1, but $\tau$ has strictly more unary relations than~$\sigma$. Without loss of generality, assume $\sigma$ and $\tau$ have the same relation symbols, except that $\tau$ has exactly one more unary relation symbol, $U$ (if there are more than one extra unary relations, we ignore them). Let $f: \{0, 1\}^{p(n)} \to \{0, 1\}^{p(n) + 1}$ be defined as follows. We use the input bits to generate a random $\sigma$-structure $\mathbb{B}_0$, then apply $\Theta$ repeatedly, $p(n) + 1$ times, ignoring the $U$ relation. Let $\mathbb{B}_i$ be the intermediate $\tau$-structure obtained after $i$ iterations. We then output a string $t$ where the $i\tth$ bit of $t$ is equal to $\abs{U^{\mathbb{B}_i}} \bmod 2$, i.e., the parity of the number of elements on which $U$ held on the output of the $i\tth$ iteration (see Figure \ref{figLengthExtension}).

    \begin{figure}
        \begin{center}
            \includegraphics[scale=0.88]{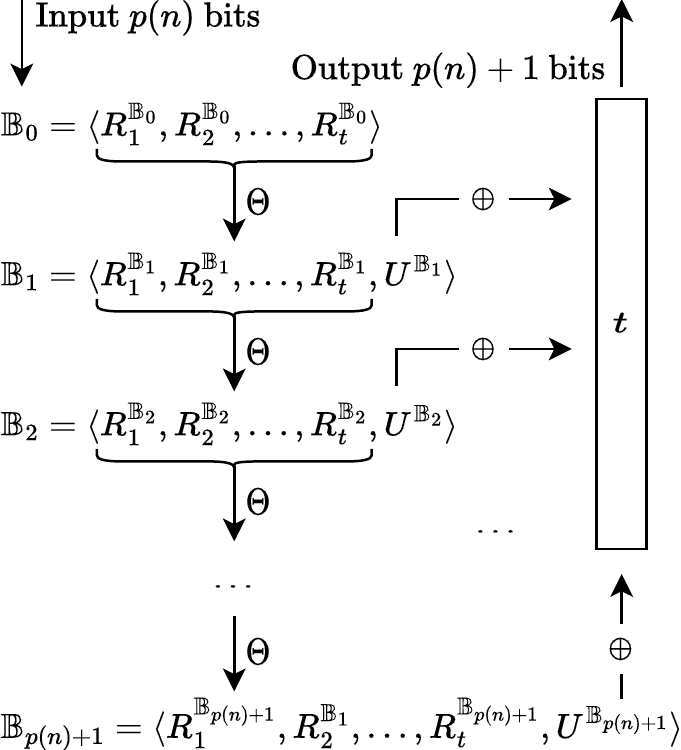}
        \end{center}
        \caption{Illustration of the length-extension construction of an ordinary pseudorandom generator $f: \{0, 1\}^{p(n)} \to \{0, 1\}^{p(n) + 1}$ using a pseudorandom transduction $\Theta$.}
        \label{figLengthExtension}
    \end{figure}
    
    To prove $f$ is secure, we use the hybrid argument. For each $i \in \{0, 1, 2, 3, \dots, p(n) + 1\}$, let $H_i$ be the distribution over ($p(n) + 1$)-bit strings where the first $i$ bits are chosen uniformly at random, while the remaining bits are determined by iterating $\Theta$ as in the definition of $f$, starting from a random $\sigma$-structure. Suppose toward a contradiction that some adversary $A$ could break the security of $f$. That is,
    $$\abs{\Pr_{s \sim \{0, 1\}^{p(n)}}[A(f(s)) = 1] - \Pr_{t \sim \{0, 1\}^{p(n) + 1}}[A(t) = 1]}$$
    is non-negligible. Since $H_0$ is equivalent to the former distribution and $H_{p(n) + 1}$ is equivalent to the latter distribution, it follows from the triangle inequality that, for some $i^* \in \{0, 1, 2, 3, \dots, p(n)\}$,
    $$\abs{\Pr_{y \sim H_{i^*}}[A(y) = 1] - \Pr_{y \sim H_{i^* + 1}}[A(y) = 1]}$$
    is non-negligible (as the sum of any $p(n)$ negligible functions is negligible). We will use this to define an \LFPpar adversary $\phi$ to break $\Theta$. First consider the following randomized algorithm $A'$. Given a $\tau$-structure $\mathbb{B}$, we uniformly sample $r \sim \{0, 1\}^{i^*}$, then iterate $\Theta$, $p(n) - i^*$ times, starting from $\mathbb{B}$, to extract $p(n) + 1 - i^*$ parity bits from the $U$ relation, just as in the definition of $f$ (we include the initial parity of $U^\mathbb{B}$ as well), appending the result to $r$. We then run $A$ on the resulting $p(n) + 1$ bit string and output whatever it does. Observe that, when the input $\mathbb{B}$ is a random $\tau$-structure, the distribution over the input fed to $A$ is precisely $H_{i^* + 1}$, and when $\mathbb{B} = \Theta(\mathbb{A})$ for $\mathbb{A} \sim \struc{\sigma, n}$, the distribution over the input fed to $A$ is precisely $H_{i^*}$. Thus, we have that
    $$\abs{\Pr_{\substack{\mathbb{A} \sim \struc{\sigma, n}\\r \sim \{0, 1\}^{i^*}}}[A'(\Theta(\mathbb{A})) = 1] - \Pr_{\substack{\mathbb{B} \sim \struc{\tau, n}\\r \sim \{0, 1\}^{i^*}}}[A'(\mathbb{B}) = 1]}$$
    is non-negligible. All that remains is to convert $A'$ into an \LFPpar sentence $\phi$, which will then contradict our assumption that $\Theta$ is secure. The main difficulty is that $A'$ is a randomized algorithm. While an \LFPpar-transduction cannot draw random bits for $r$ we can instead encode a good realization $r^*$ into the advice structure and use that. By this, we mean a sample $r \in \{0, 1\}^{i^*}$ such that
    \begin{align*}
        \abs{\Pr_{\mathbb{A} \sim \struc{\sigma, n}}[A'(\Theta(\mathbb{A}), r^*) = 1] - \Pr_{\mathbb{B} \sim \struc{\tau, n}}[A'(\mathbb{B}, r^*) = 1]}\\
        \geq \abs{\Pr_{\substack{\mathbb{A} \sim \struc{\sigma, n}\\r \sim \{0, 1\}^{i^*}}}[A'(\Theta(\mathbb{A})) = 1] - \Pr_{\substack{\mathbb{B} \sim \struc{\tau, n}\\r \sim \{0, 1\}^{i^*}}}[A'(\mathbb{B}) = 1]},
    \end{align*}
    which must exist by the averaging principle. We also encode a total order into the advice structure to allow us to store the outputs of each iteration of $\Theta$ using an inductively defined relation. Thus, we can turn $A'$ into an \LFPpar sentence $\phi$, concluding the proof of case \ref{itmFPCPRGConditionallyExists}.

    \ref{itmFPCPRGExists}: If $(a'_1, a'_2, \dots, a'_{t'})$ is a permutation of $(\seq{a}{t})$, then clearly an $(\txt{\LFPpar}, \txt{\LFPpar})$-pseudorandom generator exists by mapping relations to each other according to that permutation. Otherwise, let $k$ be the largest arity such that $\sigma$ and $\tau$ have a different number of relations of arity $k$. If $k = 1$, then it is again easy to construct a pseudorandom generator from $\sigma$ to $\tau$: Just match up relations by arity and ignore the extra unary relation(s) in $\sigma$. If $k \geq 2$, then matching up the lower arity parts may not be possible (that is, when $\sigma \geqC \tau$ but $\sigma \not\geqS \tau$). Instead, we appeal to Lemma \ref{lemRandomCanonization} again, using a $k$-ary relation to (with probability $1 - \negl(n)$) order the universe and then fill the lower arity relations with the $\Omega(n^k)$ bits of randomness defined by $Q$, in order. Since this yields a uniformly random $\tau$-structure with probability $1 - \negl(n)$, the resulting distribution is clearly pseudorandom for any logic.
    
    \ref{itmFPCPRGDoesNotExist}: In this case $\sigma$ contains only unary relations and $\sigma \lC \tau$, which implies $\sigma \not \geqS \tau$ as $\geqS$ is a refinement of $\geqC$. By Corollary \ref{corFPCToLFPPRG}, there does not even exist an $(\txt{\LFPpar}, \txt{FO})$-pseudorandom generator from $\sigma$ to $\tau$, so clearly there cannot be an $(\txt{\LFPpar}, \txt{\LFPpar})$-pseudorandom generator either.
\end{proof}

\section{First Order Logic with Parity}\label{secFOPlusParity}

The transduction \(\theta^t(x_1,\dots,x_{t}) := \oplus y~ \bigwedge_{i=1}^t E(y,x_i)\)
asks whether there are an odd number of vertices \(y\) that are adjacent to all vertices from \(x_1,\dots,x_t\).
It creates a \(t\)-hypergraph from an ordinary graph.
We argue that FO and LFP cannot distinguish \(\theta^t(G(n,1/2))\) from a real random \(t\)-hypergraph \(G_t(n,1/2)\)
by showing that both satisfy the following extension axioms with high probability.

\begin{definition}\label{def:EA-hypergraph}
    A \(t\)-hypergraph satisfies the extension axioms \(\EA^t_k\) if for
    all sets \(S\) of size \(k\)
    and all \(T \subseteq {S \choose t-1}\)
    there exists \(v \not \in S\) such that
    for every \( \{ s_1,\dots, s_{t-1} \} \in {S \choose t-1}\)
    we have a hyperedge \(\{s_1,\dots,s_{t-1},v\}\)
    if and only if \(\{ s_1,\dots, s_{t-1} \} \in T\).
\end{definition}

\Cref{lem:EA-determine-qtype} states that, also for hypergraphs, the corresponding extension axioms determine the truth of LFP sentences, even in the presence of an advice structure,
and it follows from \Cref{lem:EA-probs} that a random hypergraph satisfies the extension axioms with high probability:
For every \(k\) and \(t\),
\[
    \Pr_{G \sim G_t(1/2,n)}\bigl[ G \not \models \EA^t_k \bigr] = \negl(n).
\]

As the main result of this section, a probabilistic analysis of parities of certain sets in \(G(n,1/2)\) will
reveal that the same also holds in the transduced graph \(\theta^t(G(n,1/2))\).

\begin{restatable}{lemma}{TransductionSatisfiesEM}\label{lem:transductionSatisfiesEM}
    For every \(k\) and \(t\),
    \[
        \Pr_{G \sim G(n,1/2)}\bigl[ \theta^t(G) \not \models \EA^t_k \bigr] = \negl(n).
    \]
\end{restatable}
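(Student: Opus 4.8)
The plan is to fix a target pair $(S,T)$ with $\abs{S} = k$ and $T \subseteq \binom{S}{t-1}$ and show that a valid extension vertex exists except with probability $e^{-\Omega(n)}$; a union bound over the at most $n^k \cdot 2^{\binom{k}{t-1}}$ choices of $(S,T)$ then gives the claim, since $2^{\binom{k}{t-1}}$ is a constant for fixed $k,t$. The heart of the argument is an $\ff_2$-linear-algebra reformulation of the hyperedge condition. Recall that in $\theta^t(G)$ the set $\{s_1,\dots,s_{t-1},v\}$ is a hyperedge if and only if the number of common neighbors of $s_1,\dots,s_{t-1},v$ is odd. So $v \notin S$ is a valid extension vertex for $(S,T)$ precisely when, for every $(t-1)$-subset $R \subseteq S$, the parity of the number of common neighbors of $R \cup \{v\}$ equals $\mathbf{1}[R \in T]$.

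First I would partition $V(G) = S \sqcup C \sqcup W$ with $\abs{C}, \abs{W} = \Theta(n)$, where $C$ supplies independent candidate extension vertices and $W$ supplies ``witnesses,'' and then condition on all edges of $G$ except those between $C$ and $W$. For a candidate $v \in C$ and a $(t-1)$-subset $R$, the common neighbors of $R \cup \{v\}$ that lie in $S \cup C$ depend only on edges within $S \cup C$ and are thus completely determined by the conditioning, contributing a fixed parity offset $\text{off}_R(v)$. The only free randomness is the indicator vector $e_v = (\mathbf{1}[y \sim v])_{y \in W} \in \ff_2^{W}$ of edges from $v$ into $W$, which is uniform and independent across $v \in C$. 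Writing $\chi_R(y) := \mathbf{1}[R \subseteq N(y) \cap S]$ and letting $M$ be the $\binom{k}{t-1} \times \abs{W}$ matrix over $\ff_2$ with entries $M_{R,y} = \chi_R(y)$, the witnesses contribute exactly $M e_v$ to the parities. Hence $v$ is valid if and only if $M e_v = b(v)$, where $b(v)_R := \mathbf{1}[R \in T] \oplus \text{off}_R(v) \in \ff_2^{\binom{k}{t-1}}$ is fixed by the conditioning.

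The key step is to show that $M$ has full row rank with probability $1 - e^{-\Omega(n)}$. The column of $M$ indexed by $y$ is the ``profile'' $(\chi_R(y))_R$, which depends only on $N(y) \cap S$; crucially, if $N(y) \cap S = R_0$ for some $(t-1)$-subset $R_0$, then since $R \subseteq R_0$ forces $R = R_0$ (both have size $t-1$), this profile is exactly the standard basis vector $e_{R_0}$. For each fixed $R_0$ one has $\Pr[N(y) \cap S = R_0] = 2^{-k}$ independently across $y \in W$, so the probability that no witness realizes $R_0$ is at most $(1 - 2^{-k})^{\abs{W}} \leq e^{-\Omega(n)}$; a union bound over the $\binom{k}{t-1}$ subsets $R_0$ shows that, except with probability $e^{-\Omega(n)}$, every $e_{R_0}$ occurs as a column and hence $M$ has full row rank. (This event depends only on the conditioned $S$--$W$ edges.) Whenever $M$ has full row rank its image is all of $\ff_2^{\binom{k}{t-1}}$, so $b(v)$ always lies in the image and $\Pr[M e_v = b(v)] = 2^{-\binom{k}{t-1}}$ for each candidate, with these events independent across $v \in C$ because the $e_v$ use disjoint edges. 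Conditioned on full rank, the probability that no candidate is valid is therefore at most $(1 - 2^{-\binom{k}{t-1}})^{\abs{C}} \leq e^{-\Omega(n)}$.

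Combining, the failure probability for a fixed $(S,T)$ is at most $\Pr[M \text{ not full rank}] + \Pr[\text{no valid } v \mid M \text{ full rank}] = e^{-\Omega(n)}$, and the union bound over all $(S,T)$ keeps this negligible. I expect the main obstacle to be the rank computation of the third step: the rest is a standard probabilistic-method union bound in the style of \Cref{lem:tournament} and \Cref{lem:EA-probs}, but the observation that a witness with $N(y) \cap S = R_0$ produces exactly the basis vector $e_{R_0}$, so that the witness profiles span the entire parity space, is the one genuinely new ingredient needed to handle the parity operator.
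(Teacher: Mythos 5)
Your proof is correct, and it takes a genuinely different route from the paper's. The paper does not attack $\EA^t_k$ directly; it first proves the stronger \Cref{lem:parity2}, which prescribes for a single vertex $v$ the parities of common-neighbor counts for \emph{all} nonempty $C \subseteq S$ simultaneously, via two stages: \Cref{lem:parity1} controls ``exact-neighborhood'' parities (vertices adjacent to all of $\{v\} \cup B$ and non-adjacent to $S \setminus B$) by a sequential edge-revealing process in which designated toggle vertices $v_B$ make each candidate's pattern $\mathcal{B}(v)$ uniform, and then a M\"obius-style conversion --- the upper-set vectors $\{B \mid C \subseteq B \subseteq S\}$ form a basis of the $\ff_2$-space $2^{\{B \mid \emptyset \neq B \subseteq S\}}$, and the map $\mathcal{B}(v) \mapsto \mathcal{C}(v)$ commutes with symmetric difference --- transfers this to the superset parities that define hyperedges. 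You instead condition once on everything except the $C$--$W$ edges and reduce validity of each candidate to the linear system $Me_v = b(v)$ over $\ff_2$, proving full row rank of $M$ from witnesses $y$ with $N(y) \cap S = R_0$ exactly, which contribute standard basis columns. Notably, this is the same ``exact neighborhood'' device that powers \Cref{lem:parity1}, but you use it to span only the $\binom{k}{t-1}$-dimensional space that $\EA^t_k$ actually requires rather than the full $(2^k - 1)$-dimensional pattern space, and your conditioning makes the candidate trials cleanly i.i.d.\ (disjoint free edge sets), avoiding the paper's toggle-vertex bookkeeping and its basis construction entirely. What each buys: your argument is shorter and more modular, isolating the parity phenomenon in a single rank statement; the paper's detour yields the stronger standalone statement of \Cref{lem:parity2} about $G(n,1/2)$ --- which, incidentally, your method could also recover, since the column profiles $(\mathbf{1}[C \subseteq B])_C$ over all $B \subseteq S$ form the zeta matrix of the subset lattice, triangular with unit diagonal and hence invertible over $\ff_2$ --- but that generality is more than \Cref{lem:transductionSatisfiesEM} needs.
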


We prove this lemma in the next subsection.
Then the following theorem immediately follows
from Lemmas \ref{lem:EA-probs}, \ref{lem:EA-determine-qtype}, and \ref{lem:transductionSatisfiesEM}.

\begin{theorem}\label{thm:ParityGenerator}
    For every \(t\), \(\theta^t\) is an $(\textnormal{\FOpar},\textnormal{\textnormal{LFP}})$-pseudorandom generator from graphs to \(t\)-hypergraphs.
\end{theorem}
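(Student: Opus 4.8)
The plan is to deduce \Cref{thm:ParityGenerator} directly from the three ingredients already assembled: random $t$-hypergraphs satisfy the extension axioms with high probability (\Cref{lem:EA-probs}), the transduced hypergraphs do too (\Cref{lem:transductionSatisfiesEM}), and the extension axioms pin down the truth of LFP sentences even in the presence of advice (\Cref{lem:EA-determine-qtype}). First I would observe that $\theta^t$ is by inspection an \FOpar transduction: its single defining formula $\theta^t(x_1,\dots,x_t) = \oplus y~\bigwedge_{i=1}^t E(y,x_i)$ is a first-order formula guarded by one parity quantifier. It therefore remains only to show that the output family $\mathcal{D}_n := \theta^t(G(n,1/2))$ is pseudorandom for LFP, with the natural reference distribution being the genuine random $t$-hypergraph $G_t(n,1/2)$.

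To that end, fix an advice signature $\rho$, a sequence of $\rho$-structures $(\mathbb{X}_n)_{n\in\N}$ with $\abs{\mathbb{X}_n} = n$, and a $(\tau\cup\rho)$-LFP-sentence $\phi$; let $k$ be its quantifier rank. The key reduction is that, for each $n$, \Cref{lem:EA-determine-qtype} guarantees that every $n$-element $t$-hypergraph $\mathbb{H}$ satisfying $\EA^t_k$ assigns the \emph{same} truth value to $(\mathbb{H},\mathbb{X}_n)\models\phi$; call this common value $b_n\in\{0,1\}$ (well-defined for all large $n$, since such hypergraphs exist by \Cref{lem:EA-probs}). Consequently, both probabilities appearing in the pseudorandomness condition are close to $b_n$: conditioning on whether the sampled hypergraph models $\EA^t_k$,
\[
    \abs{\Pr_{\mathbb{H}\sim G_t(n,1/2)}[(\mathbb{H},\mathbb{X}_n)\models\phi] - b_n} \le \Pr_{\mathbb{H}\sim G_t(n,1/2)}[\mathbb{H}\not\models\EA^t_k] = \negl(n)
\]
by \Cref{lem:EA-probs}, and likewise
\[
    \abs{\Pr_{G\sim G(n,1/2)}[(\theta^t(G),\mathbb{X}_n)\models\phi] - b_n} \le \Pr_{G\sim G(n,1/2)}[\theta^t(G)\not\models\EA^t_k] = \negl(n)
\]
by \Cref{lem:transductionSatisfiesEM}.

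Combining the two displays via the triangle inequality bounds the difference of the two probabilities by $\negl(n)$. Since $\phi$, $\rho$, and $(\mathbb{X}_n)$ were arbitrary, this is precisely the assertion that $\mathcal{D}_n$ is pseudorandom for LFP, and hence that $\theta^t$ is an $(\FOpar,\textnormal{LFP})$-pseudorandom generator from graphs to $t$-hypergraphs. I expect no genuine obstacle at this level of the argument: all of the difficulty has already been pushed into \Cref{lem:transductionSatisfiesEM}, whose proof is the real content of the section and requires the delicate probabilistic analysis of the parities $\oplus y~\bigwedge_{i} E(y,x_i)$ over $G(n,1/2)$ to show that, for the purpose of the extension axioms, the transduced hyperedges behave like sufficiently independent fair coins.
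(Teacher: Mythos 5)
Your proof is correct and follows exactly the paper's intended route: the paper derives \Cref{thm:ParityGenerator} immediately from \Cref{lem:EA-probs}, \Cref{lem:EA-determine-qtype}, and \Cref{lem:transductionSatisfiesEM}, and your write-up simply makes explicit the conditioning on $\EA^t_k$ and the triangle-inequality step that the paper leaves implicit. No gaps; the well-definedness of $b_n$ for large $n$ and the handling of advice structures are both treated properly.
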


It would be interesting to find a logic that admits pseudorandom generators that are secure against adversaries in the same logic.
We believe \FOpar to be a candidate for this.
The current proof only shows security against FO or LFP adversaries, since the extension axioms are not enough to determine the truth value of \FOpar sentences.
However, Kolaitis and Kopparty~\cite{FOPlusParity} give a powerful result for \FOpar in a similar spirit:
To determine the truth value of any \FOpar sentence, it is sufficient to know
certain ``subgraph frequencies'', that is, the parities of the number of occurrences of certain subgraphs up to some size.
If this statement can be lifted to hypergraphs, then, to fool \FOpar, it is sufficient to transduce a \(t\)-hypergraph that mimics these subgraph frequencies.
Thus, \cite{FOPlusParity} can be understood as evidence for the following conjecture.

\begin{conjecture}
For every \(t\), \(\theta^t\) is an $(\textnormal{\FOpar},\textnormal{\FOpar})$-pseudorandom generator from graphs to \(t\)-hypergraphs.
\end{conjecture}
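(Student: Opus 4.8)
The plan is to carry out the roadmap sketched just after \Cref{thm:ParityGenerator}: reduce \FOpar-definability over $t$-hypergraphs to a finite set of mod-$2$ statistics, and then match the joint distribution of those statistics between $\theta^t(G(n,1/2))$ and $G_t(n,1/2)$. The first step is a hypergraph analogue of the Kolaitis--Kopparty normal form \cite{FOPlusParity}. Concretely, I would prove that for every quantifier rank $r$ there is a finite family of bounded-size sub-hypergraph patterns $P_1,\dots,P_m$ such that the \FOpar-type of rank $r$ of any $t$-hypergraph $H$ (in the presence of a fixed advice structure) is determined by (i) its FO-type of rank $r$, together with (ii) the vector of \emph{parity characteristics} $\bigl(c_1(H),\dots,c_m(H)\bigr)\in\ff_2^m$, where $c_j(H)$ is the number of occurrences of $P_j$ in $H$ reduced mod $2$. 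By \Cref{lem:EA-probs} and \Cref{lem:EA-determine-qtype} the FO-type of $G_t(n,1/2)$ is fixed with probability $1-\negl(n)$, and by \Cref{lem:transductionSatisfiesEM} so is that of $\theta^t(G(n,1/2))$; moreover the two agree, as each is the unique FO-type compatible with $\EA^t_k$. Hence it suffices to show that the joint distribution of $\bigl(c_1,\dots,c_m\bigr)$ is the same up to total variation $\negl(n)$ --- residue-class by residue-class in any modular $n \bmod M$ dependence --- in $\theta^t(G(n,1/2))$ and in $G_t(n,1/2)$.

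Since each $c_j$ is $\ff_2$-valued, this joint distribution is determined by the biases of all $\ff_2$-linear combinations $c_L := \bigoplus_{j\in L} c_j$ for $L \subseteq [m]$. I would therefore split the task into two claims about an arbitrary such combination. First, a \emph{resonance classification}: identify exactly the $c_L$ that are deterministically constant on a random $t$-hypergraph --- those forced by $\ff_2$-algebraic identities collapsing the count-polynomial to a constant --- and show that the same $c_L$ are forced to the same constant on $\theta^t(G)$. Second, a \emph{bias bound}: for every non-resonant $L$, show $\Pr_{G\sim G(n,1/2)}\bigl[c_L(\theta^t(G))=1\bigr] = \tfrac12 \pm \negl(n)$, matching the vanishing bias of the same statistic on $G_t(n,1/2)$. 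The central object here is the expansion of each hyperedge indicator of $\theta^t(G)$ as the $\ff_2$-polynomial $\bigoplus_y \bigwedge_i E(y,x_i)$ in the edge variables of $G$; substituting this into $c_L$ produces an explicit high-degree $\ff_2$-polynomial $Q_L$ in the $\binom{n}{2}$ edge bits, so the bias bound becomes the character estimate $\bigl|\ee_{G}[(-1)^{Q_L(G)}]\bigr| = \negl(n)$. I would attack this by grouping the monomials of $Q_L$ according to the set of graph edges they touch and showing that, away from resonances, the contributions cancel; this mirrors the second-moment estimates behind \Cref{lem:transductionSatisfiesEM}, but now tracks signs rather than mere occurrence.

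The main obstacle is the combination of the bias bound and the resonance classification --- above all, proving that the resonances of $\theta^t(G)$ coincide \emph{exactly} with those of a truly random hypergraph, and produce the same constants and the same limiting modular distribution. Because a single hyperedge of $\theta^t(G)$ is a degree-$t$ parity over common neighborhoods, a count of copies of $P_j$ unfolds into a sum of products of many overlapping stars in $G$, whose $\ff_2$-algebraic dependencies are far more intricate than in the graph setting of Kolaitis and Kopparty. A clean organizing principle is to pass entirely to the character side: write each statistic as a sum of characters of $\{0,1\}^{\binom{n}{2}}$ and classify which characters are \emph{identically} trivial versus merely small; the conjecture follows if the only non-vanishing characters are exactly those forced by the star-structure of the transduction and if these reproduce the behavior of $G_t(n,1/2)$. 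A secondary obstacle is that the normal form must be extended both to hypergraphs and to the disjoint-union advice structure of our pseudorandomness definition; I expect the advice to be benign, since it occupies a separate signature on a separate part of the universe, but one must verify that the family of parity characteristics stays finite and that the advice does not introduce new resonances.
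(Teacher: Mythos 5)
The statement you are addressing is a \emph{conjecture} in the paper, not a theorem: the authors explicitly say that their proof technique (extension axioms, \Cref{lem:EA-probs}, \Cref{lem:EA-determine-qtype}, \Cref{lem:transductionSatisfiesEM}) only yields security of $\theta^t$ against FO and LFP adversaries, because extension axioms do not determine the truth of \FOpar sentences, and they cite Kolaitis--Kopparty \cite{FOPlusParity} merely as \emph{evidence} that a proof along the lines you sketch might exist. Your proposal is a proof plan, not a proof, and its two load-bearing steps are precisely the open problems. First, the hypergraph analogue of the Kolaitis--Kopparty normal form --- that the rank-$r$ \FOpar-type of a $t$-hypergraph (with advice) is determined by its FO-type together with finitely many mod-$2$ subpattern counts --- is stated as something you ``would prove,'' but the known result is for graphs, and lifting it to $t$-hypergraphs (and to the disjoint-union advice setting of the paper's pseudorandomness definition) is exactly the lifting the authors identify as missing. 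Second, even granting the normal form, your ``bias bound'' reduces to the character estimate $\bigl|\mathbb{E}_G[(-1)^{Q_L(G)}]\bigr| = \negl(n)$ for an explicit high-degree $\ff_2$-polynomial $Q_L$ in the $\binom{n}{2}$ edge bits, together with an exact ``resonance classification'' of which linear combinations are forced constant. You offer only the intention to ``group the monomials \dots and show that the contributions cancel''; no cancellation mechanism is exhibited, and bounding correlations of dense high-degree $\ff_2$-polynomials with parity-type statistics is in general a hard exponential-sum problem with no off-the-shelf tool that applies here. The analogy to the ``second-moment estimates behind \Cref{lem:transductionSatisfiesEM}'' does not carry over: that lemma tracks existence of vertices realizing parity patterns via independence after edge revelation, whereas your $Q_L$ aggregates globally over all copies of a pattern, where no such independence structure is available.

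To be clear about what is and is not salvageable: your reduction of the conjecture to (i) a hypergraph \FOpar normal form and (ii) matching of parity-characteristic distributions residue-class by residue-class is a sensible decomposition and coincides with the route the paper itself suggests; your observation that the FO-type component is already handled by \Cref{lem:EA-probs}, \Cref{lem:EA-determine-qtype}, and \Cref{lem:transductionSatisfiesEM} is correct. But since both (i) and (ii) remain unproved assertions in your write-up --- and (ii) in particular contains no argument, only a framing in terms of characters --- the proposal does not establish the conjecture, and it would be inaccurate to present it as a proof rather than as a program whose main obstacles you have, to your credit, correctly identified.
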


\subsection{Parity Probabilities in Random Graphs}\label{sec:parityRandomGraphs}

It remains to show that \(\theta^t(G(n,1/2))\) satisfies the extension axioms with high probability.
Roughly speaking, the sets \(\{v\} \cup C\) in the following lemma with \(|C|=t-1\) will correspond to the input variables of \(\theta^t\), that is,
the parity of the ``number of vertices that are adjacent to every vertex in \(\{v\} \cup C\)''
will determine whether \(\{v\} \cup C\) forms a hyperedge.

\begin{restatable}{lemma}{ParityLemma}\label{lem:parity2}
    %\begin{lemma}\label{lem:parity2}
    Let \(k \in \N\). In \(G(n,1/2)\), with probability \(1-\negl(n)\),
    for every vertex set \(S \subseteq [n]\) of size \(k\) and every \(\mathcal{C} \subseteq \{ C \mid \emptyset \neq C \subseteq S \}\) there exists \(v \not \in S\) such that
    for all \(\emptyset \neq C \subseteq S\) we have \(C \in \mathcal{C}\) if and only if
    the number of vertices that are adjacent to every vertex in \(\{v\} \cup C\) is odd.
    %\end{lemma}
\end{restatable}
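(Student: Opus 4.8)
The plan is to recast the conclusion as a statement about hitting every target vector in the $\ff_2$-vector space $\ff_2^{2^k-1}$ indexed by the nonempty subsets $C\subseteq S$. Fix $S$ of size $k$ and a collection $\mathcal C$, encoded as $\tau\in\ff_2^{2^k-1}$ with $\tau_C=1$ iff $C\in\mathcal C$. For a candidate vertex $v\notin S$ write $b(v)\in\ff_2^{2^k-1}$ for the vector whose $C$-coordinate is the parity of the number of vertices adjacent to every element of $\{v\}\cup C$; the condition the lemma asks of $v$ is exactly $b(v)=\tau$. I will show that with probability $1-\negl(n)$ some candidate achieves $b(v)=\tau$, and then union bound over the $\binom nk\le n^k$ choices of $S$ and the $2^{2^k-1}$ (constantly many) choices of $\tau$.

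I would reveal the edges in two phases. First expose all edges inside $S$ and between $S$ and the rest; this fixes, for every $w\notin S$, its pattern $D_w:=N(w)\cap S\subseteq S$. The key observation is that a vertex $w$ adjacent to $v$ contributes the fixed vector $u^{(D_w)}$, where $u^{(D)}_C:=\mathbf 1[C\subseteq D]$, to $b(v)$; hence, modulo a constant offset coming only from the already-exposed edges inside $S\cup\{v\}$, we have $b(v)=c_v+\sum_{w\neq v}X_{vw}\,u^{(D_w)}$ over $\ff_2$, where $X_{vw}=\mathbf 1[v\sim w]$ are the remaining phase-two edge indicators. Since the zeta matrix $\bigl(\mathbf 1[C\subseteq D]\bigr)_{C,D}$ of the subset lattice is invertible over $\ff_2$, its rows, projected away from the $C=\emptyset$ coordinate, are exactly the $u^{(D)}$ and they span $\ff_2^{2^k-1}$. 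With probability $1-\negl(n)$ every pattern $D$ occurs among the linearly many vertices outside $S$, so the realized $u^{(D_w)}$ span as well.

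Conditioned on a phase-one outcome for which the patterns span, a single character computation gives \emph{exact} uniformity: for any nonzero $a\in\ff_2^{2^k-1}$, $\mathbb E\bigl[(-1)^{a\cdot b(v)}\bigr]=(-1)^{a\cdot c_v}\prod_{w}\tfrac12\bigl(1+(-1)^{a\cdot u^{(D_w)}}\bigr)$, and this vanishes because spanning forces some $w$ with $a\cdot u^{(D_w)}=1$. Thus each $b(v)$ is uniform on $\ff_2^{2^k-1}$, so $\Pr[b(v)=\tau]=2^{-(2^k-1)}$. The pleasant feature of working over $\ff_2$ is that spanning yields exact rather than approximate uniformity, which removes any delicate error propagation.

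The main obstacle is dependence across candidates: $b(v)$ and $b(v')$ share the edge $\{v,v'\}$. I would resolve this by designating half of the $n-k$ vertices outside $S$ as candidates $v_1,\dots,v_m$ and the other half as witnesses $W_0$ (so $m,\abs{W_0}=\Omega(n)$), and by additionally conditioning on all edges among the candidates. After this conditioning the only randomness left in each $b(v_i)$ is the block of edges from $v_i$ to $W_0$; these blocks are disjoint across $i$, so $b(v_1),\dots,b(v_m)$ become mutually independent, and each is still exactly uniform provided $\{u^{(D_w)}:w\in W_0\}$ spans, which fails with probability at most $2^k(1-2^{-k})^{\abs{W_0}}=\negl(n)$. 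Therefore $\Pr[\text{no }v_i\text{ hits }\tau\mid\text{phase one good}]=(1-2^{-(2^k-1)})^m=\negl(n)$, and combining this with the $\negl(n)$ spanning-failure probability and union-bounding over $S$ and $\tau$ completes the argument.
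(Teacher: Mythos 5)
Your proof is correct, but it takes a genuinely more direct route than the paper's. The paper proceeds in two stages: it first proves an auxiliary statement (Lemma \ref{lem:parity1}) about ``restricted'' parity patterns $\mathcal{B}(v)$, in which common neighbors are classified by their \emph{exact} neighborhood $B$ in $S$; uniformity of $\mathcal{B}(v)$ is obtained there by an edge-revelation scheme with designated ``toggle'' vertices $v_B$, with candidate vertices drawn from the class $V_\emptyset$. It then transfers this to the desired patterns $\mathcal{C}(v)$ by linear algebra over $\mathbb{F}_2$: the zeta transform relating $\mathcal{B}(v)$ to $\mathcal{C}(v)$ (equation (\ref{eq:oddSize})) is shown to commute with symmetric difference, the images of singleton patterns form a basis, and every target $\mathcal{C}$ is therefore realized. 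You instead write the target vector $b(v)$ directly as an affine $\mathbb{F}_2$-combination $c_v \oplus \bigoplus_w X_{vw}\, u^{(D_w)}$ of fresh edge indicators and prove \emph{exact} uniformity by a character computation, with independence across candidates secured by the candidate/witness split and conditioning on candidate--candidate edges. Both arguments pivot on the same algebraic fact---invertibility of the subset zeta matrix over $\mathbb{F}_2$---but you use it to show that the $u^{(D)}$ span (so every nontrivial character vanishes), while the paper uses it to transfer surjectivity from $\mathcal{B}$-patterns to $\mathcal{C}$-patterns. Your route buys a single self-contained argument, exact uniformity and mutual independence of the $b(v_i)$, and no toggle-vertex bookkeeping or $\oplus$-commutation identity; the paper's route buys the intermediate Lemma \ref{lem:parity1}, which is of independent interest (it is the natural parity analogue of the extension axioms with both adjacency and non-adjacency constraints), and its revelation argument avoids Fourier analysis entirely. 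Two cosmetic points in your write-up, neither affecting correctness: the sum in your first decomposition should range over $w \notin S \cup \{v\}$, since contributions from $w \in S$ belong to the offset $c_v$ (as you clearly intend); and for spanning it suffices that every \emph{nonempty} pattern $D$ occur among the witnesses, because the projected row $u^{(\emptyset)}$ is the zero vector.
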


We derive \Cref{lem:parity2} via a careful basis construction over a suitable vector space from the following similar statement, which we prove first.

\begin{lemma}\label{lem:parity1}
    Let \(k \in \N\). In \(G(n,1/2)\), with probability \(1-\negl(n)\),
    for every vertex set \(S \subseteq [n]\) of size \(k\) and every \(\mathcal{B} \subseteq \{ B \mid \emptyset \neq B \subseteq S \}\), there exists \(v \not \in S\) such that
    for all \(\emptyset \neq B \subseteq S\) we have \(B \in \mathcal{B}\) if and only if
    the number of vertices that are adjacent to every vertex in \(\{v\} \cup B\) and non-adjacent to every vertex in \(S \setminus B\) is odd.
\end{lemma}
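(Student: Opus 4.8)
The plan is to fix the set $S$ and the target family $\mathcal{B}$ and show that the probability that no valid extension vertex $v$ exists is exponentially small in $n$; since there are at most $\binom{n}{k} \le n^k$ choices of $S$ and only the constant $2^{2^k-1}$ choices of $\mathcal{B}$, a union bound over these $\mathrm{poly}(n)$ events then yields the claimed $1-\negl(n)$ bound. To set up the computation, let $N_B(v)$ denote the quantity whose parity we must control, namely the number of vertices adjacent to every vertex of $\{v\}\cup B$ and non-adjacent to every vertex of $S\setminus B$. Writing $\tau_u := \{\, s \in S \mid u \sim s \,\}$ for the \emph{type} of a vertex $u \notin S$ (its neighborhood inside $S$), a vertex $u \notin S$ satisfies the adjacency conditions with respect to $S$ exactly when $\tau_u = B$, so $N_B(v)$ counts, up to a bounded contribution from the at most $k$ vertices lying in $S$, the vertices $u \notin S$ with $u \sim v$ and $\tau_u = B$. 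We must hit a prescribed parity simultaneously for all $2^k-1$ nonempty $B \subseteq S$, i.e. a prescribed target vector in $\ff_2^{2^k-1}$.

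The key difficulty is that, as $v$ ranges over candidate vertices, these parity vectors are correlated, so I would first break the dependence by splitting the vertices outside $S$ into two pools of size $\Theta(n)$: a \emph{counting pool} $C$ and a \emph{candidate pool} $D$. For $v \in D$ I then condition on all edges of the graph \emph{except} those between $D$ and $C$. This conditioning fixes the type $\tau_u$ of every vertex (types depend only on edges to $S$), as well as every contribution to $N_B(v)$ coming from vertices in $S$ or in $D$ (those require $D$–$S$ or $D$–$D$ edges, which are frozen); only the edges from $v$ to $C$ remain random. Hence, conditionally,
\[
N_B(v) \equiv \Bigl(\bigoplus_{\substack{u \in C \\ \tau_u = B}} [\,u \sim v\,]\Bigr) \oplus c_B(v) \pmod 2,
\]
where each offset $c_B(v)$ is a fixed bit.

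Now I would observe that, conditioned on the \emph{good event} that all $2^k-1$ nonempty types are represented in $C$, the vector $\bigl(N_B(v)\bigr)_{\emptyset \ne B \subseteq S}$ is uniformly distributed over $\ff_2^{2^k-1}$: each coordinate is an \textsc{xor} of at least one independent fair edge-bit and is therefore a fair coin, while distinct coordinates use edges from $v$ to the \emph{disjoint} type-classes of $C$, making them mutually independent. Crucially, the vectors for different $v \in D$ depend on disjoint edge-sets (the edges from $v$ to $C$), so they are independent across $v$. Therefore each $v$ hits the fixed target with probability exactly $2^{-(2^k-1)}$, independently, and the probability that none of the $|D| = \Theta(n)$ candidates succeeds is at most $\bigl(1 - 2^{-(2^k-1)}\bigr)^{|D|} = e^{-\Omega(n)}$. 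The good event itself fails with probability at most $(2^k-1)(1-2^{-k})^{|C|} = e^{-\Omega(n)}$, since $|C| = \Theta(n)$ and each vertex of $C$ takes any given type with constant probability $2^{-k}$.

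Combining these, the failure probability for a fixed $(S,\mathcal{B})$ is $e^{-\Omega(n)}$, and the union bound over the $\mathrm{poly}(n)$ pairs finishes the proof. The main obstacle is precisely the correlation across candidate vertices; the pool split $C \sqcup D$ together with conditioning on everything but the $D$–$C$ edges is what reduces the problem to independent constant-probability successes, after which the exponential tail and union bound are routine. The one point requiring care is the bookkeeping that every contribution to $N_B(v)$ other than the $v$-to-$C$ edges is frozen by the conditioning, so that the residual randomness is exactly an \textsc{xor} of independent fair bits.
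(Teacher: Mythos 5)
Your proof is correct, and while it shares the paper's overall skeleton---fix $(S,\mathcal{B})$, classify vertices outside $S$ by their neighborhood type in $S$, arrange for the candidates' parity vectors in $\ff_2^{2^k-1}$ to be independent and uniform, then finish with an exponential tail and a union bound over the $\mathrm{poly}(n)$ pairs $(S,\mathcal{B})$---your mechanism for independence and uniformity is genuinely different, and arguably cleaner. The paper draws its candidates from the \emph{random} class $V_\emptyset$ of vertices with no neighbors in $S$ (disjointness of $V_\emptyset$ from the counting classes $V_B$, $\emptyset \neq B \subseteq S$, is what decouples the candidates), designates a single ``toggle'' vertex $v_B$ in each nonempty class, and reveals the candidate-to-toggle edges last, so that each coordinate of the pattern is re-randomized by one fair bit; this forces a Chernoff argument to guarantee that \emph{every} class $V_B \setminus S$---both the toggles' homes and the candidate pool itself---has size $\Omega(n)$. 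Your deterministic split $C \sqcup D$ makes the candidate pool large by fiat, allows candidates of arbitrary type (all $D$--$D$ and $D$--$S$ contributions, including those from the $\le k$ vertices of $S$ itself, are frozen into the offsets $c_B(v)$ by the conditioning), and replaces the toggle device with the observation that an XOR containing at least one fresh fair bit is fair, with distinct coordinates independent because the type-classes of $C$ are disjoint. Consequently you need each nonempty type represented only \emph{once} in $C$, giving the trivial bound $(2^k-1)(1-2^{-k})^{|C|}$ in place of Chernoff concentration. What the paper's sequential-reveal narrative buys is a vivid, step-by-step picture of how the pattern $\mathcal{B}(v)$ gets randomized; what yours buys is tighter bookkeeping, no concentration inequality, and a candidate pool of guaranteed linear size---and both arguments establish exactly the statement needed downstream in Lemma \ref{lem:parity2}.
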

\begin{proof}
    The distribution of \(G(n,1/2)\) is identical to a random process where all adjacencies between tuples \(\{u, v\} \in {[n] \choose 2}\) are revealed one by one in an arbitrary order,
    and each tuple is revealed to be either adjacent or non-adjacent with probability \(1/2\), each.

    Let \(S\) be a set of \(k\) vertices and \(\mathcal{B} \subseteq \{ B \mid \emptyset \neq B \subseteq S \}\).
    We start by revealing all adjacencies incident to \(S\).
    The revealed edges partition the vertices of the graph into \(2^{k}\) sets via their connection to \(S\).
    For \(B \subseteq S\), we denote by \(V_B \subseteq [n]\) the vertices whose neighborhood in \(S\) is exactly \(B\).

    Let \(B \subseteq S\).
    Since all edges appear independently with probability \(1/2\),
    the probability that a vertex outside \(S\) is contained in \(V_B\) is \(2^{-k}\).
    Since we only need to specify the asymptotic behavior of the error probability, we can assume at all times \(n\) to be sufficiently large in comparison to \(k\).
    We may assume \(k \le n/2\), which implies \(n-k \ge n/2\), and thus
    the expected number of vertices in this set is
    \[
        E\bigl[|V_B \setminus S|\bigr] = (n-k) \cdot 2^{-k} \ge n / 2^{k+1}.
    \]
    The events that a vertex outside \(S\) is contained in \(V_B\) are all independent.
    The Chernoff bound states that a sum of independent binary random variables significantly differs from its expected value only with exponentially small probability.
    We apply the Chernoff bound and obtain
    \begin{align*}
        \Pr\Bigl[|V_B \setminus S| \le n/2^{k+2} \Bigr] &\le
        \Pr\Bigl[|V_B \setminus S| \le E[|V_B \setminus S|]/2\Bigr] \\ 
        &\le e^{-E[|V_B \setminus S|]/12}\\
        &\le e^{-n / 2^{k+5}}.
    \end{align*}

    By the union bound, the probability that \(E\bigl[|V_B \setminus S|\bigr] \le n/2^{k+3}\) for any \(B \subseteq A\) is at most
    \begin{equation}\label{eq:error1}
        2^k \cdot e^{-n / 2^{k+5}}.
    \end{equation}
    Thus, let us assume (for now) that we revealed the edges incident to \(S\) in such a way that
    \(|V_B \setminus S| \ge n/2^{k+2}\) for all \(B \subseteq S\).
    For sufficiently large \(n\) as a function of \(k\), this implies \(V_B \setminus S\) to be non-empty.
    This lets us pick for every \(\emptyset \neq B \subseteq S\) a ``toggle'' vertex \(v_B \in V_B \setminus S\).

    Consider a vertex \(v \in V_\emptyset \setminus S\).
    We have already revealed the edges between \(v\) and \(S\).
    Let us next reveal the remaining edges between \(v\) and \(V \setminus \bigl( V_\emptyset \cup \{ v_B \mid B \subseteq S \} \bigr) \).
    We define a ``parity pattern''
    \(\mathcal{B}(v) \subseteq \{ B \mid \emptyset \neq B \subseteq S \}\), where \(B \in \mathcal{B}(v)\) if and only if \(v\) is adjacent to an odd number of vertices from \(V_B\).
    The parity pattern \(\mathcal{B}(v)\) will change over time, as we reveal
    one by one the edges between \(v\) and the vertices \(v_B\) with \(\emptyset \neq B \subseteq S\).
    When the edge to \(v_B\) is revealed, the fact whether \(B \in \mathcal{B}(v)\) is inverted with probability \(1/2\), while all other values of the parity pattern stay as they are.
    Since all edges are independent,
    this guarantees that, afterwards, the events whether \(B \in \mathcal{B}(v)\) for \(B \neq \emptyset\) are all independent and appear with probability \(1/2\).
    Since all edges between \(v\) and \(V \setminus V_\emptyset\) were revealed, \(\mathcal{B}(v)\) will not change anymore when further edges are revealed.

    The event that \(\mathcal{B}(v)=\mathcal{B}\) appears with probability exactly \(2^{-2^k-1}\)
    and is equivalent to our goal that
    for all \(\emptyset \neq B \subseteq S\) holds \(B \in \mathcal{B}\) if and only if
    the number of vertices adjacent to every vertex in \(\{v\} \cup B\) and non-adjacent to every vertex in \(S \setminus B\) is odd.
    We proceed revealing edges and analyzing the parity pattern \(\mathcal{B}(v)\) in the same way for all other \(v \in V_{\emptyset} \setminus S\).
    The \(|V_\emptyset \setminus S|\) many events that \(\mathcal{B}(v)=\mathcal{B}\) are all independent.
    Thus, the probability that no vertex \(v \in V_\emptyset\) realizes the parity pattern \(\mathcal{B}(v)=\mathcal{B}\) equals
    (using \Cref{obs:ProbSingleEvent} for the first inequality and \(|V_\emptyset \setminus S| \ge n/2^{k+2}\) for the second inequality)
    \begin{equation}\label{eq:error2}
        \left(1-2^{-2^k+1}\right)^{|V_\emptyset \setminus S|} \le e^{-|V_\emptyset \setminus S|/2^{2^k-1}} \le e^{-n/ \bigl( 2^{k+2} \cdot 2^{2^k-1} \bigr)} = e^{n / 2^{2^{k}+k+1}}.
    \end{equation}
    If there exists a vertex realizing the parity pattern for any choice of \(S\) and \(\mathcal{B}\), the claimed statement is satisfied.
    There are at most \(n^k \cdot 2^{2^{k}}\) choices for \(S\) and \(\mathcal{B}\),
    and for each, the error probability is bounded by the sum of (\ref{eq:error1}) and (\ref{eq:error2}).
    Thus, by the union bound, the total error probability is at most

    \[
        \left(n^k \cdot 2^{2^{k}}\right) \cdot
        \left(2^k \cdot e^{-n / 2^{k+5}} +
        e^{- n / 2^{2^k+k+1}} \right).
    \]
    As \(k\) is fixed, this is a negligible function in \(n\).
\end{proof}

We now return to proving Lemmas \ref{lem:parity2} and \ref{lem:transductionSatisfiesEM}, which concludes the proof of Theorem \ref{thm:ParityGenerator}.

\begin{proof}[Proof of Lemma \ref{lem:parity2}]
    Let us fix \(S \subseteq [n]\).
    For every \(v \in [n]\),
    let \(\mathcal{B}(v)\) be the subset of \(\{ B \mid \emptyset \neq B \subseteq S \}\) such that
    for all \(\emptyset \neq B \subseteq S\) we have \(B \in \mathcal{B}(v)\) if and only if
    the number of vertices that are adjacent to every vertex in \(\{v\} \cup B\) and non-adjacent to every vertex in \(S \setminus B\) is odd.
    Let similarly \(\mathcal{C}(v)\) be the subset of \(\{ C \mid \emptyset \neq C \subseteq S \}\) such that
    for all \(\emptyset \neq C \subseteq S\) we have \(C \in \mathcal{C}(v)\) if and only if
    the number of vertices that are adjacent to every vertex in \(\{v\} \cup C\) is odd.

    For some vertex \(v\) and set \(C\),
    let \(X\) be the vertices that adjacent to every vertex in \(\{v\} \cup C\).
    We may partition the vertices in \(X\) by their adjacencies to \(S \setminus C\).
    Now \(X\) has odd size if and only if we partitioned \(X\) into an odd number of parts of odd size.
    In other words,
    \begin{equation}\label{eq:oddSize}
        C \in \mathcal{C}(v) \iff
        \bigl\{ B \mid C \subseteq B \subseteq S, B \in \mathcal{B}(v) \bigr\} \textnormal{ has odd size}.
    \end{equation}

    One can also show for vertices \(v\) and sets \(B\) that
    \begin{equation}\label{eq:CfromB}
        \mathcal{B}(v) = \{B\}
        \quad\Longrightarrow\quad
        \mathcal{C}(v) = \{C \mid B \subseteq C \subseteq S\}.
    \end{equation}

    In the following, we consider the vector space \(2^{\{ B \mid \emptyset \neq B \subseteq S \}}\) with the symmetric difference operator~\(\oplus\).
    For example, for vectors \(\{B_1, B_2\}\) and \(\{B_2,B_3\}\) of this vector space (with \(\emptyset \neq B_1,B_2,B_3 \subseteq S\)), we have \(\{B_1, B_2\} \oplus \{B_2,B_3\} = \{B_1,B_3\}\).
    We claim that the sets
    \[
        \bigl\{ \{B \mid C \subseteq B \subseteq S\} \mid \emptyset \neq C \subseteq S \bigr\}
    \]
    form a basis of this vector space.
    To prove this, we iteratively construct for decreasing \(i\) all unit vectors \(\{ I\}\) with \(\emptyset \subseteq I \subseteq S, |I| = i\) as linear combinations of our proclaimed basis.
    For \(i>|S|\) there is nothing to show.
    To generate a unit vector \(\{ I\}\) with \(|I|=i-1\),
    we start with the vector \(\{B \mid I \subseteq B \subseteq S\}\)
    and eliminate all excessive entries \(I \subsetneq B \subseteq S\) of size at least \(i\) using the \(\oplus\) operator and the already constructed vectors \(\{B\}\).

    Next, we will argue that for all \(v,v_1,v_2 \in [n]\),
    \begin{equation}\label{eq:commutes}
        \mathcal{B}(v) = \mathcal{B}(v_1) \oplus \mathcal{B}(v_2)
        \quad\Longrightarrow\quad
        \mathcal{C}(v) = \mathcal{C}(v_1) \oplus \mathcal{C}(v_2).
    \end{equation}

    Assume \(\mathcal{B}(v) = \mathcal{B}(v_1) \oplus \mathcal{B}(v_2)\).
    For a set \(C\), we observe
    \begin{align*}
        C \in C(v) &\iff \{ B \mid C \subseteq B \subseteq A, B \in (\mathcal{B}(v_1) \oplus \mathcal{B}(v_2)) \} \txt{ has odd size} \stext{from (\ref{eq:oddSize})}\\
        &\iff \{ B \mid C \subseteq B \subseteq A, B \in \mathcal{B}(v_1) \} \txt{ and } \{ B \mid C \subseteq B \subseteq A, B \in \mathcal{B}(v_2) \} 
        \push\txt{ have sizes of different parity}\\
        &\iff (C \in C(v_1) \txt{ and } C \not\in C(v_2)) \txt{ or } (C \not\in C(v_1) \txt{ and } C \in C(v_2)).
    \end{align*}
    This proves (\ref{eq:commutes}).

    By \Cref{lem:parity1}, we can assume with negligible error probability that,
    for every \[\mathcal{B} \subseteq \{ B \mid \emptyset \neq B \subseteq S \},\] there exists \(v \in [n]\) with \(\mathcal{B}(v) = \mathcal{B}\).
    In particular, for every \(\emptyset \neq B \subseteq S\) there exists \(v \in [n]\) with \(\mathcal{B}(v) = \{B\}\), which, by (\ref{eq:CfromB}),
    implies \(\mathcal{C}(v) = \{C \mid B \subseteq C \subseteq S\}\).
    Hence, by the claim above, \(\{\mathcal{C}(v) \mid v \in [n]\}\) is a basis of our vector space.
    To finally prove the claim of this lemma, pick an arbitrary \(\mathcal{C} \subseteq \{ C \mid \emptyset \neq C \subseteq S \}\).
    Since \(\{\mathcal{C}(v) \mid v \in [n]\}\) is a basis of our vector space,
    we can express \(\mathcal{C}\) as \(\mathcal{C} = \mathcal{C}(v_1) \oplus {} \dots {} \oplus \mathcal{C}(v_t)\) for some vertices \(v_1,\dots,v_t \in [n]\).
    The invocation of \Cref{lem:parity1} at the beginning of the paragraph implies
    the existence of a vertex \(v\) with \(\mathcal{B}(v) = \mathcal{B}(v_1) \oplus {} \dots {} \oplus X(v_t)\).
    Repeated application of (\ref{eq:commutes}) then implies
    \(\mathcal{C}(v) = \mathcal{C}(v_1) \oplus {} \dots {} \oplus C(v_t)\).
    Thus, \(\mathcal{C}(v) = \mathcal{C}\).
\end{proof}

\begin{proof}[Proof of Lemma \ref{lem:transductionSatisfiesEM}]
    By \Cref{lem:parity2},
    with negligible error probability in \(n\),
    for every vertex set \(S \subseteq [n]\) of size \(k\) and every \(\mathcal{C} \subseteq \{ C \mid \emptyset \neq C \subseteq S \}\) there exists \(v \not \in S\) such that
    for all \(\emptyset \neq C \subseteq S\) we have \(C \in \mathcal{C}\) if and only if
    the number of vertices that are adjacent to every vertex in \(\{v\} \cup C\) is odd.
    Thus, in particular, for every \(T \subseteq {S \choose t-1}\)
    there exists \(v \not \in S\) such that
    for every \( \{ s_1,\dots, s_{t-1} \} \in {S \choose t-1}\),
    \(\{ s_1,\dots, s_{t-1} \} \in T\) if and only if the number of vertices that are adjacent to every vertex in \(\{s_1,\dots,s_{t-1},v\}\) is odd.
    The latter is equivalent to \(\{s_1,\dots,s_{t-1},v\}\) being an edge in \(\theta^t(G)\).
    Hence, with negligible error probability, \(G \models \EA^t_k\).
\end{proof}

\section{Conclusion}\label{secConclusion}

In this paper, we define model-theoretic analogues of pseudorandom generators.
We give an exhaustive classification between which
generator and adversary logics from FO, LFP, \LFPpar pseudorandom generators
exist,
and we deterministically construct structures that fool FO and LFP adversaries, even in the realm of arbitrary relational signatures.
Finally, we give a potential candidate for an \FOpar-transduction that acts as a pseudorandom generator that is secure against \FOpar itself. On a conceptual level, this work contributes a novel perspective on pseudorandomness and opens up new avenues for descriptive complexity in the realm of cryptography, where average-case analysis is paramount. 

\bigskip\noindent
\section*{Acknowledgments}
We are very grateful to Boaz Barak for helpful pointers and feedback on an earlier draft.

This material is based upon work supported in part by the National Science Foundation Graduate Research Fellowship Program under Grant No.\xspace DGE1745303. Any opinions, findings, and conclusions or recommendations expressed in this material are those of the authors and do not necessarily reflect the views of the National Science Foundation.

\bibliographystyle{plain}
\bibliography{bibliography}

\begin{thebibliography}{10}

\bibitem{RadoAckermann}
Wilhelm Ackermann.
\newblock Die widerspruchsfreiheit der allgemeinen mengenlehre.
\newblock {\em Mathematische Annalen}, 114(1):305--315, 1937.

\bibitem{AroraBarak}
Sanjeev Arora and Boaz Barak.
\newblock {\em Computational Complexity - {A} Modern Approach}.
\newblock Cambridge University Press, 2009.

\bibitem{DefinableInapproximabilityJournal}
Albert Atserias and Anuj Dawar.
\newblock Definable inapproximability: new challenges for duplicator.
\newblock {\em J. Log. Comput.}, 29(8):1185--1210, 2019.

\bibitem{PaleyExtensionAxioms}
Andreas Blass, Geoffrey Exoo, and Frank Harary.
\newblock Paley graphs satisfy all first-order adjacency axioms.
\newblock {\em Journal of Graph Theory}, 5(4):435--439, 1981.

\bibitem{LFP01Law}
Andreas Blass, Yuri Gurevich, and Dexter Kozen.
\newblock A zero-one law for logic with a fixed-point operator.
\newblock {\em Inf. Control.}, 67(1-3):70--90, 1985.

\bibitem{BlassRossmanQA}
Andreas Blass and Benjamin Rossman.
\newblock Explicit graphs with extension properties.
\newblock {\em Bull. {EATCS}}, 86:166--175, 2005.

\bibitem{ExtensionSurvey}
Anthony Bonato.
\newblock The search for n-e.c. graphs.
\newblock {\em Contributions to Discrete Mathematics}, 4(1), 2009.

\bibitem{BonehShoup}
Dan Boneh and Victor Shoup.
\newblock {\em A Graduate Course in Applied Cryptography}.
\newblock 2015.

\bibitem{DesignTheory}
Peter~J. Cameron and Dudley Stark.
\newblock A prolific construction of strongly regular graphs with the n-e.c.
  property.
\newblock {\em Electron. J. Comb.}, 9(1), 2002.

\bibitem{courcelle2012graph}
Bruno Courcelle and Joost Engelfriet.
\newblock {\em Graph structure and monadic second-order logic: a
  language-theoretic approach}, volume 138.
\newblock Cambridge University Press, 2012.

\bibitem{ParametricClasses}
Heinz-Dieter Ebbinghaus and J\"org Flum.
\newblock {\em Finite Model Theory}.
\newblock Springer Berlin, Heidelberg, 1995.

\bibitem{SimplicialComplexes}
Chaim Even-Zohar, Michael Farber, and Lewis Mead.
\newblock Ample simplicial complexes.
\newblock {\em European Journal of Mathematics}, 8(1):1--32, 2022.

\bibitem{FaginsTheorem}
Ronald Fagin.
\newblock Generalized first-order spectra and polynomial-time recognizable
  sets.
\newblock {\em Complexity of computation}, 7:43--73, 1974.

\bibitem{Fagin01Law}
Ronald Fagin.
\newblock Probabilities on finite models.
\newblock {\em J. Symb. Log.}, 41(1):50--58, 1976.

\bibitem{SurjectionFormula}
Ronald~L. Graham, Donald~E. Knuth, and Oren Patashnik.
\newblock {\em Concrete mathematics - a foundation for computer science}.
\newblock Addison-Wesley, 1989.

\bibitem{CapturingP}
Martin Grohe.
\newblock The quest for a logic capturing {PTIME}.
\newblock In {\em Proceedings of the Twenty-Third Annual {IEEE} Symposium on
  Logic in Computer Science, {LICS} 2008, 24-27 June 2008, Pittsburgh, PA,
  {USA}}, pages 267--271. {IEEE} Computer Society, 2008.

\bibitem{RandomBinaryCanonization}
Lauri Hella, Phokion~G. Kolaitis, and Kerkko Luosto.
\newblock Almost everywhere equivalence of logics in finite model theory.
\newblock {\em Bull. Symb. Log.}, 2(4):422--443, 1996.

\bibitem{OwfToPrg}
Johan H\r{A}stad, Russell Impagliazzo, Leonid~A. Levin, and Michael Luby.
\newblock A pseudorandom generator from any one-way function.
\newblock {\em SIAM Journal on Computing}, 28(4):1364--1396, 1999.

\bibitem{ImmermanVardi1}
Neil Immerman.
\newblock Relational queries computable in polynomial time.
\newblock {\em Inf. Control.}, 68(1-3):86--104, 1986.

\bibitem{DCBook}
Neil Immerman.
\newblock {\em Descriptive complexity}.
\newblock Graduate texts in computer science. Springer, 1999.

\bibitem{KarpSplittingCanonization}
Richard~M. Karp.
\newblock Probabilistic analysis of a canonical numbering algorithm for graphs.
\newblock In {\em Relations between combinatorics and other parts of
  mathematics ({P}roc. {S}ympos. {P}ure {M}ath., {O}hio {S}tate {U}niv.,
  {C}olumbus, {O}hio, 1978)}, Proc. Sympos. Pure Math., XXXIV, pages 365--378.
  Amer. Math. Soc., Providence, R.I., 1979.

\bibitem{FOPlusParity}
Phokion~G. Kolaitis and Swastik Kopparty.
\newblock Random graphs and the parity quantifier.
\newblock {\em J. ACM}, 60(5), oct 2013.

\bibitem{PseudorandomGraphs}
Michael Krivelevich and Benny Sudakov.
\newblock Pseudo-random graphs.
\newblock In {\em More sets, graphs and numbers}, pages 199--262. Springer,
  2006.

\bibitem{FMT}
Leonid Libkin.
\newblock {\em Elements of Finite Model Theory}.
\newblock Texts in Theoretical Computer Science. An {EATCS} Series. Springer,
  2004.

\bibitem{Libkin2004}
Leonid Libkin.
\newblock {\em Elements of Finite Model Theory}.
\newblock Springer, 2004.

\bibitem{naor1995splitters}
Moni Naor, Leonard~J Schulman, and Aravind Srinivasan.
\newblock Splitters and near-optimal derandomization.
\newblock In {\em Proceedings of IEEE 36th Annual Foundations of Computer
  Science}, pages 182--191. IEEE, 1995.

\bibitem{Rado1964}
R.~Rado.
\newblock Universal graphs and universal functions.
\newblock {\em Acta Arithmetica}, 9(4):331--340, 1964.

\bibitem{InapproximabilityUGInFPC}
Jamie Tucker{-}Foltz.
\newblock Inapproximability of unique games in fixed-point logic with counting.
\newblock In {\em 36th Annual {ACM/IEEE} Symposium on Logic in Computer
  Science, {LICS} 2021, Rome, Italy, June 29 - July 2, 2021}, pages 1--13.
  {IEEE}, 2021.

\bibitem{ImmermanVardi2}
Moshe~Y. Vardi.
\newblock The complexity of relational query languages (extended abstract).
\newblock In {\em Proceedings of the 14th Annual {ACM} Symposium on Theory of
  Computing, May 5-7, 1982, San Francisco, California, {USA}}, pages 137--146,
  1982.

\end{thebibliography}

\clearpage

\appendix

\section{Error Probabilities}\label{apxSimpleFacts}

For completeness, we review a simple fact that is useful for bounding error probabilities.
\begin{observation}\label{obs:ProbSingleEvent}
    Since \(1-x \le e^x\) for all \(x\), we can conclude for all \(a,b\) that
    \[
        \label{eq:inPrelims}
        \Bigl(1-\frac{1}{a}\Bigr)^{b} \le e^{-b/a}.
    \]
\end{observation}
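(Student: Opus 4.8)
The plan is to reduce the stated bound to the single elementary inequality $1 + y \le e^{y}$, valid for every real $y$, and then exponentiate. First I would establish $1 + y \le e^{y}$ for all $y \in \rr$. I expect to do this via convexity: the function $y \mapsto e^{y}$ is convex, so it lies above its tangent line at $y = 0$, which is precisely $1 + y$. Equivalently, setting $\psi(y) = e^{y} - 1 - y$, one has $\psi'(y) = e^{y} - 1$, which is negative for $y < 0$ and positive for $y > 0$; hence $\psi$ attains its global minimum $\psi(0) = 0$ at the origin, giving $\psi \ge 0$ everywhere. Substituting the instance $y = -1/a$ then yields the specialized form $1 - \tfrac{1}{a} \le e^{-1/a}$ that actually drives the bound.

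Next I would raise both sides to the power $b$. The key point here is monotonicity: for this step to preserve the inequality I need both sides to be nonnegative and the map $t \mapsto t^{b}$ to be nondecreasing on $[0, \infty)$. This holds precisely when $a \ge 1$ (so that $1 - 1/a \ge 0$) and $b \ge 0$. Under these hypotheses, from $0 \le 1 - \tfrac{1}{a} \le e^{-1/a}$ I obtain $\left(1 - \tfrac{1}{a}\right)^{b} \le \left(e^{-1/a}\right)^{b} = e^{-b/a}$, which is exactly the claimed inequality.

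The statement is essentially immediate, so there is no genuine obstacle; the only thing to watch is the direction of the base inequality. As printed, the observation quotes $1 - x \le e^{x}$, but what the conclusion really needs is $1 - x \le e^{-x}$, i.e.\ the instance $y = -x$ of $1 + y \le e^{y}$, so I would be careful to invoke the $e^{-x}$ form rather than the (weaker and unhelpful) $e^{x}$ form. I would also flag the implicit constraints $a \ge 1$ and $b \ge 0$ needed for the exponentiation step; these are harmless in every application in the paper, where $a = 2^{k}$ and $b$ is a positive vertex count (as in Lemmas \ref{lem:tournament} and \ref{lem:parity1}).
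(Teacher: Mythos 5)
Your proof is correct and takes essentially the same route as the paper, whose justification is the same one-liner: specialize the elementary bound $1+y \le e^{y}$ at $y = -1/a$ and raise both sides to the power $b$. You are also right on both of your flags: the paper's printed hypothesis $1-x \le e^{x}$ is a typo for $1-x \le e^{-x}$ (the printed form is false for $x<0$ and in any case yields $e^{+b/a}$, not $e^{-b/a}$), and the exponentiation step tacitly requires $1-\frac{1}{a} \ge 0$ and $b \ge 0$, which hold in every use in the paper (e.g.\ $a = 2^{k}$ and $b$ a count of vertices, as in Lemma \ref{lem:tournament} and Lemma \ref{lem:parity1}).
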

For \(a = 2^{k}\) and \(b = n\), we may bound, for example, the probability of failing to obtain \(k\) times heads in \(k\) coin flips, when being allowed to repeat the experiment \(n\) times, by
\[
    (1-2^{-k})^{n} \le e^{-n/2^k},
\]
which becomes increasingly small if \(n\) is sufficiently large compared to \(k\).
\section{Proof of Lemma \ref{lemRandomCanonization}\label{appRandomCanonizationProof}}

\lemRandomCanonization*
\begin{proof}
    Consider the following algorithm. Let $S$ be the set of elements $x$ of the universe over which $R(x, x, \dots, x)$ holds, and let $\overline{S}$ be its complement. By the Chernoff bound, with probability $1 - \negl(n)$, $S$ and $\overline{S}$ will each contain at least $\frac{n}{3}$ elements total. We define graphs $G$ on $S$ and $\overline{G}$ on $\overline{S}$ where there is an edge between two distinct vertices $x$ and $y$ in the respective vertex sets if and only if exactly one of $R(x, y, y, \dots, y)$ and $R(y, x, x, \dots, x)$ holds. 
    Note that these will both always be uniformly random graphs. 
    We then apply an algorithm of Karp \cite{KarpSplittingCanonization} to both $G$ and $\overline{G}$ to compute canonical total orders on the vertex sets $S$ and $\overline{S}$. 
    Since \(S\) is definable, we can string the orders together with all elements in $S$ occurring before all elements in $\overline{S}$. Hella, Kolaitis, and Luosto \cite{RandomBinaryCanonization} showed that Karp's algorithm can be implemented as an \LFPpar formula, and all the other steps are clearly doable in FO. This gives as an \LFPpar definition of $\leq$. We then define the relation $Q$ to hold on all tuples $(x, \seq{y}{k - 1})$ where $x \in S$ and each $y_i \in \overline{S}$.

    When run on a random graph, the probability that Karp's algorithm fails to find a canonical order is a negligible function of the number of vertices. Thus, assuming each vertex set is of size at least $\frac{n}{3}$, the probability that at least one of the two graphs is not successfully ordered is at most $\negl(\frac{n}{3}) + \negl(\frac{n}{3}) = \negl(n)$. Also, again assuming each vertex set is of size at least $\frac{n}{3}$, $Q$ will contain at least
    $$\left(\frac{n}{3}\right) \cdot \left(\frac{n}{3}\right)^{k - 1} = \frac{1}{3^k} \cdot n^k = \Omega(n^k)$$
    $k$-tuples. Finally, the independence condition follows from the fact that the canonization algorithm only considers the graphs \(G\) and \(\overline{G}\) and thus does not ever check whether any of the tuples in $Q$ are in $R$.
\end{proof}

\section{Relational Rado Structures}\label{apxPseudorandomStructures}

We extend the definition of extension axioms and our deterministic construction of finite Rado graphs to relational structures.

\begin{definition}\label{def:EAstruc}
    Fix a signature \(\sigma\). We say a \emph{\(k\)-atomic type}
    is a set of tuples \((R,(i_1,\dots,i_a))\), where \(R \in \sigma\) is a relation symbol of arity \(a\) and \(i_1,\dots,i_a \in \{0,\dots,k\}\) are indices with \(0 \in \{ i_1,\dots,i_a\}\).
    We denote by \(\types(\sigma,k)\) the set of all possible \(k\)-atomic types.
    If the relations in \(\sigma\) have arities \(a_1,\dots,a_t\), then \(|\types(\sigma,k)| \le 2^{\sum_{i=1}^t (k+1)^{a_i}}\).

    In a \(\sigma\)-structure \(\mathbb A\), we say a vertex \(v \in \mathbb A\) \emph{satisfies} a given \(k\)-atomic type on a tuple \((v_1,\dots,v_k) \in \mathbb A^k\) if
    for all tuples \((R,(i_1,\dots,i_a))\) as above,
    we have \((v_{i_1},\dots,v_{i_a}) \in R^{\mathbb A}\) if and only if \(R(i_1,\dots,i_a)\) is contained in the atomic type.

    We say \(\mathbb A\) satisfies the extension axioms \(\EA^\sigma_s\) if for all pairwise distinct \(v_1,\dots,v_k \in \mathbb A\)
    and all \(k\)-atomic types there exists \(v_0 \neq v_1,\dots, v_k\) satisfying the given type on \((v_1,\dots,v_k)\).
\end{definition}

The construction of finite Rado structures follows along the same lines as \Cref{thm:radograph}, but the additional number of possible atomic types makes
the construction more opaque.
Together with \Cref{lem:EA-probs} and \Cref{lem:EA-determine-qtype}, the following theorem implies (see \Cref{thm:RadoGenerator}) 
that LFP cannot distinguish between random \(\sigma\)-structures and the output of the following theorem.

\RadoStructure*
\begin{proof}
    We instead prove the claim:
    There exists a constant $c$ such that, for every $n,k \in \N$ with $2^{k^c} \le n$,
    one can construct in time $O(2^{2^{ck}} n^c)$ an $n$-element \(\sigma\)-structure satisfying the extension axioms \(\EA^\sigma_k\).
    This claim then implies the statement of the theorem.

    We again use \Cref{lem:tournament} and brute-force enumeration to find
    a tournament $F$ with vertex set \([2^{3k}]\)
    such that, for every $S \subseteq [2^{3k}]$ of size $k$, there is a vertex $j \in [2^{3k}]$
    that has a directed edge towards every vertex in $S$.

    We will construct a structure \(\mathbb A\) with universe \([n]\) satisfying the extension axioms \(\EA^\sigma_k\).
    Let us start by using \Cref{thm:hashfunctions} to construct an $(n,k)$-family of perfect hash functions $\mathcal F$.
    Next, we need to partition the universe \([n]\) into $2^{3k}$ many parts $P_1,\dots,P_{2^{3k}}$ of size at least \(|\mathcal F| \cdot k! \cdot |\types(\sigma,k)|\).
    Let us argue that one can choose the parameter \(c\) such that this is possible for all $n \ge 2^{k^c}$.
    Let $c'$ be the constant from \Cref{thm:hashfunctions}, such that we can bound the size of our $(n,k)$-family of perfect hash functions by $2^{c'k} \log(n)$.
    Let the relations of the arities in \(\sigma\) be \(a_1,\dots,a_t\). Then the number of \(k\)-atomic types is at most \(|\types(\sigma,k)| \le 2^{\sum_{i=1}^t (k+1)^{a_i}}\).
    We may choose \(c\) such that, for every $k \in \N$,
    \[
        2^{3k} \cdot k! \cdot 2^{c'k} \cdot |\types(\sigma,k)| \cdot 2 \le 2^{k^{c}}/k^c.
    \]
    Hence, for every $n \in \N$ with $2^{k^c} \le n$ holds \(2^{k^{c}}/k^c \le n/\log(n)\) and thus
    \[
        2^{3k} \cdot k! \cdot 2^{c'k} \log(n) \cdot |\types(\sigma,k)| \cdot 2 \le n.
    \]
    This means, we can partition the universe into parts \(P_1,\dots,P_{2^{3k}}\), each of size at least
    \[
        \lfloor 2^{c'k} k! \cdot \log(n) \cdot |\types(\sigma,k)| \cdot 2 \rfloor \ge k! \cdot |\mathcal F| \cdot |\types(\sigma,k)|.
    \]

    For convenience, we define an extended set of functions \(\mathcal F^*\) which contains
    for every function \(f \in \mathcal F\) and every permutation \(\pi : [k] \to [k]\) the function \(\pi \circ f\).
    Note that \(|\mathcal F^*| \le |\mathcal F|\cdot k!\) and additionally,
    for all distinct elements \(v_1,\dots,v_k\), there exists \(f \in \mathcal F^*\) with \(f(v_l)=l\) for all \(l \in [k]\).
    We made the parts \(P_1,\dots,P_{2^{3k}}\), large enough so that we can choose a function \(\pattern : [n] \to \mathcal F^* \times \types(\sigma,k)\)
    such that, for each \(j \in [2^{3k}]\), the function \(\pattern\), when restricted to \(P_j\), is surjective.
    We further denote by \(\idx(v)\) the index \(j\) such that \(v \in P_j\).
    To construct our structure \(\mathbb A\),
    we now do the following for every \(R \in \sigma\) or arity \(a\) and sequence \(u_1,\dots,u_a \in \mathbb A\) of (not necessarily distinct) elements:
    \begin{quote}
        Let \(v_0 \in \{u_1,\dots,u_a\}\) be the unique element such that in our tournament \(F\) there are directed edges from \(\idx(v_0)\) to all vertices in \(\{\idx(u_1),\dots,\idx(u_a)\} \setminus \{\idx(v_0)\}\), and let \(\pattern(v_0) = (f,\tau)\).
        Add the tuple \((u_1,\dots,u_a)\) to \(R^{\mathbb A}\) if and only if
        there exists \((R,(i_1,\dots,i_a)) \in \tau\) with
        \[
            \begin{cases}
                u_l = v_0    & \textnormal{for all \(l\) with \(i_l = 0\)},    \\
                f(u_l) = i_l & \textnormal{for all \(l\) with \(i_l \neq 0\)}.
            \end{cases}
        \]
    \end{quote}

    \paragraph*{Correctness}
    Let \(v_1,\dots,v_k \in \mathbb A\) be pairwise distinct and let \(\tau\) be a \(k\)-atomic type.
    We constructed our tournament \(F\) using \Cref{lem:tournament} such that we can choose \(j \in [2^{3k}]\) with a directed edge to all vertices in \(\idx(v_1),\dots,\idx(v_k)\).
    The set \(\mathcal F^*\) was constructed such that there is \(f \in \mathcal F^*\) with \(f(v_l) = l\) for all \(l\).
    Since \(\pattern\), restricted to \(P_j\), was chosen to be surjective, we can also pick \(v_0 \in P_j\) with \(\pattern(v_0) = (f,\tau)\).
    Since \(F\) has no self-loops, \(v_0 \neq v_1,\dots,v_k\).

    Consider now a tuple \((R,(i_1,\dots,i_a))\), where \(R \in \sigma\) is a relation symbol of arity \(a\) and \(i_1,\dots,i_a \in \{0,\dots,k\}\) are indices with \(0 \in \{ i_1,\dots,i_a\}\).
    To prove that \(\mathbb A\) satisfies the extension axioms \(\EA^\sigma_k\),
    we have to show that \((v_{i_1},\dots,v_{i_a}) \in R^{\mathbb A}\) if and only if \(R(i_1,\dots,i_a) \in \tau\).
    Note that \(v_0 \in \{v_{i_1},\dots,v_{i_a}\}\) is the unique vertex such that in \(F\) there is a directed edge from \(\idx(v_0)\) to all vertices in \(\{\idx(u_1),\dots,\idx(u_a)\} \setminus \{\idx(v_0)\}\). Also, \(\pattern(v_0) = (f,\tau)\).
    Additionally, note that \(f\) was chosen such that
    \[
        \begin{cases}
            v_{i_l} = v_0    & \textnormal{for all \(l\) with \(i_l = 0\)},    \\
            f(v_{i_l}) = i_l & \textnormal{for all \(l\) with \(i_l \neq 0\)}.
        \end{cases}
    \]
    Hence, by our construction, \((v_{i_1},\dots,v_{i_a}) \in R^{\mathbb A}\) if and only if \((R,(i_1,\dots,i_a)) \in \tau\), satisfying the extension axiom.

    \paragraph*{Run time}
    The construction of $F$ takes time at most $2^{2^{3k \cdot 2}} \cdot k^2$.
    By \Cref{thm:hashfunctions}, we can construct the $(n,k)$-universal family in time $2^{c'k} n \log(n)$.
    The remaining part of the construction takes time $O(|\types(\sigma,k)|\cdot n^a)$, where \(a\) is the largest arity among relations in~\(\sigma\).
    By possibly rescaling \(c\) such that \(c \ge a\), we get a run time of \(O(2^{2^{ck}} n^c)\).
\end{proof}

\section{Proof of Lemma \ref{lem:EA-probs}}\label{apxEAprob}

\EAProbs*
\begin{proof}
    We prove the statement for relational structures only, as the proof for graphs and hypergraphs is analogous and simpler.
    Fix a signature \(\sigma = \langle R_1,\dots,R_t\rangle\) with relations of arity \(a_1,\dots,a_t\). 
    Then atomic \(k\)-types over \(\sigma\) specify the presence of at most \(\sum_{i=1}^t (k+1)^{a_i}\) connections.
    Fix a tuple \(v_1,\dots,v_k \in \mathbb A^k\) and an atomic \(k\)-type.
    A vertex \(v \not \in S\) satisfies each of the \(\sum_{i=1}^t (k+1)^{a_i}\) connections specified by the \(k\)-type with probability \(1/2\), each.
    There are \(n-k\) choices for \(v\).
    Using \Cref{obs:ProbSingleEvent}, the probability that no vertex satisfies the requirements is
    \[
        \left( 1 - 2^{-{\sum_{i=1}^t (k+1)^{a_i}}} \right)^{n-k} \le e^{(n-k)/{\sum_{i=1}^t (k+1)^{a_i}}}.
    \]
    There are \(n^k\) choices for \(v_1,\dots,v_k\) and at most \(2^{\sum_{i=1}^t (k+1)^{a_i}}\) atomic \(k\)-types. 
    The union bound thus bounds the failure probability over all these options by at most
    \[
        n^k \cdot 2^{\sum_{i=1}^t (k+1)^{a_i}} \cdot 
        e^{(n-k)/{\sum_{i=1}^t (k+1)^{a_i}}}.
    \]
    Since \(\sigma\) and \(k\) are fixed, this function is negligible in \(n\).
\end{proof}

\section{Proof of Lemma \ref{lem:EA-determine-qtype}}\label{appEA-determine-qtype}

\EAType*
\begin{proof}
    We show that two structures satisfy the same LFP sentences up to quantifier rank \(k\) by showing that Duplicator wins
    the infinitary pebble game with \(k\) pebbles~\cite[Definition 11.4]{Libkin2004} on \((\mathbb A_1,\mathbb X)\) and \((\mathbb A_2,\mathbb X)\).
    Assuming we have pebbles \(p_1,\dots,p_k\) to be placed on structure \((\mathbb A_1,\mathbb X)\)
    and pebbles \(q_1,\dots,q_k\) to be placed on \((\mathbb A_2,\mathbb X)\), we need to present a winning strategy
    for Duplicator that always maintains that the two (ordered) \(k\)-tuples of
    pebbles \(p_1,\dots,p_k\) and \(q_1,\dots,q_k\) describe a partial
    isomorphism
    between \((\mathbb A_1,\mathbb X)\) and \((\mathbb A_2,\mathbb X)\).
    If, Spoiler places a pebble \(p_i\) on an \(\mathbb X\)-element of \((\mathbb A_1,\mathbb X)\) 
    then Duplicator places \(q_i\) on the exact same element of \((\mathbb A_2,\mathbb X)\).
    If Spoiler places \(p_i\) on an \(\mathbb A_1\)-element of \((\mathbb A_1,\mathbb X)\),
    then Duplicator observes (in the case of relational structures) the \(\le
    k\)-atomic type that \(p_i\) satisfies in \(\mathbb A_1\) on the subset of
    pebbles of \(p_1,\dots,p_k\) that lie on \(\mathbb A_1\), and uses the
    \(k\)-extension axioms to place \(q_i\) onto an element of \(\mathbb A_2\)
    satisfying the same \(\le k\)-atomic type on the placed pebbles in
    \(\mathbb A_2\). The partial isomorphism of the structures follows from the $k$-atomic types being the same. Thus, Duplicator can follow this strategy ad infinitum and will not lose. The cases of graphs and hypergraphs are completely analogous.
\end{proof}

\section{Proof of Theorem \ref{thmPRGImpossibility}}\label{appPRGImpossibilityProof}

We will need the following lemma, which extends the zero-one law for LFP to formulas with free variables.

\begin{lemma}[Zero-One Law with free variables]\label{lem01LawFormula}
    For every \textnormal{LFP} $\sigma$-formula $\phi(\bx)$ there exists a quantifier free \textnormal{FO} $\sigma$-formula $\overline{\phi}(\bx)$
    such that, as $n$ tends to infinity,
    $$
    \Pr_{\mathbb{A} \sim \struc{\sigma, n}} \Bigl[ \mathbb{A} \models \phi(\bu) \text{ iff } \mathbb{A} \models \overline{\phi}(\bu)
    \text{ for every $|\bx|$-tuple $\bu$ over $[n]$} \Bigr] \ge 1 - \negl(n).
    $$
\end{lemma}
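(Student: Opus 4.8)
The plan is to show that, once the ambient structure satisfies sufficiently strong extension axioms, the truth value of $\phi(\bu)$ depends only on the \emph{atomic type} of the tuple $\bu$ — that is, on its equality pattern together with which atomic relations hold among its entries. Since there are only finitely many atomic types on $|\bx|$ variables, and each is described by a quantifier-free FO formula, I can then take $\overline{\phi}$ to be the disjunction of (the formulas describing) exactly those types on which $\phi$ holds in the limit.

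Concretely, let $m = |\bx|$ and let $q$ be a quantifier-rank/pebble-count parameter for $\phi$ in the sense of \Cref{lem:EA-determine-qtype}, and set $k' := m + q$. The key step is the following generalization of \Cref{lem:EA-determine-qtype}: if two $\sigma$-structures $\mathbb{A}_1, \mathbb{A}_2$ both satisfy $\EA^\sigma_{k'}$, and $\bu \in \mathbb{A}_1^m$, $\bv \in \mathbb{A}_2^m$ realize the same atomic type, then $\mathbb{A}_1 \models \phi(\bu) \iff \mathbb{A}_2 \models \phi(\bv)$. The proof reuses the infinitary pebble-game argument of \Cref{lem:EA-determine-qtype} essentially verbatim, with one change: Duplicator begins the game with $m$ pebbles already placed on $\bu$ and $\bv$. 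Since these tuples have the same atomic type, the initial placement is a partial isomorphism, and Duplicator extends it on all remaining moves using $\EA^\sigma_{k'}$ exactly as before. The fixed-point operators of LFP require no special treatment, as they are already absorbed into the infinitary game. Taking $\mathbb{A}_1 = \mathbb{A}_2$ gives the within-structure statement: all tuples of a fixed atomic type in an $\EA^\sigma_{k'}$-structure agree on $\phi$.

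With the claim in hand, the plan is to assign to each consistent $m$-atomic type $\rho$ a bit $b_\rho$. Every consistent atomic type is realized by some tuple in any $\EA^\sigma_{k'}$-structure (build the distinct entries one at a time using the extension axioms, then duplicate entries according to the equality pattern), and such a structure exists for large $n$ by \Cref{thm:radostructure}. Fixing one reference structure $\mathbb{B} \models \EA^\sigma_{k'}$ and a tuple $\bv_\rho$ of type $\rho$ in it, I set $b_\rho := 1$ iff $\mathbb{B} \models \phi(\bv_\rho)$; by the claim this value is independent of the choice of $\mathbb{B}$ and $\bv_\rho$. Writing $\rho(\bx)$ for the quantifier-free FO formula describing type $\rho$ (a conjunction of equalities, inequalities, atoms, and negated atoms), I define
$$
\overline{\phi}(\bx) := \bigvee_{\rho \,:\, b_\rho = 1} \rho(\bx),
$$
which is quantifier-free FO.

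Finally, \Cref{lem:EA-probs} gives $\Pr_{\mathbb{A} \sim \struc{\sigma,n}}[\mathbb{A} \not\models \EA^\sigma_{k'}] = \negl(n)$. On the complementary event, the claim (applied within $\mathbb{A}$ and against the reference structure $\mathbb{B}$) yields, for every tuple $\bu$ of type $\rho$, that $\mathbb{A} \models \phi(\bu) \iff b_\rho = 1 \iff \mathbb{A} \models \overline{\phi}(\bu)$; since every $\bu$ has a unique atomic type, this holds for all $\bu$ simultaneously, establishing the lemma. I expect the main obstacle to be the second step: carefully setting up the free-variable pebble game and verifying cross-structure consistency of $b_\rho$, including realizability of every atomic type and the correct extension-axiom parameter $k'$. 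An alternative route — inductive quantifier elimination as illustrated in \Cref{subPRGImpossibility} — is awkward for the fixed-point operators of LFP, which is why I prefer the game-based argument.
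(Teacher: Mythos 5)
Your proof is correct, but it takes a genuinely different route from the paper's. The paper argues syntactically: it first rewrites $\phi$ so that quantified variables are forced distinct from the free variables $\bx$, then replaces every relational atom mixing free and bound variables by a fresh relation symbol over an extended signature (using the fact that all relations are independent fair coin flips, so the distribution is unchanged), and finally splits the result into a disjunction $\bigvee_i \xi_i(\bx) \wedge \psi_i$ where each $\xi_i$ is quantifier-free and each $\psi_i$ is a \emph{sentence} about the structure with the elements $\bx$ removed; the sentence-level LFP zero-one law then replaces each $\psi_i$ by TRUE or FALSE, and a union bound over the polynomially many tuples $\bu$ finishes. You instead argue semantically: conditioned on the single global event $\mathbb{A} \models \EA^\sigma_{k'}$ with $k' = m + q$ (probability $1 - \negl(n)$ by \Cref{lem:EA-probs}), the truth of $\phi(\bu)$ depends only on the atomic type of $\bu$, via the pebble game of \Cref{lem:EA-determine-qtype} started with $m$ pebbles pre-placed, and $\overline{\phi}$ is the disjunction of the finitely many good types. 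Each approach buys something: yours sidesteps the DNF-rewriting and signature-extension steps, which for LFP are the most delicate part of the paper's argument, and it needs no union bound over tuples since one event covers all of them simultaneously; the paper's version invokes the zero-one law only as a black box on sentences, whereas you rely on a free-variable variant of the pebble game with an initial position whose correctness you assert but would still need to spell out at the same level of detail as \Cref{lem:EA-determine-qtype} (including that Spoiler may later move the initially placed pebbles, which your uniform extension-axiom strategy does handle). Two minor points: invoking \Cref{thm:radostructure} for the reference structure is unnecessary, since \Cref{lem:EA-probs} already yields $\EA^\sigma_{k'}$-structures for all large $n$; and your type-realizability argument implicitly uses the paper's convention that the $k'$-extension axioms extend tuples of every size at most $k'$ (the ``$\le k$-atomic type'' reading used in the proof of \Cref{lem:EA-determine-qtype}), which is worth making explicit.
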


\begin{proof}
    Let $\phi$ be an LFP formula with free variables
    $\seq{x}{\ell}$, of the form $\phi = \exists x_{\ell + 1}\ \phi^*$ where
    $x_{\ell + 1}$ is a new variable and $\phi^*$ is the remainder of the
    formula. 
    Let $\phi^*[x_i / x_{\ell + 1}]$ denote the formula obtained from $\phi^*$
    by replacing all occurrences of the variable $x_{\ell + 1}$ with $x_i$.
    Then we may rewrite $\phi$ equivalently as
	\begin{equation}\label{equBooleanCheckExists}
		\phi = \left(\bigvee_{i \in [\ell]}\phi^*[x_i / x_{\ell + 1}]\right) \vee \exists x_{\ell + 1}\ \left(\left(\bigwedge_{i \in [\ell]} x_{\ell + 1} \neq x_i\right) \wedge \phi^*\right).
	\end{equation}
    Atoms of the form $x_i = x_{l+1}$ and $x_i \neq x_{l+1}$ in $\phi^*$ can be replaced with FALSE and TRUE, respectively.
    Repeating this for for every quantifier, we rewrite $\phi$ such that every quantifier ranges only over variables distinct from $\seq{x}{\ell}$
    and there are no (other) comparisons between free and quantified variables.

    Next, we separate free and quantified variables occuring toegether in relations.
    Let $\sigma'$ be the extended signature, where we add for every $a$-ary relation $R_p$, every $a' \le a$ and every $a'$-tuple $\by$ over the variables from $\bx$
    a new $(a-a')$-ary relation symbol $R^{\by}_i$.
    We obtain a new formula $\phi'$ from $\phi$ by replacing every atom
    $R_i(\by \bz)$, where the variables of the atom are partitioned such that
    $\by$ and $\bz$ contain free and bound variables of $\phi$, respectively,
    with $R^{\by}_i(\bz)$. 
    Since both in $\struc{\sigma, n}$ and $\struc{\sigma', n}$, all relations are distributed independently with probability one half,
    for every tuple~$\bu$
    $$\Pr_{\mathbb{A} \sim \struc{\sigma, n}}[\mathbb{A} \models \phi(\bu)] ~~=~~ \Pr_{\mathbb{A'} \sim \struc{\sigma', n}}[\mathbb{A'} \models \phi'(\bu)].
    $$
    Note that in $\phi'$ variables from $\bx$ occur in no relational atom.
    We next rewrite and partition $\phi'$ in disjunctive normal form
    $$
    \phi(\bx) = \bigvee_{i=1}^s \xi_i(\bx) \land \psi_i(\bx),
    $$
    such that each formula $\xi_i(\bx)$ is quantifier free and each formula $\psi_i(\bx)$
    quantifies only over variables distinct from $\bx$ and contains no relations depending on variables from $\bx$.
    We may therefore see $\psi_i$ as sentences within a $\sigma'$ stucture where the elements $\bx$ have been removed.
    In other words, for an $|\bx|$-tuple $\bu$
    $$
    \Pr_{\mathbb{A} \sim \struc{\sigma, n}}[\mathbb{A} \models \phi(\bu)] ~~=~~ 
    \Pr_{\mathbb{A'} \sim \struc{\sigma', n}}
    ~
    \Pr_{\mathbb{A''} \sim \struc{\sigma'', n-|\bx|}}\left[\bigvee_{i=1}^s \mathbb{A'} \models \xi_i(\bu) \land \mathbb{A''} \models \psi_i\right].
    $$
    By treating each $\psi_i$ as a sentence, the LFP zero-one law of Blass, Gurevich, and Kozen \cite{LFP01Law}
    (or the combination of \Cref{lem:EA-probs} and \Cref{lem:EA-determine-qtype})
    implies that $\psi_i$ is either satisfied with probability
    $\negl(n)$ or $1 - \negl(n)$. Let $\overline{\psi}_i$ be either TRUE or FALSE, accordingly.
    Then for a fixed $\bu$ and $i$, $\overline{\psi}_i$ and $\psi_i(\bu)$ differ with probability $\negl(n)$.
    We have to bound the probability that for \emph{some} tuple $\bu$ and \emph{some} $i$,
    $\overline{\psi}_i$ and $\psi_i(\bu)$ differ.
    There are only a polynomial number of tuples $\bu$ and only a constant number of $i$,
    and thus by the union bound and the fact that the sum of a polynomial number of negligible
    functions is still negligible, the statement holds.
\end{proof}

We are now ready to prove our impossibility result for LFP.

\thmPRGImpossibility*
\begin{proof}~\\
    \ref{itmPRGImpossibilityStatistical}$\implies$\ref{itmPRGImpossibilityNormal}: Obvious.
    
    \ref{itmPRGImpossibilityFormula}$\implies$\ref{itmPRGImpossibilityStatistical}: For positive integers $a$ and $k$, let $\eqrel{a,k}$ denote the set of equivalence relations on $[a]$ with exactly $k$ equivalence classes, and let
        \begin{align*}
            \mathcal{S}_k := \{ (i,\eqrel{[a_i], k}) \mid i \in [t]\}, && \mathcal{S}'_k := \{ (i',\eqrel{[a'_{i'}], k}) \mid i' \in [t'] \}.
        \end{align*}
        
    Note that, for any $a$ and $k$, $T(a, k) = \abs{\eqrel{a, k}} \cdot k!$,
    since choosing a surjective function from a set $S$ to a set of size $k$
    can be thought of as first choosing an equivalence relation determining
    which elements of $S$ will get mapped to the same element, and then
    choosing a bijection from the equivalence classes (of which there must be
    exactly $k$) to the target set of size $k$. Therefore, assuming
    $\sigma \geqS \tau$, we know that, for each positive integer $k$,
    $$\abs{\mathcal{S}_k} = \sum_{i = 1}^t \frac{T(a_i, k)}{k!} \geq \sum_{i = 1}^{t'} \frac{T(a'_i, k)}{k!} = \abs{\mathcal{S}'_k}.$$
    Hence, there exists injections $f_k \colon \mathcal{S}'_k \to \mathcal{S}_k$.

    We describe a transduction $\Theta = (\seq{\theta}{t'}): \struc{\sigma} \to \struc{\tau}$ via the following algorithm. Suppose we wish to determine the truth of $\theta_{i'}(\seq{x}{a'_{i'}})$ for some index $1 \leq i' \leq t'$. We first check equality between the $x$ variables to yield an equivalence relation $\sim'$ on the index set $[a'_{i'}]$. It will have some number $k$ of equivalence classes. Then let $(i, \sim) := f_k(i', \sim')$. Finally, return the truth value of $R_i(\seq{y}{a_i})$, where $y_j$ refers to the variable $x_{j'}$ such that the equivalence relation determined by testing equality among $(\seq{y}{j})$ is $\sim$ and $y_j$ belongs to the $\ell\tth$ equivalence class of $\sim$ where $\ell$ is such that $x_{j'}$ belongs to the $\ell\tth$ equivalence class of $\sim'$. Note that all the ``computation'' here is determining how to route variables to a relation in $\sigma$ solely based on equality comparisons, and can be accomplished purely syntactically by a large (but finite) quantifier-free FO sentence.

    The point of this construction is that, for every $i$ and every $\seq{x}{a_{i'}}$, the truth of $\theta_i(\seq{x}{a_i})$ depends on an independent evaluation of a relation from the input $\sigma$-structure $\mathbb{A}$. This can be verified by checking that no two tuples $(\seq{x}{a_{i'}})$ and $(\seq{x}{a_{j'}})$ are mapped to the same evaluation of relations in $\mathbb{A}$. To see this, observe that, since all variables are used, if $\{\seq{x}{a_{i'}}\} \neq \{\seq{x}{a_{j'}}\}$ then the tuples are mapped to different evaluations in $\mathbb{A}$. Otherwise, if they both contain the same set of $k$ distinct elements but have different equivalence relation types, they are mapped to evaluations of either different relations in $\sigma$ or different equivalence relation types, as $f_k$ is an injection. And if they have the same equivalence relation types, then they are mapped to the same evaluation if and only if the first occurrences are in the same order, i.e., the tuples are literally identical. Thus, by independence $\Theta$ produces uniformly random $\tau$ structures from uniformly random $\sigma$-structures.
    
    $\neg$\ref{itmPRGImpossibilityFormula}$\implies$$\neg$\ref{itmPRGImpossibilityNormal}: Let $\Theta = (\seq{\theta}{t'})$ be given, where each $\theta_i$ is a
    $\sigma$-formula with $a'_i$ free variables.
    By applying \Cref{lem01LawFormula} to each formula of $\Theta$, we obtain a quantifier free first order transduction 
    $\overline{\Theta} := (\seq{\overline{\theta}}{t'})$ such that,
    for $\mathbb{A} \sim \struc{\sigma, n}$,
    the structure $\Theta(\mathbb{A})$ is equivalent to $\overline{\Theta}(\mathbb{A})$ with probability $1 - \negl(n)$.
    
    Since $\overline{\Theta}$ is quantifier-free, it satisfies the hypothesis of Lemma \ref{lemKType} for any $k$. Hence, as we are assuming $\sigma \not\geqS \tau$, Lemma \ref{lemKType} tells us that$\overline{\Theta}$ is not a pseudorandom generator, meaning that there is some sentence $\phi$ that can distinguish its outputs from uniformly random $\tau$ structures with non-negligible advantage. It follows that $\Theta$ cannot be a pseudorandom generator either, as the triangle inequality and the fact that a non-negligible function plus a negligible function is a non-negligible function imply $\phi$ breaks the security of $\Theta$ as well.
\end{proof}

\section{Proof of Lemma \ref{lemKType}}\label{appKTypeProof}

\lemKType*
\begin{proof}
    Let $k$ be a positive integer such that
    \begin{equation}\label{equKViolatesSurjectiveOrder}
        \sum_{i = 1}^{t} T(a_i, k) < \sum_{i = 1}^{t'} T(a'_i, k).
    \end{equation}
    Let $\mathcal{T}_c$ be the set of $(c, k)$-types. For any $c$ and any $T \in \mathcal{T}_c$, let $\phi_T(\seq{y}{c})$ be the FO formula checking that $(\seq{y}{c})$ are all distinct and have $(c, k)$-type $T$. We claim that there exists a $c$ such that, for all sufficiently large $n$, the sentence FO sentence
    $$\phi_c := \bigwedge_{T \in \mathcal{T}_c} \exists \seq{y}{c} \phi_T(\seq{y}{c})$$
    discriminates between $\mathbb{B} \sim \struc{\tau}$ and $f(\mathbb{A})$ for $\mathbb{A} \sim \struc{\sigma, n}$.

    Clearly, when $\mathbb{B} \sim \struc{\tau}$, the probability that $\mathbb{B} \models \phi_c$ is $1 - \negl(n)$; this is a basic consequence of \Cref{obs:ProbSingleEvent}. On the other hand, we will show that, for large enough $c$, there is no $\mathbb{A} \in \struc{\sigma, n}$ such that $f(\mathbb{A}) \models \phi_c$.
    
    We proceed by counting the numbers $N_\sigma$ and $N_\tau$ of possible $(c, k)$-types that a given $c$-tuple could have in any $\sigma$-structure or $\tau$-structure, respectively. Letting $\mathcal{P}(R)$ denote the power set of $R$, starting from the expression in Definition \ref{defCKType}, we have
    \begin{align*}
        N_\sigma &= \abs{\mathcal{P}\left(\bigcup_{i = 1}^t \bigcup_{\substack{S \subseteq [c]\\\abs{S} \leq k}} \{(i, S, g): \suchthat g: [a_i] \to S \txt{ is surjective}\}\right)}\\
        &= \abs{\mathcal{P}\left(\bigcup_{k' = 1}^k \bigcup_{i = 1}^t \bigcup_{\substack{S \subseteq [c]\\\abs{S} = k'}} \{(i, S, g): \suchthat g: [a_i] \to S \txt{ is surjective}\}\right)}\\
        &= 2^{\sum_{k'=1}^k \sum_{i=1}^t T(a_i,k') {c \choose k'}}.
    \end{align*}
    By the assumption on $f$, for any fixed structure $\mathbb{A}$, any given $c$-tuple can only have one of $N_\sigma$ possible $(c, k)$-types in $f(\mathbb{A})$. However, there are
    $$N_\tau = 2^{\sum_{k'=1}^k \sum_{i=1}^t T(a'_i,k') {c \choose k'}}$$
    $(c, k)$-types total. For large enough $c$, the ${c \choose k'}$-terms dominate in the sums over $k'$; for $N_\sigma$ the coefficient of this term is
    $$\sum_{i = 1}^t T(a_i, k),$$
    and for $N_\tau$ the coefficient is
    $$\sum_{i = 1}^t T(a'_i, k).$$
    It thus follows from Equation (\ref{equKViolatesSurjectiveOrder}) that $N_\sigma < N_\tau$. Hence, no matter what $\mathbb{A}$ is, there must be some $k$-type that isn't held by any $c$-tuple in $f(\mathbb{A})$, so $f(\mathbb{A}) \not\models \phi_c$.
\end{proof}

\end{document}